\newcommand{\Rr}{{\mathbb{R}}}
\newcommand{\Aa}{{\mathcal{A}}}
\def\leq{\leqslant}
\newtheorem{theorem}{Theorem}
\newtheorem{proposition}{Proposition}
\newtheorem{proposition*}{Proposition}
\newtheorem{remark}{Remark}
\begin{document}
	
\begin{frontmatter}
	
\title{Closed-form approximations in multi-asset market making}

\author[UParis1]{Philippe Bergault}
\ead{philippe.bergault@etu.univ-paris1.fr}
\address[UParis1]{Universit\'e  Paris 1 Panthéon-Sorbonne, Centre d'Economie de la Sorbonne, 106 Boulevard de l'H\^opital, 75642 Paris Cedex~13, France.}

\author[FGV]{David Evangelista}
\ead{david.evangelista@fgv.br}

\address[FGV]{Escola de Matem\'atica Aplicada, Funda\c c\~ao Get\'ulio Vargas, Rio de Janeiro, RJ, Brasil.}

\author[UParis1]{Olivier Gu\'eant}
\ead{olivier.gueant@univ-paris1.fr}

\author[ICL]{Douglas Vieira}
\ead{d.machado-vieira15@imperial.ac.uk}
\address[ICL]{Imperial College London, Department of Mathematics,
London SW7 2AZ, UK.}

\date{}
\journal{ }
	
	
\begin{abstract}
A large proportion of market making models derive from the seminal model of Avellaneda and Stoikov. The numerical approximation of the value function and the optimal quotes in these models remains a challenge when the number of assets is large. In this article, we propose closed-form approximations for the value functions of many multi-asset extensions of the Avellaneda-Stoikov model. These approximations or proxies can be used (i)~as heuristic evaluation functions, (ii)~as initial value functions in reinforcement learning algorithms, and/or (iii)~directly to design quoting strategies through a greedy approach. Regarding the latter, our results lead to new and easily interpretable closed-form approximations for the optimal quotes, both in the finite-horizon case and in the asymptotic (ergodic) regime.
\end{abstract}

\begin{keyword}
Algorithmic trading, Market making, Stochastic optimal control, Closed-form approximations, Monte-Carlo methods.
\MSC[2010] 91G99, 93E20, 91G60.
\end{keyword}

\end{frontmatter}	
\setlength\parindent{0pt}
\section{Introduction}

Since the publication of the paper \cite{avellaneda2008high} by Avellaneda and Stoikov, who revisited the paper \cite{ho1981optimal} by Ho and Stoll (see also \cite{ho1983dynamics}), there has been an extensive literature on optimal market making.\footnote{There is an economic literature on market making, for instance the seminal paper \cite{grossman1988liquidity} by Grossman and Miller. The results in this literature are, however, more interesting for understanding the price formation process than for building market making algorithms.} Guéant, Lehalle, and Fernandez-Tapia provided in \cite{gueant2013dealing} a rigorous analysis of the stochastic optimal control problem introduced by Avellaneda and Stoikov and proved that, under inventory constraints, the problem reduces to a system of linear ordinary differential equations in the case of exponential intensity functions suggested by Avellaneda and Stoikov. They also studied the asymptotics when the time horizon $T$ tends to $+\infty$, proposed closed-form approximations, and introduced extensions to include a drift in the price dynamics and market impact~/~adverse selection. Cartea and Jaimungal, along with their various coauthors, contributed substantially to the literature and added many features to the initial models:  alpha signals, ambiguity aversion, etc. (see~\cite{cartea2017algorithmic,cartea2014buy, cartea2018algorithmic} -- see also their book \cite{cartea2015algorithmic}). They also considered a different objective function: the expected PnL minus a running penalty to avoid holding a large inventory instead of the Von Neumann-Morgenstern expected CARA (constant absolute risk aversion) utility of \cite{avellaneda2008high} and \cite{gueant2013dealing}. Many features have also been added by various authors: general dynamics for the price in \cite{fodra2012high}, general intensities and partial information \cite{campi2020optimal}, persistence of the order flow in \cite{jusselin2020optimal}, several requested sizes in \cite{bergault2019size}, client tiering and access to a liquidity pool in \cite{barzykin2020algorithmic}, etc.\\

In spite of the focus of initial papers on stock markets,\footnote{There was also from the very beginning a focus on options markets -- see for instance \cite{stoikov2009option} (cf. \cite{baldacci2019algorithmic} and \cite{el2015stochastic} for more recent papers).} the models derived from that of Avellaneda and Stoikov have been more useful to build market making algorithms in quote-driven markets: corporate bond markets based on requests for quotes, FX markets based on requests for quotes and requests for stream, etc. For stock markets or, more generally, order-driven markets with relatively low bid-ask spread to tick size ratio, many models have been proposed that depart from the original framework of Avellaneda and Stoikov in that the limit order book is modeled. Instances of papers proposing this type of models include those of Guilbaud and Pham \cite{guilbaud2013optimal, guilbaud2015optimal}, that of K\"uhn and Muhle-Karbe \cite{kuhn2013optimal}, that of Fodra and Pham \cite{fodra2015high} or the more recent papers by Lu and Abergel \cite{lu2018order} and Baradel, Bouchard, Evangelista, and Mounjid \cite{baradel2018optimal}.\\

Most of the literature on optimal market making deals with single-asset models. However, because market making algorithms are typically built for entire portfolios, single-asset models are not sufficient to build operable algorithms, except under the unrealistic assumption that asset prices are uncorrelated. Multi-asset extensions of the Avellaneda-Stoikov model have been proposed. A paper by Gu\'eant and Lehalle \cite{gueant2015general} touches upon this extension and a complete analysis for the various objective functions present in the literature can be found in \cite{gueant2017optimal} (see also the book \cite{gueant2016financial}) or in \cite{bergault2019size} in which multiple trade sizes are also considered.\\

Although their mathematical characterization has been known for years, computing the value function and the optimal quotes is complicated in the multi-asset case whenever the prices of the assets are correlated. The grid methods that are classically used to tackle the single-asset case suffer indeed from the curse of dimensionality and do not scale up to many practical multi-asset cases. Bergault and Guéant proposed in~\cite{bergault2019size} a factor method to reduce the dimensionality of the problem. Guéant and Manziuk proposed in \cite{gueant2019deep} a numerical method based on reinforcement learning techniques (an actor-critic approach in fact). In spite of these recent advances, the computational cost of most numerical schemes will still be prohibitive for practical use for some asset classes.\\

Instead of computing a numerical approximation of the value function (from which one traditionally deduces a numerical approximation of the optimal quotes), we propose in this paper a method for building a closed-form proxy for the value function. The idea behind the approach is that the value function associated with many market making problems is the solution of a Hamilton-Jacobi equation that can be ``approximated'' by another Hamilton-Jacobi equation for which the solution can be computed in closed form. Of course,  such closed-form formula does not define a solution to the initial Hamilton-Jacobi equation, but it has similar properties and should capture most of the relevant financial effects.\\

Having a proxy of a value function is known to be useful in the community of reinforcement learning (see \cite{sutton} and \cite{cs} for a reference to the reinforcement learning terminology). An important use of a closed-form proxy of a value function is as a heuristic evaluation function. Heuristic evaluation functions are mainly used in game-playing computer programs to evaluate the probability to win the game given the current state -- usually the current board in board games -- but they can be used as terminal values in many Monte-Carlo-based reinforcement learning techniques. Also, such a proxy can be used as a starting point for many iterative algorithms based on value functions: value iteration algorithm, actor-critic approaches, etc. The last application we highlight -- which was also our initial motivation -- is that one can build from a proxy of a value function a quoting strategy by using what is called in the reinforcement literature the greedy strategy associated with that proxy (i.e. the strategy that makes the locally optimal choice if at each time step the value function associated with the tail problem is replaced by its proxy in the dynamic programming equation). Having such a strategy in closed form has numerous advantages. First, it can be used directly by market practitioners as a quoting strategy. Second, it can be used as a starting point in iterative algorithms based on policy functions: policy iteration algorithm, actor-critic approaches, etc. Third, it has the advantage of being easily interpretable and gives insights on the true optimal strategy such as the identification of the leading factors and the sensitivity to changes in model parameters.\footnote{For market making, the influence of the parameters has already been studied in \cite{gueant2013dealing} (one-asset case) and \cite{gueant2017optimal} (multi-asset case).}\\


The method we propose is first applied to the multi-asset market making models of \cite{gueant2017optimal}. Then we generalize the framework in several directions to cover many important practical cases: (i)~drift in prices, (ii)~client tiering, (iii)~several request sizes for each asset and each tier, and (iv)~fixed transaction costs for each asset and each tier. The drift in prices models the views of the market maker. Client tiering is a common practice in OTC markets, justified by the large spectrum of needs and behaviors in the set of clients to be served. The introduction of several request sizes for each asset and each tier reflects the reality that request sizes are not in control of the market makers, but rather of their clients. The fixed transaction costs can model extra costs associated with the market making business, for instance related to trading platforms.\\

We end this introduction by outlining our paper. In Section~\ref{sec:systemODE} we recall the multi-asset extensions of the Avellaneda-Stoikov model proposed in \cite{gueant2017optimal}, present the system of ordinary differential equations (the Hamilton-Jacobi equation) characterizing the value function, and state the main results regarding the optimal quotes. In Section~\ref{sec:generaleq}, we present our approach and compute a closed-form proxy for the value function. We deduce from that proxy an approximation of the optimal quotes in closed form. In Section~\ref{sec:perturbative}, we use a perturbation approach to propose a correction term that can easily be computed thanks to Monte-Carlo simulations. In Section~\ref{sec:extensions}, we extend our results to a more general multi-asset market making model with drift in prices, client tiering, several requested sizes for each asset and each tier, and fixed transaction costs for each asset and each tier. Numerical examples are presented in Section \ref{num_sec}. They illustrate the quality of our closed-form approximations.\\

\section{The multi-asset market making model}\label{sec:systemODE}

\subsection{Model setup}
\vspace{3mm}
\label{baseModel}

We fix a probability space $(\Omega,\mathcal F, \mathbb{P})$ equipped with a filtration $(\mathcal{F}_t)_{t\in\Rr_+}$ satisfying the usual conditions. In what follows, we assume that all stochastic processes are defined on $(\Omega,\mathcal F,(\mathcal{F}_t)_{t\in\Rr_+}, \mathbb{P})$. In all this paper, $\mathbb R_+$ denotes the set of nonnegative real numbers, and $\mathbb R_+^*$ denotes the set of positive real numbers.\\

For $i\in \{1,\ldots,d\}$, the reference price of asset $i$ is modeled by a process $(S_t^i)_{t \in \mathbb R_+}$ with dynamics
\[
dS^i_t=\sigma^idW^i_t,\quad \mbox{$S_0^i$ given,}
\]
where $(W_t^1,\ldots,W_t^d)_{t \in \mathbb R_+}$ is a $d$-dimensional Brownian motion with correlation matrix $(\rho^{i,j})_{1\leq i,j\leq d}$  adapted to the filtration $(\mathcal{F}_t)_{t\in \mathbb{R}_+}$ -- hereafter we denote by $\Sigma=(\rho^{i,j}\sigma^i\sigma^j)_{1\leq i,j\leq d}$ the variance-covariance matrix associated with the process $(S_t)_{t \in \mathbb R_+}=(S_t^1,\ldots,S_t^d)_{t \in \mathbb R_+}$.\\

The market maker chooses at each point in time the price at which she is ready to buy/sell each asset: for $i\in \{1,\ldots,d\}$, we let her bid and ask quotes for asset $i$ be modeled by two stochastic processes, respectively denoted by $(S_t^{i,b})_{t \in \mathbb R_+}$  and $(S_t^{i,a})_{t \in \mathbb R_+}$.\\

For $i\in \{1,\ldots,d\}$, we denote by $(N_t^{i,b})_{t \in \mathbb R_+}$ and $(N_t^{i,a})_{t \in \mathbb R_+}$ the two point processes modeling the number of transactions at the bid and at the ask, respectively, for asset $i$. We assume in this section that the transaction size for asset $i$ is constant and denoted by $z^i$. The inventory process of the market maker for asset $i$, denoted by $(q^i_t)_{t \in \mathbb R_+}$, has therefore the dynamics
\[
dq_t^i=z^idN_t^{i,b}-z^idN_t^{i,a},\quad \mbox{$q_0^i$ given,}
\]
and we denote by $(q_t)_{t \in \mathbb R_+}$ the (column) vector process $\left(q^1_t, \ldots, q^d_t \right)^\intercal _{t \in \mathbb R_+}$.\\

For each $i \in \{1,\ldots,d\}$, we denote by $(\lambda_t^{i,b})_{t \in \mathbb R_+}$ and $(\lambda_t^{i,a})_{t \in \mathbb R_+}$ the intensity processes of $(N^{i,b}_t)_{t \in \mathbb R_+}$ and $(N^{i,a}_t)_{t \in \mathbb R_+}$, respectively. We assume that the market maker stops proposing a bid (respectively ask) price for asset  $i$ when her position in asset $i$ following the transaction would exceed a given threshold $Q^i$ (respectively~$-Q^i$).\footnote{$Q^i$ is assumed to be a multiple of $z^i$. It corresponds to the risk limit of the market maker for asset $i$.}\\

Formally, we assume that the intensities verify
$$
\lambda^{i,b}_t=\Lambda^{i,b}(\delta_t^{i,b}) \mathds{1}_{\{q^i_{t-} + z^i \le Q^i\}}\quad\mbox{and}\quad\lambda^{i,a}_t=\Lambda^{i,a}(\delta_t^{i,a})\mathds{1}_{\{q^i_{t-} - z^i \ge -Q^i\}},
$$
where the processes $(\delta^{i,b}_t)_{t \in \mathbb R_+}$ and $(\delta^{i,a}_t)_{t \in \mathbb R_+}$ are defined by\footnote{It is often assumed in the literature that the point processes are independent of the Brownian motions. In that case, the quote processes $(\delta^{i,b}_t)_{t \in \mathbb R_+}$ and $(\delta^{i,a}_t)_{t \in \mathbb R_+}$ have to be independent of prices. In fact, the optimal control problem can be written in a weak form to show that this assumption is not necessary -- see \ref{ConstrucN} for more details on the construction of the processes in that case.}
\[
\delta_t^{i,b}=S_t^{i}-S_t^{i,b}\quad \mbox{and}\quad \delta_t^{i,a}=S_t^{i,a}-S_t^{i}, \quad \forall t \in \mathbb R_+.
\]
Moreover, we assume that the functions $\Lambda^{i,b}$ and $\Lambda^{i,a}$ satisfy the following properties:

\begin{itemize}
	\item $\Lambda^{i,b}$ and $\Lambda^{i,a}$ are twice continuously differentiable,
	\item $\Lambda^{i,b}$ and $\Lambda^{i,a}$ are decreasing, with $\forall \delta \in \Rr$, ${\Lambda^{i,b}}'(\delta) <0$ and ${\Lambda^{i,a}}'(\delta) <0$,
	\item $\lim_{\delta \to +\infty} \Lambda^{i,b}(\delta) = \lim_{\delta \to +\infty} \Lambda^{i,a}(\delta) = 0$,
	\item $\sup_{\delta} \frac{\Lambda^{i,b}(\delta){\Lambda^{i,b}}''(\delta)}{ \left({\Lambda^{i,b}}'(\delta)\right)^2} < 2 \quad \text{and} \quad \sup_{\delta} \frac{\Lambda^{i,a}(\delta){\Lambda^{i,a}}''(\delta)}{\left({\Lambda^{i,a}}'(\delta)\right)^2}  < 2.$\\
\end{itemize}

Finally, the process $(X_t)_{t \in \mathbb R_+}$ modelling the amount of cash on the market maker's cash account has the following dynamics:
\begin{align*}
dX_t&=\sum\limits_{i=1}^{d} S_t^{i,a}z^i dN_t^{i,a}-S_t^{i,b}z^i dN_t^{i,b}\\
&=\sum\limits_{i=1}^{d}(S_t^i+\delta_t^{i,a})z^i dN_t^{i,a}-(S_t^i-\delta_t^{i,b})z^i dN_t^{i,b}\\
&=\sum\limits_{i=1}^{d}\left(\delta_t^{i,b}z^i dN_t^{i,b} + \delta_t^{i,a} z^i dN_t^{i,a} \right) - \sum\limits_{i=1}^{d} S^i_t dq^i_t.
\end{align*}

\subsection{The optimization problems}
\vspace{3mm}
We can consider two different optimization problems for the market maker. Following the initial model proposed by Avellaneda and Stoikov in \cite{avellaneda2008high}, we can assume that she maximizes the expected value of a CARA utility function (with risk aversion parameter $\gamma>0$) applied to the mark-to-market value of her portfolio at a given time $T$. This mark-to-market value is the sum of the amount $X_T$ on the cash account and the mark-to-market value $\sum\limits_{i=1}^{d}q_T^i S_T^i$ of the assets remaining in the portfolio at date $T$.\footnote{In the literature there is sometimes a penalty function applied to the inventory at terminal time $T$ to ``force'' liquidation. Here, as we shall focus on the asymptotic regime of the optimal quotes, there is no point considering such a penalty. However, it is noteworthy that most of our non-asymptotic results could be generalized to the case of a quadratic terminal penalty.} More precisely, her optimization problem writes
\begin{align*}
\sup_{\substack{(\delta_t^{1,b})_t,\ldots,(\delta_t^{d,b})_t\in\mathcal A\\ (\delta_t^{1,a})_t,\ldots,(\delta_t^{d,a})_t\in\mathcal A}} \mathbb E\left[-\exp \left(-\gamma \left(X_T + \sum\limits_{i=1}^{d}q_T^i S_T^i \right)\right)\right],
\end{align*}
where $\Aa$ is the set of predictable processes bounded from below. We call Model A our model with this first objective function.\\

Alternatively, as proposed by Cartea \emph{et al.} in \cite{cartea2014buy}, we can consider a risk-adjusted expectation for the objective function of the market maker. In that case, the optimization problem writes
\begin{align*}
	\sup_{\substack{(\delta_t^{1,b})_t,\ldots,(\delta_t^{d,b})_t\in\mathcal A\\ (\delta_t^{1,a})_t,\ldots,(\delta_t^{d,a})_t\in\mathcal A}} \mathbb E\left[X_T + \sum_{i=1}^dq_T^iS_T^i -\frac{1}{2}\gamma \int_0^T q_t^\intercal \Sigma q_t dt\right].
\end{align*}
We call Model B our model with this second objective function.\\

\subsection{The Hamilton-Jacobi-Bellman and Hamilton-Jacobi equations}
\vspace{3mm}
Let $\{e^i\}_{i=1}^{d}$ be the canonical basis of $\Rr^d$. The Hamilton-Jacobi-Bellman equation associated with Model~A~is
\begin{eqnarray}
    0 &=&\partial_t u(t,x,q,S) + \frac 12 \sum_{i,j=1}^d \rho^{i,j} \sigma^i \sigma^j \partial^2_{S^i S^j} u(t,x,q,S) \nonumber \\
    &&+ \sum_{i=1}^d \mathds{1}_{\{q^i + z^i \le Q^i\}} \sup_{\delta^{i,b}} \Lambda^{i,b}(\delta^{i,b}) \left( u(t,x-z^iS^i + z^i \delta^{i,b}, q+z^i e^i, S) - u(t,x,q,S) \right) \nonumber \\
    &&+ \sum_{i=1}^d \mathds{1}_{\{q^i - z^i \ge- Q^i\}} \sup_{\delta^{i,a}} \Lambda^{i,a}(\delta^{i,a}) \left( u(t,x+z^iS^i + z^i \delta^{i,a}, q-z^i e^i, S) - u(t,x,q,S) \right), \label{HJB_A}
\end{eqnarray}
for all $(t,x,q,S) \in [0,T) \times \mathbb R \times \prod_{i=1}^d \left(z^i \mathbb Z \cap [-Q^i, Q^i]\right) \times \mathbb R^d$,\footnote{Given a positive number $z \in \mathbb R_+^*$, $z\mathbb Z$ denotes the set of multiples of $z$, i.e. $z\mathbb Z = \{\ldots, -2z,-z,0,z,2z,\ldots \}.$} with terminal condition $$u(T,x,q,S) = - \exp \left( - \gamma \left( x + \sum_{i=1}^d q^iS^i  \right) \right) \quad \forall (x,q,S) \in \mathbb R \times \prod_{i=1}^d \left(z^i \mathbb Z \cap  [-Q^i, Q^i]\right) \times \mathbb R^d.$$

The Hamilton-Jacobi-Bellman equation associated with Model B is
\begin{eqnarray}
    0 &=&\partial_t v(t,x,q,S) - \frac 12 \gamma q^\intercal \Sigma q + \frac 12 \sum_{i,j=1}^d \rho^{i,j} \sigma^i \sigma^j \partial^2_{S^i S^j} v(t,x,q,S) \nonumber \\
    &&+ \sum_{i=1}^d \mathds{1}_{\{q^i + z^i \le Q^i\}} \sup_{\delta^{i,b}} \Lambda^{i,b}(\delta^{i,b}) \left( v(t,x-z^iS^i + z^i \delta^{i,b}, q+z^i e^i, S) - v(t,x,q,S) \right) \nonumber \\
    &&+ \sum_{i=1}^d \mathds{1}_{\{q^i - z^i \ge - Q^i\}} \sup_{\delta^{i,a}} \Lambda^{i,a}(\delta^{i,a}) \left( v(t,x+z^iS^i + z^i \delta^{i,a}, q-z^i e^i, S) - v(t,x,q,S) \right), \label{HJB_B}
\end{eqnarray}
for all $(t,x,q,S) \in [0,T) \times \mathbb R \times \prod_{i=1}^d \left(z^i \mathbb Z \cap [-Q^i, Q^i]\right) \times \mathbb R^d$ with terminal condition $$v(T,x,q,S) =  x + \sum_{i=1}^d q^iS^i \quad \forall (x,q,S) \in \mathbb R \times \prod_{i=1}^d \left(z^i \mathbb Z \cap [-Q^i, Q^i]\right) \times \mathbb R^d.$$

For each $i \in \{1,\ldots, d\}$ and $\xi \ge 0,$ let us define two Hamiltonian functions\footnote{It is noteworthy that our definition of $H_\xi^{i,b}$ and $H_\xi^{i,a}$
differs from that of \cite{gueant2017optimal} (by a factor $z^i$). The
alternative definition we use in this paper is also that of \cite{bergault2019size} for $\xi=0$.}
$H_\xi^{i,b}$ and $H_\xi^{i,a}$ by
\begin{equation} \label{eq:Hb}
H^{i,b}_\xi(p)=
\begin{cases}
	\underset{\delta}{\sup}\frac{\Lambda^{i,b}(\delta)}{\xi z^i}(1-\exp(-\xi z^i(\delta-p)))&\mbox{if } \xi>0,\\
	\underset{\delta}{\sup}\Lambda^{i,b}(\delta)(\delta-p)&\mbox{if }\xi=0,
\end{cases}
\end{equation}
and
\begin{equation} \label{eq:Ha}
H^{i,a}_\xi(p)=
\begin{cases}
	\underset{\delta}{\sup}\frac{\Lambda^{i,a}(\delta)}{\xi z^i}(1-\exp(-\xi z^i(\delta-p)))&\mbox{if } \xi>0,\\
	\underset{\delta}{\sup}\Lambda^{i,a}(\delta)(\delta-p)&\mbox{if }\xi=0.
\end{cases}
\end{equation}

Using the ansatz introduced in \cite{gueant2017optimal} for the two functions $u: [0,T] \times \mathbb R \times \prod_{i=1}^d \left(z^i \mathbb Z \cap [-Q^i, Q^i]\right) \times \mathbb R^d \rightarrow \mathbb R$ and $v: [0,T] \times \mathbb R \times \prod_{i=1}^d \left(z^i \mathbb Z \cap [-Q^i, Q^i]\right) \times \mathbb R^d \rightarrow \mathbb R$, i.e.
    $$u(t,x,q,S) = - \exp \left(-\gamma  \left( x + \sum_{i=1}^d q^iS^i + \theta(t,q) \right) \right) \text{ and } v(t,x,q,S) = x + \sum_{i=1}^d q^iS^i + \theta(t,q),$$
we see that solving the Hamilton-Jacobi-Bellman equations \eqref{HJB_A} and \eqref{HJB_B} boils down to finding the solution $\theta : [0,T] \times \prod_{i=1}^d \left(z^i \mathbb Z \cap [-Q^i, Q^i]\right) \to \mathbb R$ of the following Hamilton-Jacobi equation with $\xi = \gamma$ in the case of Model A and $\xi =0$ in the case of Model B:
\begin{eqnarray}
    \label{sec2:thetagen}
    0 &=& \partial_t \theta(t,q) - \frac 12 \gamma q^\intercal \Sigma q\\
    &&+ \sum_{i=1}^d \mathds{1}_{\{q^i + z^i \le Q^i\}} z^i H^{i,b}_\xi \left( \frac{\theta(t,q) - \theta(t,q+z^ie^i)}{z^i} \right) + \sum_{i=1}^d \mathds{1}_{\{q^i - z^i \ge - Q^i\}} z^i H^{i,a}_\xi \left( \frac{\theta(t,q) - \theta(t,q-z^ie^i)}{z^i} \right).  \nonumber
\end{eqnarray}
In both cases, the terminal condition simply boils down to
\begin{align}
    \label{sec2:thetagenCT}
    \theta(T,q) = 0.
\end{align}

\subsection{Existing theoretical results}
\vspace{3mm}
\noindent
From \cite[Theorem 5.1]{gueant2017optimal}, for a given $\xi \ge 0,$ there exists a unique $\theta:[0,T]\times \prod_{i=1}^d \left(z^i \mathbb Z \cap [-Q^i, Q^i]\right)\to \Rr$, $C^1$ in time, solution of Eq.~\eqref{sec2:thetagen} with terminal condition \eqref{sec2:thetagenCT}.
Moreover (see \cite[Theorems 5.2 and 5.3]{gueant2017optimal}), a classical verification argument enables to go from $\theta$ to optimal controls for both Model A and Model B. The optimal quotes as functions of $\theta$ are recalled in the following theorems (for details, see \cite{gueant2017optimal}).\\

In the case of Model A, the result is the following:
\begin{theorem}
\label{optquotes}
	Let us consider the solution $\theta$ of Eq.~\eqref{sec2:thetagen} with terminal condition \eqref{sec2:thetagenCT} for $\xi=\gamma$.\\
	
	Then, for $i \in \{1, \ldots, d\},$ the optimal bid and ask quotes $S^{i,b}_t = S^i_t - \delta^{i,b*}_t$ and $S^{i,a}_t = S^i_t + \delta^{i,a*}_t$ in Model A are characterized by\begin{align}
	\begin{split}
	\label{sec2:deltaoptimalA}
	\delta^{i,b*}_t &= \tilde{\delta}^{i,b*}_\gamma\left(\frac{\theta(t,q_{t-}) - \theta(t,q_{t-}+z^i e^i)}{z^i}\right) \quad \text{for } q_{t-}+z^i e^i \in \prod_{j=1}^d \left(z^j \mathbb Z \cap [-Q^j, Q^j]\right),\\
	\delta^{i,a*}_t &= \tilde{\delta}^{i,a*}_\gamma\left(\frac{\theta(t,q_{t-}) - \theta(t,q_{t-}-z^i e^i)}{z^i}\right) \quad \text{for } q_{t-}-z^i e^i \in \prod_{j=1}^d \left(z^j \mathbb Z \cap [-Q^j, Q^j]\right),
	\end{split}
	\end{align}where the functions $\tilde{\delta}^{i,b*}_\gamma(\cdot)$ and $\tilde{\delta}_\gamma^{i,a*}(\cdot)$ are defined by
	\begin{align*}
	\tilde{\delta}^{i,b*}_\gamma(p) &= {\Lambda^{i,b}}^{-1}\left(\gamma z^i H^{i,b}_{\gamma}(p) - {H_{\gamma}^{i,b}}'(p)\right),\\
	\tilde{\delta}^{i,a*}_\gamma(p) &= {\Lambda^{i,a}}^{-1}\left(\gamma z^i H^{i,a}_{\gamma}(p) - {H_{\gamma}^{i,a}}'(p)\right),
	\end{align*}
	where for all $i \in \{1, \ldots, d\},$ ${H_{\gamma}^{i,b}}'$ and ${H_{\gamma}^{i,a}}'$ denote the first derivative of ${H_{\gamma}^{i,b}}$ and ${H_{\gamma}^{i,a}}$, respectively.
\end{theorem}
For Model B, the result is the following:
\begin{theorem}
\label{optquotesbis}
	Let us consider the solution $\theta$ of Eq.~\eqref{sec2:thetagen} with terminal condition \eqref{sec2:thetagenCT} for $\xi=0$.\\
	
	Then, for $i \in \{1, \ldots, d\},$ the optimal bid and ask quotes $S^{i,b}_t = S^i_t - \delta^{i,b*}_t$ and $S^{i,a}_t = S^i_t + \delta^{i,a*}_t$ in  Model B are characterized by\begin{align}
		\begin{split}
	\label{sec2:deltaoptimalB}
	\delta^{i,b*}_t &= \tilde{\delta}^{i,b*}_0\left(\frac{\theta(t,q_{t-}) - \theta(t,q_{t-}+z^i e^i)}{z^i}\right) \quad \text{for } q_{t-}+z^i e^i \in \prod_{j=1}^d \left(z^j \mathbb Z \cap [-Q^j, Q^j]\right), \\
	 \delta^{i,a*}_t &= \tilde{\delta}^{i,a*}_0\left(\frac{\theta(t,q_{t-}) - \theta(t,q_{t-}-z^i e^i)}{z^i}\right) \quad \text{for } q_{t-}-z^i e^i \in \prod_{j=1}^d \left(z^j \mathbb Z \cap [-Q^j, Q^j]\right),
	 \end{split}
	\end{align}where the functions $\tilde{\delta}^{i,b*}_0(\cdot)$ and $\tilde{\delta}_0^{i,a*}(\cdot)$ are defined by
	$$\tilde{\delta}^{i,b*}_0(p) = {\Lambda^{i,b}}^{-1}\left(- {H_{0}^{i,b}}'(p)\right) \text{ and } \tilde{\delta}^{i,a*}_0(p) = {\Lambda^{i,a}}^{-1}\left(- {H_{0}^{i,a}}'(p)\right)$$
where for all $i \in \{1, \ldots, d\},$ ${H_{0}^{i,b}}'$ and ${H_{0}^{i,a}}'$ denote the first derivative of ${H_{0}^{i,b}}$ and ${H_{0}^{i,a}}$, respectively.
\end{theorem}

\noindent
In the following two sections, we propose new methods to find approximations of the solution to the system of ordinary differential equations (ODEs) \eqref{sec2:thetagen} with terminal condition \eqref{sec2:thetagenCT}. Eqs. \eqref{sec2:deltaoptimalA} and \eqref{sec2:deltaoptimalB} can then serve to go from approximations of $\theta$ (hereafter called -- slightly abusively -- the value function) to approximations of the optimal quotes. The resulting quotes correspond to what the reinforcement learning community calls  the greedy quoting strategy associated with the proxy of the value function.\footnote{The true optimal quotes correspond to the greedy strategy with respect to the value function $u$ (in Model A) or $v$ (in Model~B) deduced from the true $\theta$.}

\section{A quadratic approximation of the value function and its applications} \label{sec:generaleq}

\subsection{Introduction}
\vspace{3mm}
In the field of (stochastic) optimal control, finding value functions and optimal controls in closed form is the exception rather than the rule. One important exception goes with the class of Linear-Quadratic (LQ) and Linear-Quadratic-Gaussian (LQG) problems. Of course, the above market making problem does not belong to this class of control problems, for instance because the control of point processes is nonlinear by nature. Nevertheless, we see that price risk appears in both Model A and Model B through the quadratic term $\frac 12 \gamma q^\intercal \Sigma q$ in the Hamilton-Jacobi equation \eqref{sec2:thetagen}. The main idea of this paper consists in replacing the Hamiltonian functions associated with our market making problem by quadratic functions that approximate them. The interest of quadratic Hamiltonian functions lies in that the resulting Hamilton-Jacobi equations can be solved in closed-form using the same tools as for LQ/LQG problems, i.e. Riccati equations.\\

At first sight, approximating the Hamiltonian functions involved in Eq. \eqref{sec2:thetagen} by quadratic functions seems inappropriate. For all $i \in \{1, \ldots, d\}$, the functions $H_\xi^{i,b}$ and $H_\xi^{i,a}$ are indeed positive and decreasing and approximating them with U-shaped functions can only be valid locally. However, one has to bear in mind that our goal is to approximate the solution of the Hamilton-Jacobi equations and not the Hamiltonian functions. This remark is particularly important because the Hamiltonian terms involved in the Hamilton-Jacobi equations are (up to the indicator functions that we shall discard in what follows by considering the limit case where $\forall i \in \{1, \ldots, d\}, Q^i = +\infty$) of the form $$H^{i,b}_\xi \left( \frac{\theta(t,q) - \theta(t,q+z^ie^i)}{z^i} \right) + H^{i,a}_\xi \left( \frac{\theta(t,q) - \theta(t,q-z^ie^i)}{z^i} \right),$$

Assuming that $\frac{\theta(t,q) - \theta(t,q+z^ie^i)}{z^i} \simeq - \frac{\theta(t,q) - \theta(t,q-z^ie^i)}{z^i}$, we clearly see that, with respect to asset $i$, the function we need to approximate is $p \mapsto H^{i,b}_\xi(p) + H^{i,a}_\xi(-p)$ rather than $H^{i,b}_\xi$ and $H^{i,a}_\xi$ themselves, and it is natural to approximate the former function with a U-shaped one!\\

Let us replace for all $i \in \{1, \ldots, d\}$ the Hamiltonian functions $H_\xi^{i,b}$ and $H_ \xi^{i,a}$ by the quadratic functions\footnote{We omit the subscript $\xi$ in the definition of $\check{H}^{i,b}$ and $\check{H}^{i,a}$. In particular, although the subscript $\xi$ is not written, the coefficients  $\alpha^{i,b}_0$, $\alpha^{i,b}_1$, $\alpha^{i,b}_2$, $\alpha^{i,a}_0$, $\alpha^{i,a}_1$, and $\alpha^{i,a}_2$ do depend on $\xi$.}
$$\check{H}^{i,b}: p \mapsto  \alpha^{i,b}_0 + \alpha^{i,b}_1 p + \frac 12 \alpha^{i,b}_2 p^2 \quad \textrm{and} \quad \check{H}^{i,a}: p \mapsto  \alpha^{i,a}_0 + \alpha^{i,a}_1 p + \frac 12 \alpha^{i,a}_2 p^2.$$

\begin{remark}
\label{naturalchoice}
A natural choice for the functions $(\check{H}^{i,b})_{i \in \{1, \ldots, d\}}$ and $(\check{H}^{i,a})_{i \in \{1, \ldots, d\}}$ derives from Taylor expansions around $p=0$. In that case,
$$\forall i \in \{1, \ldots, d\},  \forall j \in \{0,1,2\},\quad  \alpha^{i,b}_j = {H^{i,b}_\xi}^{(j)}(0) \quad \textrm{and} \quad \alpha^{i,a}_j = {H^{i,a}_\xi}^{(j)}(0).$$
\end{remark}

We denote by $\check \theta$ the approximation of $\theta$ associated with the functions $(\check{H}^{i,b})_{i \in \{1, \ldots, d\}}$ and $(\check{H}^{i,a})_{i \in \{1, \ldots, d\}}$, i.e. if we consider the limit case where $\forall i \in \{1, \ldots, d\}, Q^i = +\infty$, $\check \theta$ verifies
\begin{eqnarray}
0 &=& \partial_t \check{\theta}(t,q) - \frac{1}{2} \gamma q^\intercal \Sigma q
	+ \sum_{i=1}^d z^i\left( \alpha^{i,b}_0 + \alpha^{i,a}_0 \right) \nonumber \\
  &&+ \sum_{i=1}^d \left( \alpha^{i,b}_1 \left(\check{\theta}(t,q) - \check{\theta}(t, q + z^ie^i)\right)
                     + \alpha^{i,a}_1 \left(\check{\theta}(t,q) - \check{\theta}(t, q - z^ie^i)\right) \right) \nonumber \\
  &&+ \frac{1}{2}\sum_{i=1}^d \frac 1{z^i}\left( \alpha^{i,b}_2 \left(\check{\theta}(t,q) - \check{\theta}(t,q + z^ie^i)\right)^2
                                + \alpha^{i,a}_2 \left(\check{\theta}(t,q) - \check{\theta}(t,q - z^ie^i)\right)^2 \right) \label{sec3:thetagenapprox}
\end{eqnarray}
and of course we consider the terminal condition
\begin{equation}
	\label{sec3:thetagenapproxCT}
	\check{\theta}(T,q) = 0.
\end{equation}

\subsection{An approximation of the value function in closed form}
\vspace{3mm}
Eq. \eqref{sec3:thetagenapprox} with terminal condition \eqref{sec3:thetagenapproxCT} can be solved in closed form. To prove this point, we start with the following proposition:

\begin{proposition}
\label{sec3:ansatz}
Let us introduce for $i \in \{1, \ldots, d\}, j \in \{0, 1, 2\}, k \in \mathbb{N}$,
$$\Delta_{j,k}^{i,b} = \alpha^{i,b}_j (z^i)^k \quad \text{and} \quad \Delta_{j,k}^{i,a} = \alpha^{i,a}_j (z^i)^k,$$
$$V^b_{j,k} = \left(\Delta_{j,k}^{1,b},\ldots,\Delta_{j,k}^{d,b}\right)^\intercal \quad \text{and} \quad V^a_{j,k} = \left(\Delta_{j,k}^{1,a},\ldots,\Delta_{j,k}^{d,a}\right)^\intercal,$$
\begin{equation*}
D^b_{j,k} = \textup{diag}(\Delta_{j,k}^{1,b}, \ldots, \Delta_{j,k}^{d,b}) \quad \text{and} \quad D^a_{j,k} = \textup{diag}(\Delta_{j,k}^{1,a}, \ldots, \Delta_{j,k}^{d,a}).
\end{equation*}

Let us consider three differentiable functions $A: [0,T] \to S_d^{+}$, $B:
[0,T] \to \mathbb{R}^d$, and $C: [0,T] \to \mathbb R$ solutions of the system of
ordinary differential equations\footnote{$S_d^{+}$ (resp. $S_d^++$) stands throughout this paper 
the set of positive semi-definite (resp. definite) symmetric $d$-by-$d$
matrices.}
\begin{equation}\label{system:ABC}
\begin{cases}
\displaystyle
{A'}(t) = & 2 A(t)\left(D^b_{2,1} + D^a_{2,1}\right) A(t) - \frac{1}{2} \gamma \Sigma \\
{B'}(t) = &  2 A(t) \left(V^b_{1,1} - V^a_{1,1}\right) + 2A(t)\left(D^b_{2,2} - D^a_{2,2}\right) \mathcal{D}(A(t)) + 2 A(t)\left(D^b_{2,1} + D^a_{2,1}\right) B(t) \\
{C'}(t) = & \textrm{Tr}\left(D^b_{0,1} + D^a_{0,1}\right) +\textrm{Tr}\left(\left(D^b_{1,2} + D^a_{1,2}\right) A(t)\right) + \left(V^b_{1,1} - V^a_{1,1}\right)^\intercal B(t) \\
          & + \frac 12 \mathcal{D}(A(t))^\intercal \left(D^b_{2,3} + D^a_{2,3}\right) \mathcal{D}(A(t)) + \frac 12 B(t)^\intercal \left(D^b_{2,1} + D^a_{2,1}\right) B(t) + B(t)^\intercal \left(D^b_{2,2} - D^a_{2,2}\right) \mathcal{D}(A(t)),
\end{cases}
\end{equation}
with terminal conditions
\begin{equation}
\label{system:ABCCT}
A(T) = 0, B(T) = 0, \textrm{ and } C(T) = 0,
\end{equation} where $\mathcal{D}$ is the linear operator mapping a matrix onto the vector of its diagonal coefficients.\\

Then $\check{\theta}: (t,q) \in [0,T] \times \prod_{i=1}^d z^i \mathbb Z \mapsto -q^\intercal A(t) q - q^\intercal B(t) -  C(t)$ is solution of Eq. \eqref{sec3:thetagenapprox} with terminal condition~\eqref{sec3:thetagenapproxCT}.
\end{proposition}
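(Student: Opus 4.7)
The approach is a direct verification: plug the quadratic ansatz $\check\theta(t,q) = -q^\intercal A(t) q - q^\intercal B(t) - C(t)$ into Eq.~\eqref{sec3:thetagenapprox} and identify coefficients polynomially in $q$. The time derivative is immediate, $\partial_t \check\theta(t,q) = -q^\intercal A'(t) q - q^\intercal B'(t) - C'(t)$, and using the symmetry of $A(t)$ the lattice differences evaluate in closed form as
\begin{align*}
\check\theta(t,q) - \check\theta(t,q + z^i e^i) &= 2 z^i (A(t) q)_i + (z^i)^2 A_{ii}(t) + z^i B_i(t),\\
\check\theta(t,q) - \check\theta(t,q - z^i e^i) &= -2 z^i (A(t) q)_i + (z^i)^2 A_{ii}(t) - z^i B_i(t).
\end{align*}

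The next step is to substitute these into Eq.~\eqref{sec3:thetagenapprox}, expand the squared brackets that multiply $\alpha^{i,b}_2$ and $\alpha^{i,a}_2$, and group the resulting contributions by their degree in $q$. In doing so, scalar sums of the form $\sum_i c_i (A(t) q)_i^2$ are rewritten as $q^\intercal A(t) \operatorname{diag}(c_i) A(t) q$, sums $\sum_i c_i (A(t) q)_i w_i$ as $q^\intercal A(t) \operatorname{diag}(c_i) w$, and sums $\sum_i c_i A_{ii}(t)$ as $\operatorname{Tr}(\operatorname{diag}(c_i) A(t))$, each rewriting relying only on the symmetry of $A$ and the diagonality of the $D$-matrices. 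The bid/ask sign pattern then turns combinations with coefficients $\alpha^{i,b}_j + \alpha^{i,a}_j$ into the matrices $D^b_{\cdot,\cdot} + D^a_{\cdot,\cdot}$, and combinations with $\alpha^{i,b}_j - \alpha^{i,a}_j$ into the $D^b_{\cdot,\cdot} - D^a_{\cdot,\cdot}$ or $V^b_{\cdot,\cdot} - V^a_{\cdot,\cdot}$ objects appearing in \eqref{system:ABC}.

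Because the resulting polynomial identity in $q$ must hold on the infinite lattice $\prod_{i=1}^d z^i \mathbb Z$, its quadratic-in-$q$, linear-in-$q$, and $q$-independent parts must vanish separately. The quadratic-in-$q$ part yields the Riccati equation for $A$; the linear-in-$q$ part yields the equation for $B$, with the cross term $2 A(t)(D^b_{2,2} - D^a_{2,2}) \mathcal{D}(A(t))$ emerging from the product of the $2 z^i (A(t)q)_i$ and $(z^i)^2 A_{ii}(t)$ pieces inside the squared bracket; the $q$-independent part yields the equation for $C$. The terminal conditions \eqref{system:ABCCT} follow from $\check\theta(T,q) = 0$ for all $q$ by the same coefficient matching. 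The main obstacle is purely bookkeeping: each squared bracket expands into six cross terms, and tracking how bid and ask contributions combine into the specific $V^{b/a}_{j,k}$ and $D^{b/a}_{j,k}$ objects -- particularly for the $C$-equation, which accumulates the most of them -- is what makes the computation tedious while the underlying argument remains elementary.
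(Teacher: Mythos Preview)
Your proposal is correct and follows essentially the same approach as the paper: both proofs plug the quadratic ansatz into Eq.~\eqref{sec3:thetagenapprox}, compute the lattice differences $\check\theta(t,q)-\check\theta(t,q\pm z^ie^i)$ in the same closed form, expand, and identify the degree-$0$, degree-$1$, and degree-$2$ terms in $q$ with the three ODEs in \eqref{system:ABC}. One minor remark on direction: in the proposition the terminal conditions $A(T)=B(T)=C(T)=0$ are hypotheses, so the terminal condition~\eqref{sec3:thetagenapproxCT} for $\check\theta$ follows from them (rather than the reverse as you phrase it), but this is trivial either way.
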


\begin{proof}
We have
\vspace{-3mm}
\begingroup
\allowdisplaybreaks
\begin{eqnarray*}
&& \partial_t \check{\theta}(t,q) - \frac{1}{2} \gamma q^\intercal \Sigma q
\vspace{-3mm}
	+ \sum_{i=1}^d z^i\left( \alpha^{i,b}_0 + \alpha^{i,a}_0 \right)\\*
  &&+ \sum_{i=1}^d \left( \alpha^{i,b}_1 \left(\check{\theta}(t,q) - \check{\theta}(t, q + z^ie^i)\right)
                     + \alpha^{i,a}_1 \left(\check{\theta}(t,q) - \check{\theta}(t, q - z^ie^i)\right) \right) \\*
  &&+ \frac{1}{2}\sum_{i=1}^d \frac 1{z^i}\left( \alpha^{i,b}_2 \left(\check{\theta}(t,q) - \check{\theta}(t,q + z^ie^i)\right)^2
                                + \alpha^{i,a}_2 \left(\check{\theta}(t,q) - \check{\theta}(t,q - z^ie^i)\right)^2 \right)\\
 &=& -q^\intercal A'(t) q - q^\intercal B'(t) - C'(t) - \frac{1}{2} \gamma q^\intercal \Sigma q + \sum_{i=1}^d z^i( \alpha^{i,b}_0 + \alpha^{i,a}_0 )\\
\vspace{-3mm} &&+ \sum_{i=1}^d \alpha^{i,b}_1 \left(2z^i q^\intercal A(t) e^i + (z^i)^2 {e^i}^\intercal A(t) e^i + z^i {e^i}^\intercal B(t) \right)\\*
 &&+ \sum_{i=1}^d \alpha^{i,a}_1 \left(-2z^i q^\intercal A(t) e^i + (z^i)^2 {e^i}^\intercal A(t) e^i - z^i {e^i}^\intercal B(t)\right) \\*
  &&+ \frac{1}{2}\sum_{i=1}^d \frac 1{z^i} \alpha^{i,b}_2 \left(2z^i q^\intercal A(t) e^i + (z^i)^2 {e^i}^\intercal A(t) e^i + z^i {e^i}^\intercal B(t)\right)^2\\*
  &&+ \frac{1}{2}\sum_{i=1}^d \frac 1{z^i} \alpha^{i,a}_2 \left(-2z^i q^\intercal A(t) e^i + (z^i)^2 {e^i}^\intercal A(t) e^i - z^i {e^i}^\intercal B(t)\right)^2\\
  &=& -q^\intercal A'(t) q - q^\intercal B'(t) - C'(t) - \frac{1}{2} \gamma q^\intercal \Sigma q + \textrm{Tr}\left(D^b_{0,1} + D^a_{0,1}\right)\\*
 &&+ 2 q^\intercal A(t) \left(V^b_{1,1} - V^a_{1,1}\right) + \textrm{Tr}\left(\left(D^b_{1,2} + D^a_{1,2}\right) A(t)\right) + \left(V^b_{1,1} - V^a_{1,1}\right)^\intercal B(t) \\*
  &&+ 2 q^\intercal A(t) \left(D^b_{2,1} + D^a_{2,1}\right) A(t) q + \frac 12 \mathcal{D}(A(t))^\intercal \left(D^b_{2,3} + D^a_{2,3}\right) \mathcal{D}(A(t)) + \frac 12 B(t)^\intercal \left(D^b_{2,1} + D^a_{2,1}\right) B(t)\\*
  &&+2 q^\intercal A(t) \left(D^b_{2,2} - D^a_{2,2}\right) \mathcal{D}(A(t)) + 2 q^\intercal A(t) \left(D^b_{2,1} + D^a_{2,1}\right)B(t) + B(t)^\intercal \left(D^b_{2,2} - D^a_{2,2}\right) \mathcal{D}(A(t))\\
  &=&0,
\end{eqnarray*}
\endgroup
where the last equality comes from the definitions of $(A,B,C)$ and the identification of the terms of degree $0$, $1$, and $2$ in $q$.\\

As the terminal conditions are satisfied, the result is proved.
\end{proof}

\begin{proposition}\label{prop:ode_solution}
Assume $\alpha^{i, b}_2 + \alpha^{i, a}_2 > 0$ for all $i \in \{ 1, \ldots, d
\}$. The system of ODEs \eqref{system:ABC} with terminal conditions
\eqref{system:ABCCT} admits the unique solution
\begin{align}
A(t) & = \frac{1}{2} D^{-\frac 12}_+ \widehat{A}
    \left( e^{ \widehat{A} (T - t)} - e^{-\widehat{A} (T - t)} \right)
    \left( e^{ \widehat{A} (T - t)} + e^{-\widehat{A} (T - t)} \right)^{-1} D_+^{-\frac 12},
         \label{eq:solA} \\
B(t) & = -2 e^{-2 \int_t^T A(u) D_+ \,du} \int_t^T
            e^{ 2 \int_s^T A(u) D_+ \,du} A(s)
            \left( V_- + D_- \mathcal D(A(s)) \right) ds,
            \label{eq:solB} \\
C(t) & = -\textrm{Tr}\left(D^b_{0,1} + D^a_{0,1}\right) (T - t)
       -  \textrm{Tr}\left(\left(D^b_{1,2} + D^a_{1,2}\right) \int_t^T A(s) ds \right)
       -  V_-^\intercal  \int_t^T B(s) ds \nonumber \\
     & - \frac 12 \int_t^T \mathcal{D}(A(s))^\intercal \left(D^b_{2,3}
         + D^a_{2,3}\right) \mathcal{D}(A(s)) ds
       - \frac 12 \int_t^T B(s)^\intercal D_+ B(s) ds  \nonumber \\
     & - \int_t^T B(s)^\intercal D_- \mathcal{D}(A(s)) ds.\label{eq:solC}
\end{align}
where
\[ D_+ = D^b_{2,1} + D^a_{2,1}, \quad
   D_- = D_{2,2}^b - D_{2,2}^a, \quad
   V_- = V_{1,1}^b - V_{1,1}^a, \quad \text{and} \quad
   \widehat{A}
   = \sqrt{\gamma} \left( D_+^{\frac 12} \Sigma D_+^{\frac 12} \right)^{\frac 12}. \]
\end{proposition}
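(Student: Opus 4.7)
The system has a triangular structure: the Riccati equation for $A$ decouples from the others, the equation for $B$ is linear in $B$ once $A$ is known, and $C$ is obtained by a pure quadrature once $A$ and $B$ are fixed. So my plan is to solve them in this order. Uniqueness will then follow from standard ODE theory, since the explicit formulas below will exhibit non-blowing-up solutions on all of $[0,T]$ and the Riccati's right-hand side is locally Lipschitz.

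For $A$, the first step is to symmetrize the Riccati. Because the assumption $\alpha^{i,b}_2 + \alpha^{i,a}_2 > 0$ for every $i$ makes $D_+ \in S_d^{++}$, I set $\widetilde A(t) = D_+^{1/2} A(t) D_+^{1/2}$, which satisfies
\[
\widetilde A'(t) = 2\, \widetilde A(t)^2 - \tfrac{\gamma}{2}\, D_+^{1/2}\, \Sigma\, D_+^{1/2}, \qquad \widetilde A(T) = 0.
\]
I then linearize via $\widetilde A(t) = -\tfrac{1}{2}\dot Y(t) Y(t)^{-1}$, which converts the Riccati into the second-order linear equation $\ddot Y(t) = \widehat A^2 Y(t)$. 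Its general solution $Y(t) = e^{\widehat A t}C_1 + e^{-\widehat A t}C_2$, combined with $\dot Y(T)=0$ (the translation of $\widetilde A(T)=0$), forces $C_2 = e^{2\widehat A T}C_1$. Computing $-\tfrac{1}{2}\dot Y(t) Y(t)^{-1}$ and conjugating back by $D_+^{-1/2}$ yields exactly \eqref{eq:solA}. I also need to check that the ``denominator'' $e^{\widehat A(T-t)} + e^{-\widehat A(T-t)} = 2\cosh(\widehat A(T-t))$ is invertible for every $t \le T$: this follows because $\widehat A = \sqrt{\gamma}\,(D_+^{1/2}\Sigma D_+^{1/2})^{1/2}$ is symmetric positive definite (by positive definiteness of $\Sigma$ and $D_+$), so each eigenvalue of $\cosh(\widehat A(T-t))$ is at least $1$. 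This guarantees that $A$ is well-defined on the entire interval.

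With $A$ in hand, the equation for $B$ is a linear time-dependent ODE of the form $B'(t) = 2 A(t) D_+ B(t) + 2 A(t)\bigl(V_- + D_-\mathcal D(A(t))\bigr)$. The key observation that lets me write the propagator as an ordinary (rather than time-ordered) matrix exponential of an integral is that $A(t) D_+$ and $A(s) D_+$ commute for all $s, t$: indeed \eqref{eq:solA} shows each $A(t)$ is $D_+^{-1/2}$ times a scalar function of $\widehat A$ times $D_+^{-1/2}$, so $A(t) D_+ = D_+^{-1/2} f_t(\widehat A) D_+^{1/2}$ and the commutator collapses to $D_+^{-1/2}\bigl[f_t(\widehat A), f_s(\widehat A)\bigr]D_+^{1/2} = 0$. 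Variation of constants applied backward from $B(T)=0$ then produces \eqref{eq:solB}. Finally, the equation for $C$ is a scalar quadrature once $A$ and $B$ are substituted in, and integrating backward from $C(T)=0$ delivers \eqref{eq:solC}.

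The main obstacle is the Riccati step: choosing the right linearization, keeping track of the non-commuting matrices $D_+$ and $\Sigma$, and verifying global well-definedness of the conjugated solution on $[0,T]$. Once that is done, the $B$ and $C$ steps reduce to routine linear-ODE and direct-integration arguments, and existence plus uniqueness together yield the proposition.
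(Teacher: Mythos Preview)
Your proposal is correct and follows essentially the same route as the paper: symmetrize the Riccati by conjugating with $D_+^{1/2}$, linearize to the second-order equation $\ddot Y = \widehat A^2 Y$ (the paper writes the solution $z(t) = e^{\widehat A(T-t)} + e^{-\widehat A(T-t)}$ directly and checks that $-z' z^{-1}$ satisfies the reduced Riccati), then handle $B$ via the commutativity of $\{A(t)D_+\}_t$ and variation of constants, and $C$ by straight integration. One small slip to fix: you assert that $\Sigma$ is positive definite, but the paper only has $\Sigma \in S_d^+$ (a covariance matrix); your invertibility argument for $\cosh(\widehat A(T-t))$ still goes through unchanged, since $\cosh(\lambda(T-t)) \ge 1$ for every eigenvalue $\lambda \ge 0$ of $\widehat A$.
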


\begin{proof}
The system of ODEs \eqref{system:ABC} being triangular -- though not linear -- we tackle the equations one by one, in order.\\

\paragraph{\textbf{Solution for $A$}}

First, we observe that $D_+ = \textup{diag}((\alpha^{1, b}_2 +
\alpha^{1, a}_2) z^1, \ldots, (\alpha^{d, b}_2 + \alpha^{d, a}_2) z^d)$ is a
positive diagonal matrix. Therefore $D_+^{\frac 12}$ is well defined. Then, since $D_+^{\frac 12} \Sigma D_+^{\frac 12} \in S_d^{+}$, $\widehat{A}$ is well defined and in $S_d^{+}$.\\

Now, by introducing the
change of variables
\[ \mathbf{a}(t) = 2 D_+^{\frac 12} A(t) D_+^{\frac 12}, \]
the terminal value problem for $A$ in \eqref{system:ABC} becomes
\begin{equation}\label{eq:bolda}
\begin{cases}
&\mathbf{a}'(t) = \mathbf{a}(t)^2 - \widehat{A}^2 \\
&\mathbf{a}(T) = 0.
\end{cases}
\end{equation}
To solve  \eqref{eq:bolda} let us introduce the function $z$ defined by
\[ z(t) = e^{\widehat{A}(T-t)} + e^{-\widehat{A}(T-t)}, \]
that is a $C^2([0,T], S_d^{++})$ function verifying $z''(t) = \widehat{A}^2 z(t)$ and $z'(T) = 0$.\\

We have
$$\frac{d}{dt}\left(-z'(t) z(t)^{-1}\right) = -z''(t) z(t)^{-1} + z'(t) z(t)^{-1}z'(t) z(t)^{-1} = \left(z'(t) z(t)^{-1}\right)^2 -\widehat{A}^2$$ and $-z'(T) z(T)^{-1} = 0$. Therefore, by Cauchy-Lipschitz theorem, we have $\mathbf{a} = -z'z^{-1}$.\\

Wrapping up, we obtain
\begin{align*}
A(t)
& = \frac{1}{2} D_+^{-\frac 12} \mathbf{a}(t) D_+^{-\frac 12} \\
& = -\frac{1}{2} D_+^{-\frac 12} z'(t) z(t)^{-1} D_+^{-\frac 12} \\
& = \frac{1}{2} D_+^{-\frac 12} \widehat{A}
    \left( e^{ \widehat{A} (T - t)} - e^{-\widehat{A} (T - t)} \right)
    \left( e^{ \widehat{A} (T - t)} + e^{-\widehat{A} (T - t)} \right)^{-1}
    D_+^{-\frac 12}.
\end{align*}

\paragraph{\textbf{Solution for $B$}}
Let us notice that, by definition of the exponential of a matrix, for all $s,t \in [0,T]$, the matrices $\widehat{A}$, $\left(
e^{ \widehat{A} (T - s)} - e^{-\widehat{A} (T - s)} \right)$, $\left(
e^{ \widehat{A} (T - s)} + e^{-\widehat{A} (T - s)} \right)^{-1}$, $\left( e^{
\widehat{A} (T - t)} - e^{-\widehat{A} (T - t)} \right)$, and $\left( e^{
\widehat{A} (T - t)} + e^{-\widehat{A} (T - t)} \right)^{-1}$ commute. Therefore
\begin{eqnarray*}
&&A(s) D_+ A(t) D_+\\
&=& \frac{1}{4} D^{-\frac 12}_+ \widehat{A}
    \left( e^{ \widehat{A} (T - s)} - e^{-\widehat{A} (T - s)} \right)
    \left( e^{ \widehat{A} (T - s)} + e^{-\widehat{A} (T - s)} \right)^{-1}\\
&&\times\widehat{A}
    \left( e^{ \widehat{A} (T - t)} - e^{-\widehat{A} (T - t)} \right)
    \left( e^{ \widehat{A} (T - t)} + e^{-\widehat{A} (T - t)} \right)^{-1}D_+^{\frac 12}\\
    &=& \frac{1}{4} D^{-\frac 12}_+ \widehat{A}
    \left( e^{ \widehat{A} (T - t)} - e^{-\widehat{A} (T - t)} \right)
    \left( e^{ \widehat{A} (T - t)} + e^{-\widehat{A} (T - t)} \right)^{-1}\\
&&\times\widehat{A}
    \left( e^{\widehat{A} (T - s)} - e^{-\widehat{A} (T - s)} \right)
    \left( e^{ \widehat{A} (T - s)} + e^{-\widehat{A} (T - s)} \right)^{-1}D_+^{\frac 12}\\
&=& A(t) D_+ A(s) D_+.
\end{eqnarray*}

Therefore, we can apply the method of Variation of Parameters to the linear ODE characterizing $B$ to obtain
$$B(t) = -2 e^{-2 \int_t^T A(u) D_+ \,du} \int_t^T
            e^{ 2 \int_s^T A(u) D_+ \,du} A(s)
            \left( V_- + D_- \mathcal D(A(s)) \right) ds.$$

\paragraph{\textbf{Solution for $C$}}
We simply integrate the ODE characterizing $C$ to obtain $\eqref{eq:solC}$.

\end{proof}

From Eqs. \eqref{eq:solA}, \eqref{eq:solB}, and \eqref{eq:solC}, we can deduce the asymptotic behaviour of $(A,B,C)$ when $T$ goes to infinity.\\

\begin{proposition}\label{prop:ode_asymp_solution}
Let $(A,B,C)$ be the solution of the system of ODEs \eqref{system:ABC} with terminal conditions \eqref{system:ABCCT}.\\

Then,
\begin{align*}
A(0) & \stackrel{T\to+\infty}{\longrightarrow} \frac{1}{2}\sqrt{\gamma}\Gamma, \\
B(0) & \stackrel{T\to+\infty}{\longrightarrow} - D_+^{-\frac 12}\widehat{A}\widehat{A}^+ D_+^{-\frac 12} \left(V_-
     + \frac{1}{2}\sqrt{\gamma} D_- \mathcal D(\Gamma)\right),\\
\frac{C(0)}T & \stackrel{T\to+\infty}{\longrightarrow} -\textrm{Tr}\left(D^b_{0,1} + D^a_{0,1}\right) -\frac{1}{2}\sqrt{\gamma} \textrm{Tr}\left(\left(D^b_{1,2} + D^a_{1,2}\right)\Gamma\right) + V_-^\intercal D_+^{-\frac 12}\widehat{A}\widehat{A}^+ D_+^{-\frac 12} \left(V_- + \frac{1}{2}\sqrt{\gamma} D_- \mathcal D(\Gamma)\right) \\
          &\qquad - \frac 18 \gamma \mathcal{D}(\Gamma)^\intercal \left(D^b_{2,3}\! +\! D^a_{2,3}\right) \mathcal{D}(\Gamma) - \frac 12 \left(V_- \!
     +\! \frac{1}{2}\sqrt{\gamma} D_- \mathcal D(\Gamma)\right)^\intercal\! D_+^{-\frac 12}\widehat{A}\widehat{A}^+ D_+^{-\frac 12} \left(V_-
     \!+\! \frac{1}{2}\sqrt{\gamma} D_- \mathcal D(\Gamma)\right)\\
     &\qquad + \frac{1}{2}\sqrt{\gamma} \left(V_-
     + \frac{1}{2}\sqrt{\gamma} D_- \mathcal D(\Gamma)\right)^\intercal D_+^{-\frac 12}\widehat{A}\widehat{A}^+ D_+^{-\frac 12}  D_- \mathcal{D}(\Gamma),
\end{align*}
where $\Gamma = D_+^{-\frac 1 2} \left( D_+^{\frac 12} \Sigma D_+^{\frac 12} \right)^{\frac 12}
D_+^{-\frac 1 2}$ and $\widehat{A}^+$ is the Moore-Penrose generalized inverse of
$\widehat{A}$.
\end{proposition}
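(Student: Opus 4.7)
The plan is to derive the three asymptotics by passing to the limit in the explicit formulas \eqref{eq:solA}, \eqref{eq:solB} and \eqref{eq:solC} of Proposition~\ref{prop:ode_solution}, working throughout in a spectral decomposition of $\widehat{A}\in S_d^+$ and treating strictly positive eigenvalues separately from the zero eigenspace. This is what will ultimately produce the Moore--Penrose factor $\widehat{A}\widehat{A}^+$ in the limits.

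For $A(0)$, I would rewrite \eqref{eq:solA} as $A(t)=\tfrac12 D_+^{-1/2}\widehat{A}\tanh\!\bigl(\widehat{A}(T-t)\bigr)D_+^{-1/2}$ and observe that $\tanh(\lambda T)\to 1$ as $T\to\infty$ for each eigenvalue $\lambda>0$ of $\widehat{A}$, while the prefactor $\widehat{A}$ annihilates its own kernel. This yields $A(0)\to\tfrac12 D_+^{-1/2}\widehat{A}D_+^{-1/2}=\tfrac12\sqrt{\gamma}\,\Gamma$.

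For $B(0)$, the key preliminary computation is the closed-form evaluation $\exp\!\bigl(2\int_t^T A(u)D_+\,du\bigr)=D_+^{-1/2}\cosh(\widehat{A}(T-t))D_+^{1/2}$, which follows from $\tfrac{d}{du}\log\cosh(\widehat{A}(T-u))=-\widehat{A}\tanh(\widehat{A}(T-u))$ together with the conjugation identity $D_+^{-1/2}\exp(M)D_+^{1/2}=\exp(D_+^{-1/2}MD_+^{1/2})$. Plugging into \eqref{eq:solB} and using $D_+^{1/2}A(s)=\tfrac12\widehat{A}\tanh(\widehat{A}(T-s))D_+^{-1/2}$ reduces $B(0)$ to
\[ B(0)=-D_+^{-1/2}\cosh(\widehat{A}T)^{-1}\,\widehat{A}\int_0^T\sinh(\widehat{A}(T-s))\,D_+^{-1/2}\bigl(V_- + D_-\mathcal{D}(A(s))\bigr)\,ds. \]
I then split $V_-+D_-\mathcal{D}(A(s))$ into the constant part $V_-+\tfrac12\sqrt{\gamma}D_-\mathcal{D}(\Gamma)$ plus the remainder $D_-\mathcal{D}(A(s)-\tfrac12\sqrt{\gamma}\Gamma)$. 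For the constant part, on each eigenspace of $\widehat{A}$ associated with an eigenvalue $\lambda>0$ one computes $\cosh(\lambda T)^{-1}\lambda\int_0^T\sinh(\lambda(T-s))\,ds=1-\cosh(\lambda T)^{-1}\to 1$, while the kernel contributes zero through the $\widehat{A}$ factor; the resulting orthogonal projector onto the range of $\widehat{A}$ is precisely $\widehat{A}\widehat{A}^+$. The remainder vanishes by dominated convergence: on eigenspaces with $\lambda>0$, $A(s)-\tfrac12\sqrt{\gamma}\Gamma=O(e^{-2\lambda(T-s)})$, which combined with the $\sinh(\lambda(T-s))/\cosh(\lambda T)$ weight produces an integrand that tends uniformly to $0$ and is dominated uniformly in $T$.

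Finally, since $C(T)=0$, $C(0)/T=-T^{-1}\!\int_0^T C'(t)\,dt$. The change of variable $\tau=T-t$ shows that $A(t)$ depends only on $\tau$, so $T^{-1}\!\int_0^T A(t)\,dt=T^{-1}\!\int_0^T\tilde A(\tau)\,d\tau$ with $\tilde A(\tau)\to A_\infty$, and the Cesàro average converges to $A_\infty$; the same argument applies to $T^{-1}\!\int_0^T B(t)\,dt$ and to the quadratic functionals of $A$ and $B$ appearing in the expression of $C'$ in \eqref{system:ABC}, using uniform convergence of $A(t),B(t)$ to $(A_\infty,B_\infty)$ on any interval $[0,T-\tau(T)]$ with $\tau(T)\to\infty$ slowly enough, together with the boundedness of the boundary layer near $t=T$. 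Substituting $A_\infty=\tfrac12\sqrt{\gamma}\,\Gamma$ and $B_\infty$ into the formula for $C'$ and using the projector identity $(\widehat{A}\widehat{A}^+)^2=\widehat{A}\widehat{A}^+$ to simplify $B_\infty^{\intercal} D_+ B_\infty$ yields exactly the announced expression. The main technical subtlety throughout is the possible rank deficiency of $\widehat{A}$ when $\Sigma$ fails to be positive definite: one cannot naively pass to the limit under the integral in $B(0)$, since a boundary layer near $s=T$ forces one to work eigenspace by eigenspace and exploit the exponential decay rate of $A(s)-A_\infty$ on the range of $\widehat{A}$.
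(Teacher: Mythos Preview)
Your proposal is correct and follows essentially the same route as the paper: spectral decomposition of $\widehat{A}$ to get $A(0)\to\tfrac12\sqrt{\gamma}\,\Gamma$; the same closed-form evaluation of $\exp\bigl(2\int_t^T A(u)D_+\,du\bigr)$ via $\cosh(\widehat{A}(T-t))$ followed by the same split of $V_-+D_-\mathcal D(A(s))$ into the constant $V_-+\tfrac12\sqrt{\gamma}D_-\mathcal D(\Gamma)$ plus a remainder for $B(0)$; and a Ces\`aro argument for $C(0)/T$ where the paper simply declares the step ``straightforward.'' The only noteworthy difference is in handling the remainder $J(T)$ for $B(0)$: the paper does not invoke dominated convergence but instead uses an explicit $\epsilon$--$\tau$ split (choosing $\tau$ so that $\|\tfrac12\sqrt{\gamma}D_-\mathcal D(\Gamma)-D_-\mathcal D(\tilde A(s))\|<\epsilon$ for $s>\tau$, then bounding the two pieces separately), which sidesteps the issue that your ``dominated uniformly in $T$'' claim is not literally a DCT application since the domain $[0,T]$ varies with $T$ and the naive dominating function is not integrable on $[0,\infty)$---your decay estimate $O(e^{-2\lambda(T-s)})$ does give the conclusion, but via direct integration rather than DCT.
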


\begin{proof}
This proof is divided into three parts corresponding to the derivation of the asymptotic
expression for $A$, $B$, and $C$, respectively.\\

\paragraph{\textbf{Asymptotics for $A$}}
Let us recall first that $\widehat{A}
   = \sqrt{\gamma} \left( D_+^{\frac 12} \Sigma D_+^{\frac 12} \right)^{\frac 12} \in S_d^+$. Therefore, there exists an orthogonal matrix $P$ and there exists a diagonal matrix with nonnegative entries $\textup{diag}(\lambda_1, \ldots, \lambda_d)$ such that $\widehat{A} = P \textup{diag}(\lambda_1, \ldots, \lambda_d) P^{\intercal}$. From Eq. \eqref{eq:solA} we have
\[ A(0) = \frac{1}{2} D^{-\frac 12}_+ P \textup{diag}\left(\lambda_1 \tanh(\lambda_1 T), \ldots, \lambda_d \tanh(\lambda_d T)\right) P^\intercal D_+^{-\frac 12}. \]
   As $\lambda \tanh(\lambda T) \stackrel{T \to +\infty}{\longrightarrow}
\begin{cases}
  0, & \text{if } \lambda = 0 \\
  \lambda, & \text{if } \lambda > 0
\end{cases}$, we clearly have
\[ A(0)
\stackrel{T \to +\infty}{\longrightarrow}
   \frac{1}{2} D^{-\frac 12}_+ P \textup{diag}(\lambda_1, \ldots, \lambda_d)  P^\intercal D_+^{-\frac 12} = \frac{1}{2} D^{-\frac 12}_+ \widehat{A} D_+^{-\frac 12}
=  \frac{1}{2} \sqrt{\gamma} \Gamma. \]

\paragraph{\textbf{Asymptotics for $B$}}

From Eq. \eqref{eq:solB}, we have
\begin{eqnarray*}
B(0) &=& -2 e^{-2 \int_0^T A(u) D_+ \,du} \int_0^T
            e^{ 2 \int_s^T A(u) D_+ \,du} A(s)
            \left( V_- + D_- \mathcal D(A(s)) \right) ds\\
&=& -2 e^{-2 \int_0^T \tilde A(u) D_+ \,du} \int_0^T
            e^{ 2 \int_0^s \tilde A(u) D_+ \,du} \tilde A(s)
            \left( V_- + D_- \mathcal D(\tilde A(s)) \right) ds
\end{eqnarray*}
where $\tilde A : t \mapsto \frac{1}{2} D^{-\frac 12}_+ \widehat{A}
    \left( e^{\widehat{A}t} - e^{-\widehat{A}t} \right)
    \left( e^{\widehat{A}t} + e^{-\widehat{A}t} \right)^{-1} D^{-\frac 12}_+$.\\

Using the spectral decomposition of $\widehat A$ introduced in the above paragraph, we see that
$$ 2 \tilde A(u) D_+ = D_+^{-\frac 12} P \textup{diag}\left(\lambda_1 \tanh(\lambda_1 u), \ldots, \lambda_d \tanh(\lambda_d u)\right) P^\intercal D_+^{\frac 12}$$
and therefore, after integration,
$$e^{ 2 \int_0^T \tilde A(u) D_+ \,du} = D_+^{-\frac 12} P \textup{diag}\left(\cosh(\lambda_1 T), \ldots, \cosh(\lambda_d T)\right) P^\intercal D_+^{\frac 12}$$
and
$$e^{ 2 \int_0^s \tilde A(u) D_+ \,du} \tilde A(s) = \frac 12 D_+^{-\frac 12} P \textup{diag}\left(\lambda_1\sinh(\lambda_1 s), \ldots, \lambda_d\sinh(\lambda_d s)\right) P^\intercal D_+^{-\frac 12}$$
Wrapping up, we get that $B(0)$ is equal to 
\begin{eqnarray*}
&& -\int_0^T D_+^{-\frac 12} P \textup{diag}\left(\lambda_1 \frac{\sinh(\lambda_1 s)}{\cosh(\lambda_1 T)}, \ldots, \lambda_d \frac{\sinh(\lambda_d s)}{\cosh(\lambda_d T)}\right) P^\intercal D_+^{-\frac 12} \left( V_- + D_- \mathcal D(\tilde A(s)) \right) ds.\\
&=& -\int_0^T D_+^{-\frac 12} P \textup{diag}\left(\lambda_1 \frac{\sinh(\lambda_1 s)}{\cosh(\lambda_1 T)}, \ldots, \lambda_d \frac{\sinh(\lambda_d s)}{\cosh(\lambda_d T)}\right) P^\intercal D_+^{-\frac 12} \left( V_- + \frac 12 \sqrt{\gamma} D_- \mathcal D(\Gamma) \right) ds \\
&& + \int_0^T D_+^{-\frac 12} P \textup{diag}\left(\lambda_1 \frac{\sinh(\lambda_1 s)}{\cosh(\lambda_1 T)}, \ldots, \lambda_d \frac{\sinh(\lambda_d s)}{\cosh(\lambda_d T)}\right) P^\intercal D_+^{-\frac 12} \left(\frac 12 \sqrt{\gamma} D_- \mathcal D(\Gamma) - D_- \mathcal D(\tilde A(s)) \right) ds\\
&=& D_+^{-\frac 12} P \textup{diag}\left(1 - \frac{1}{\cosh(\lambda_1 T)}, \ldots, 1- \frac{1}{\cosh(\lambda_d T)}\right) P^\intercal D_+^{-\frac 12}  \left( V_- + \frac 12 \sqrt{\gamma} D_- \mathcal D(\Gamma) \right) + J(T),\\
\end{eqnarray*}
where $$J(T) = \int_0^T D_+^{-\frac 12} P \textup{diag}\left(\lambda_1 \frac{\sinh(\lambda_1 s)}{\cosh(\lambda_1 T)}, \ldots, \lambda_d \frac{\sinh(\lambda_d s)}{\cosh(\lambda_d T)}\right) P^\intercal D_+^{-\frac 12} \left(\frac 12 \sqrt{\gamma} D_- \mathcal D(\Gamma) - D_- \mathcal D(\tilde A(s)) \right) ds.$$

Let us prove that $J(T) \stackrel{T \to +\infty}{\longrightarrow} 0$. For that
purpose, let us consider $\epsilon > 0$ and let us notice that there exists
$\tau > 0$ such that $\forall s > \tau, \|\frac 12 \sqrt{\gamma} D_+^{-1} D_-
\mathcal D(\Gamma) - D_+^{-1} D_- \mathcal D(\tilde A(s))\| \le \epsilon$, where
the norm used is the Euclidian norm on $\mathbb{R}^d$. Let us also denote by $M$
the quantity $\sup_{s \ge 0} \|\frac 12 \sqrt{\gamma} D_- \mathcal D(\Gamma) -
D_- \mathcal D(\tilde A(s))\|$.\\

Using the operator norm (still denoted by $\|\cdot\|$) associated with the
Euclidian norm on $\mathbb{R}^d$ and its well-known link with the spectral
radius, we see that for $T > \tau$,
\begin{eqnarray*}
 \!\!&&\|J(T)\|\\
 \!\!&\le&\!\!  \int_0^T \left\|D_+^{-\frac 12} P \textup{diag}\left(\lambda_1 \frac{\sinh(\lambda_1 s)}{\cosh(\lambda_1 T)}, \ldots, \lambda_d \frac{\sinh(\lambda_d s)}{\cosh(\lambda_d T)}\right) P^\intercal D_+^{\frac 12}\right\| \left\|D_+^{-1}\frac 12 \sqrt{\gamma} D_- \mathcal D(\Gamma)\! -\! D_+^{-1}D_- \mathcal D(\tilde A(s))\right\| ds\\
\!\!&\le&\!\! \int_0^T \left(\max_i \lambda_i \frac{\sinh(\lambda_i s)}{\cosh(\lambda_i T)}\right)\left\|\frac 12 \sqrt{\gamma} D_+^{-1}D_- \mathcal D(\Gamma) - D_+^{-1}D_- \mathcal D(\tilde A(s))\right\| ds\\
&\le& M \int_0^\tau \max_i \lambda_i \frac{\sinh(\lambda_i s)}{\cosh(\lambda_i T)}ds + \epsilon \int_\tau^T \max_i \lambda_i \frac{\sinh(\lambda_i s)}{\cosh(\lambda_i T)}ds.\\
\end{eqnarray*}
By defining $\overline{\lambda} = \max\{\lambda_1,
\ldots, \lambda_d\}$ and $\underline{\lambda} = \min\{\lambda_i | \forall i \in \{1, \ldots, d\}, \lambda_i > 0\}$, we have
$$\max_{i \in \{1, \ldots, d\}} \lambda_i \frac{\sinh(\lambda_i s)}{\cosh(\lambda_i T)} \le \max_{i \in \{1, \ldots, d\}} \lambda_i \frac{e^{\lambda_i s}}{e^{\lambda_i T}} =  \max_{i \in \{1, \ldots, d\}, \lambda_i > 0}  \lambda_i e^{-\lambda_i (T-s)} \le \overline{\lambda}  e^{-\underline{\lambda} (T-s)}.$$
Therefore,
\begin{eqnarray*}
&&\limsup_{T \to \infty} \|J(T)\|\\
&\le& M \limsup_{T \to \infty} \overline{\lambda} \left(e^{-\underline{\lambda} (T-\tau)} - e^{-\underline{\lambda} T}\right)
    + \epsilon \limsup_{T \to \infty} \overline{\lambda} \left(1 - e^{-\underline{\lambda} (T -\tau)}\right)\\
&\le& \epsilon
\end{eqnarray*}
which allows to conclude that $J(T) \stackrel{T \to +\infty}{\longrightarrow} 0$.\\

Now, as $P \textup{diag}\left(1 - \frac{1}{\cosh(\lambda_1 T)}, \ldots, 1-
\frac{1}{\cosh(\lambda_d T)}\right) P^\intercal$ converges toward the
orthogonal projector on $\text{Im}(\widehat A)$, which is also
given by $\widehat{A}\widehat{A}^+$, we conclude that
$$ B(0) \stackrel{T\to+\infty}{\longrightarrow} - D_+^{-\frac 12} \widehat{A}\widehat{A}^+ D_+^{-\frac 12} \left(V_-
     + \frac{1}{2}\sqrt{\gamma} D_- \mathcal D(\Gamma)\right).$$

\paragraph{\textbf{Asymptotics for $C$}}

The asymptotic behavior of $C$ is a straightforward consequence of that of $A$ and $B$.\\
\end{proof}

\subsection{From value functions to heuristics and quotes}
\vspace{3mm}
\subsubsection{Motivation for closed-form approximations}
\vspace{3mm}
An approximation in closed form of the value function can be motivated by its numerous applications. In the following, we highlight three of them.\\

First, it can serve as a heuristic evaluation function in reinforcement learning algorithms. Indeed, in problems where the time horizon is too far away to consider full exploration in time, it is often useful, when using Monte-Carlo-based reinforcement learning techniques, to proxy the value of states in a tractable way -- analogous to algorithms such as Deep Blue. The above closed-form approximations can be used for that purpose. Moreover, because the value of $C(t)$ is irrelevant for comparing two states (it vanishes when computing the difference in the value function between two points), it is sometimes possible, especially when $T$ is large, to consider the asymptotic expression $$-\frac 12 \sqrt{\gamma} q^\intercal \Gamma q + q^\intercal D_+^{-\frac 12}\widehat{A}\widehat{A}^+ D_+^{-\frac 12} \left(V_- + \frac{1}{2}\sqrt{\gamma} D_- \mathcal D(\Gamma)\right) $$ instead of $\check\theta(t,q)$.\\

Second, a closed-form approximation of the value function can be used as a starting point in iterative methods designed to compute the value function (value iteration algorithm, actor-critic algorithms, etc.). Unlike for the above use, the value of $C(t)$ matters in that case.\\

A third important application, and the one that initially motivated our paper, is for computing policies (quotes, in our case). Indeed, a policy can be deduced from an approximation of the value function by computing the greedy strategy associated with that approximation. In our market making problem, the quotes obtained in this way are not only easy to compute, but also have the advantage of being easily interpretable.\\

\subsubsection{Quotes: the general case}
\label{subapproxgen}
\vspace{3mm}
The greedy quoting strategy associated with our closed-form proxy of the value function leads to the following quotes for all $i \in \{1, \ldots, d\}$:

\begin{eqnarray*}
	\check\delta^{i,b}_t &=& \tilde{\delta}^{i,b*}_\xi\left(\frac{\check\theta(t,q_{t-}) - \check\theta(t,q_{t-}+z^i e^i)}{z^i}\right)\\
 &=& \tilde{\delta}^{i,b*}_\xi\left(2q_{t-}^\intercal A(t) e^i + z^i {e^i}^\intercal A(t) e^i + {e^i}^\intercal B(t) \right),\\
	\check\delta^{i,a}_t &=& \tilde{\delta}^{i,a*}_\xi\left(\frac{\check\theta(t,q_{t-}) - \check\theta(t,q_{t-}-z^i e^i)}{z^i}\right)\\
&=& \tilde{\delta}^{i,a*}_\xi\left(-2q_{t-}^\intercal A(t) e^i + z^i {e^i}^\intercal A(t) e^i - {e^i}^\intercal B(t) \right),
	\end{eqnarray*}
where $\tilde{\delta}^{i,b*}_\xi$ and $\tilde{\delta}^{i,a*}_\xi$ are given in Theorems \ref{optquotes} and \ref{optquotesbis} for $\xi = \gamma$ and $\xi = 0$ respectively (depending on whether one considers Model A or Model B).\\

The asymptotic regime exhibited in the above paragraphs can then serve to obtain the following simple closed-form approximations:
\begin{eqnarray}
	\breve\delta^{i,b}_t &=& \tilde{\delta}^{i,b*}_\xi\left(\sqrt\gamma q_{t-}^\intercal \Gamma e^i + \frac 12 \sqrt\gamma z^i {e^i}^\intercal \Gamma e^i - {e^i}^\intercal D_+^{-\frac 12}\widehat{A}\widehat{A}^+ D_+^{-\frac 12} \left(V_-
     + \frac{1}{2}\sqrt{\gamma} D_- \mathcal D(\Gamma) \right)\right),\label{asymptb}\\
	\breve\delta^{i,a}_t &=& \tilde{\delta}^{i,a*}_\xi\left(-\sqrt\gamma q_{t-}^\intercal \Gamma e^i + \frac 12 \sqrt\gamma z^i {e^i}^\intercal \Gamma e^i + {e^i}^\intercal D_+^{-\frac 12}\widehat{A}\widehat{A}^+ D_+^{-\frac 12} \left(V_-
     + \frac{1}{2}\sqrt{\gamma} D_- \mathcal D(\Gamma) \right)\right).\label{asympta}
	\end{eqnarray}

It is interesting to notice here that the closed-form approximation of the optimal bid and ask quotes for asset~$i$ depend on the current value of the inventory through the term $q_{t-}^\intercal \Gamma e^i$. Since $\Gamma \in S_d^+$ and the functions $\tilde{\delta}^{i,b*}_\xi$ and $\tilde{\delta}^{i,a*}_\xi$ are monotone, we have that, all else equal, the quotes for asset $i$ depend monotonically on the inventory in asset $i$ (the bid and ask prices decrease (resp. increase) when the inventory is positive (resp. negative)). The dependence on the inventory in other assets is more subtle as it is linked to the matrix $\Gamma=D_+^{-\frac 1 2}(D_+^{\frac 12}\Sigma D_+^{\frac 12})^{\frac 12} D_+^{-\frac 1 2}$ which models the complex interplay between price risk and liquidity risk. Also, as already noted in \cite{gueant2013dealing} the influence of the risk aversion parameter $\gamma$ is ambiguous and depends on the value of inventories. \\

In the case of symmetric intensities, i.e. when $\Lambda^{i,b} = \Lambda^{i,a}$  for all $i \in \{1, \ldots, d\}$, the Hamiltonian functions $H_\xi^{i,b}$ and $H_xi^{i,a}$ given in Eqs. \eqref{eq:Hb} and \eqref{eq:Ha} are identical and thus it is natural to set $\check{H}^{i,b} = \check{H}^{i,a}$ for all $i \in \{1, \ldots, d\}$. In that case, \eqref{asymptb} and \eqref{asympta} simplify into
\begin{eqnarray}
	\breve\delta^{i,b}_t &=& \tilde{\delta}^{i,b*}_\xi\left(\sqrt\gamma q_{t-}^\intercal \Gamma e^i + \frac 12 \sqrt\gamma z^i {e^i}^\intercal \Gamma e^i\right),\label{asymptbsym}\\
	\breve\delta^{i,a}_t &=& \tilde{\delta}^{i,a*}_\xi\left(-\sqrt\gamma q_{t-}^\intercal \Gamma e^i + \frac 12 \sqrt\gamma z^i {e^i}^\intercal \Gamma e^i\right).\label{asymptasym}
	\end{eqnarray}

All these approximations of the optimal quotes can be used directly or as starting points in iterative methods designed to compute the optimal quotes (policy iteration algorithm, actor-critic algorithms, etc.).\\

\subsubsection{Quotes: the case of symmetric exponential intensities}
\vspace{3mm}
Exponential intensity functions play an important role in the optimal market making literature and more generally in the algorithmic trading literature. This shape of intensity functions, initially proposed by Avellaneda and Stoikov in \cite{avellaneda2008high}, leads indeed to simplification because of the form of the associated Hamiltonian functions.\\

If we assume that the intensity functions are given, for all $i \in \{1, \ldots, d\}$, by
\[    \Lambda^{i,b} (\delta) = \Lambda^{i,a} (\delta) = A^{i} e^{-k^{i} \delta}, \quad A^i, k^i > 0,\]
then (see \cite{gueant2017optimal}) the Hamiltonian functions are given, for all $i \in \{1, \ldots, d\}$, by
\[H^{i,b}_\xi (p)
   =  H^{i,a}_\xi (p) = \frac{A^{i}}{k^{i}} C^{i}_\xi \exp(-k^{i} p), \]
 where
 \[ C_\xi^i =
 \begin{cases}
 \left(1+\frac{\xi z^i}{k^{i}}\right)^{-\left(1+\frac{k^{i}}{\xi z^i}\right)}&\mbox{if }\xi>0\\
 e^{-1}&\mbox{if }\xi=0,
 \end{cases} \]
 and the functions $\tilde{\delta}^{i,b*}_\xi$ and $\tilde{\delta}^{i,a*}_\xi$ are given, for all $i \in \{1, \ldots, d\}$, by
\[\tilde\delta^{i,b*}_\xi (p)
   =  \tilde\delta^{i,a*}_\xi (p) = \begin{cases}
 p + \frac{1}{\xi z^i}\log\left(1+\frac{\xi z^i}{k^{i}}\right)&\mbox{if }\xi>0\\
 p + \frac{1}{k^i}&\mbox{if }\xi=0.
 \end{cases} \]
Therefore, if we consider the quadratic approximation of the Hamiltonian functions based on their Taylor expansion around $p=0$ (see Remark \ref{naturalchoice}), then \eqref{asymptbsym} and \eqref{asymptasym} become

\begin{eqnarray*}
	\breve\delta^{i,b}_t &=& \begin{cases}
 \sqrt{\gamma} \left( q_{t-}^\intercal \Gamma e^i + \frac 12 z^i {e^i}^\intercal \Gamma e^i \right)
 +\frac{1}{\gamma z^i}\log\left(1+\frac{\gamma z^i}{k^{i}}\right)  &\mbox{ in Model A,}\\
 \sqrt{\gamma} \left(  q_{t-}^\intercal \Gamma e^i + \frac 12 z^i {e^i}^\intercal \Gamma e^i \right)
 +\frac{1}{k^i} &\mbox{ in  Model B.}
\end{cases}\\
	\breve\delta^{i,a}_t &=& \begin{cases}
-\sqrt{\gamma} \left( q_{t-}^\intercal \Gamma e^i - \frac 12 z^i {e^i}^\intercal \Gamma e^i \right)
+\frac{1}{\gamma z^i}\log\left(1+\frac{\gamma z^i}{k^{i}}\right)  &\mbox{ in Model A,}\\
-\sqrt{\gamma} \left( q_{t-}^\intercal \Gamma e^i - \frac 12 z^i {e^i}^\intercal \Gamma e^i \right)
+\frac{1}{k^i} &\mbox{ in Model B.}
\end{cases}
	\end{eqnarray*}
where $\Gamma=D_+^{-\frac 1 2}(D_+^{\frac 12}\Sigma D_+^{\frac 12})^{\frac 12} D_+^{-\frac 1 2}$ and $D_+=\textup{diag}(2A^1 C_\xi^1 k^1 z^1 ,\ldots, 2A^d C_\xi^d k^d z^d).$\\

It is noteworthy that these approximations of the optimal quotes are affine in the current inventory. In particular, in the case of Model A, when the number of assets is reduced to one (with unitary transaction size), they coincide with the affine closed-form approximations obtained in the paper \cite{gueant2013dealing} of Guéant, Lehalle, and Fernandez-Tapia. Their approximations, however, are obtained in a fundamentally different manner, by using spectral arguments and a continuous approximation of the initial discrete problem.\\

Another useful point of view on the above quoting strategy is by observing the resulting approximations of the optimal (half) bid-ask spread and skew. The approximations of the optimal (half) bid-ask spread and skew for asset $i$ are respectively given by
\begin{eqnarray*}
	\frac{\breve\delta^{i,a}_t + \breve\delta^{i,b}_t}{2} &=& \begin{cases}
 \frac 12 \sqrt{\gamma} z^i {e^i}^\intercal \Gamma e^i
 +\frac{1}{\gamma z^i}\log\left(1+\frac{\gamma z^i}{k^{i}}\right)  &\mbox{ in Model A,}\\
 \frac 12 \sqrt{\gamma} z^i {e^i}^\intercal \Gamma e^i
 +\frac{1}{k^i} &\mbox{ in  Model B,}
\end{cases}
\end{eqnarray*}
and
\begin{eqnarray*}
	\frac{\breve\delta^{i,a}_t - \breve\delta^{i,b}_t}{2} &=&
-\sqrt{\gamma} q_{t-}^\intercal \Gamma e^i \text{ in both Model A and Model B.}
	\end{eqnarray*}

These approximations give us a constant bid-ask spread and a skew linear in the inventory. This translates well the intuition that the skew has the role of inventory risk management, whereas the spread balances the trade-off between frequency of transactions and profit per round-trip trade (the term $\frac{1}{\gamma z^i}\log\left(1+\frac{\gamma z^i}{k^{i}}\right)$ in Model A, which reduces to $\frac{1}{k^i}$ in the case of Model B\footnote{It is noteworthy that in the case of Model B  the bid-ask spread is a nondecreasing function of the risk aversion parameter~$\gamma$.}), plus an additional risk aversion buffer (the term $\frac 12 \sqrt{\gamma} z^i {e^i}^\intercal \Gamma e^i$).\\

On the parameter sensitivity analysis, beyond the above remarks, the reader is referred to \cite{gueant2013dealing} for a comprehensive analysis in the single-asset case. The complex interplay between price risk and liquidity risk expressed through $\Gamma=D_+^{-\frac 1 2}(D_+^{\frac 12}\Sigma D_+^{\frac 12})^{\frac 12} D_+^{-\frac 1 2}$ makes the sensitivity analysis less obvious in the multi-asset case.\\

\section{Beyond the quadratic approximation: towards a correction term} \label{sec:perturbative}

In Section \ref{sec:generaleq}, we approximated the Hamiltonian functions by quadratic functions in order to ``approximate'' the Hamilton-Jacobi equation characterizing the value function and then approximate the value function itself. To go further, we can consider a perturbative approach around our quadratic approximation. This means that we regard the real Hamiltonian functions as small perturbations of the quadratic functions used to approximate them and consider then a first order approximation (the zero-th order approximation being then that obtained in Section \ref{sec:generaleq}).\\

Formally, writing
$$H_\xi^{i,b}(p) = \check{H}^{i,b}(p) + \epsilon h^{i,b}(p), \quad
  H_\xi^{i,a}(p) = \check{H}^{i,a}(p) + \epsilon h^{i,a}(p), \quad
  \text{and } \quad \theta(t,q) = \check{\theta}(t,q) + \epsilon \eta(t,q),$$
and plugging these expressions in Eq. \eqref{sec2:thetagen} in the limit case where $Q^i = +\infty$ for all $i \in \{1, \ldots, d\}$, we obtain
\begingroup
\allowdisplaybreaks
\begin{eqnarray*}
0 &=& \partial_t \theta(t,q) - \frac 12 \gamma q^\intercal \Sigma q + \sum_{i=1}^d z^i H^{i,b}_\xi \left( \frac{\theta(t,q) - \theta(t,q+z^ie^i)}{z^i} \right) + \sum_{i=1}^d z^i H^{i,a}_\xi \left( \frac{\theta(t,q) - \theta(t,q-z^ie^i)}{z^i} \right)\\
    &=& \partial_t \check{\theta}(t,q) - \frac 12 \gamma q^\intercal \Sigma q + \sum_{i=1}^d z^i \check{H}^{i,b} \left( \frac{\check{\theta}(t,q) - \check{\theta}(t,q+z^ie^i)}{z^i} \right) + \sum_{i=1}^d z^i \check{H}^{i,a} \left( \frac{\check{\theta}(t,q) - \check{\theta}(t,q-z^ie^i)}{z^i} \right)\\
    &&+ \epsilon \left(\partial_t \eta(t,q) + \sum_{i=1}^d z^i h^{i,b} \left( \frac{\check{\theta}(t,q) - \check{\theta}(t,q+z^ie^i)}{z^i} \right) + \sum_{i=1}^d z^i h^{i,a} \left( \frac{\check{\theta}(t,q) - \check{\theta}(t,q-z^ie^i)}{z^i} \right)\right.\\
    && \qquad \qquad\quad+ \sum_{i=1}^d \check{H}^{i,b\prime} \left( \frac{\check{\theta}(t,q) - \check{\theta}(t,q+z^ie^i)}{z^i} \right) \left(\eta(t,q) - \eta(t,q+z^ie^i)\right)\\
    &&\left.\qquad\qquad\quad+ \sum_{i=1}^d \check{H}^{i,a\prime} \left( \frac{\check{\theta}(t,q) - \check{\theta}(t,q-z^ie^i)}{z^i} \right)\left(\eta(t,q) - \eta(t,q-z^ie^i)\right)\right) + o(\epsilon)\\
    &=& \epsilon \left(\partial_t \eta(t,q) + \sum_{i=1}^d z^i h^{i,b} \left( \frac{\check{\theta}(t,q) - \check{\theta}(t,q+z^ie^i)}{z^i} \right) + \sum_{i=1}^d z^i h^{i,a} \left( \frac{\check{\theta}(t,q) - \check{\theta}(t,q-z^ie^i)}{z^i} \right)\right.\\
    &&\qquad\qquad\quad+ \sum_{i=1}^d \left(-\check{H}^{i,b\prime} \left( \frac{\check{\theta}(t,q) - \check{\theta}(t,q+z^ie^i)}{z^i} \right)\right) \left(\eta(t,q+z^ie^i) - \eta(t,q)\right)\\
    &&\left.\qquad\qquad\quad+ \sum_{i=1}^d \left(-\check{H}^{i,a\prime} \left( \frac{\check{\theta}(t,q) - \check{\theta}(t,q-z^ie^i)}{z^i} \right)\right)\left(\eta(t,q-z^ie^i) - \eta(t,q)\right)\right) + o(\epsilon).
\end{eqnarray*}
\endgroup

Therefore,
\begin{eqnarray*}
0&=& \partial_t \eta(t,q) + \sum_{i=1}^d z^i h^{i,b} \left( \frac{\check{\theta}(t,q) - \check{\theta}(t,q+z^ie^i)}{z^i} \right) + \sum_{i=1}^d z^i h^{i,a} \left( \frac{\check{\theta}(t,q) - \check{\theta}(t,q-z^ie^i)}{z^i} \right)\\
&& + \sum_{i=1}^d \left(-\check{H}^{i,b\prime} \left( \frac{\check{\theta}(t,q) - \check{\theta}(t,q+z^ie^i)}{z^i} \right)\right) \left(\eta(t,q+z^ie^i) - \eta(t,q)\right)\\
&& + \sum_{i=1}^d \left(-\check{H}^{i,a\prime} \left( \frac{\check{\theta}(t,q) - \check{\theta}(t,q-z^ie^i)}{z^i} \right)\right)\left(\eta(t,q-z^ie^i) - \eta(t,q)\right),
\end{eqnarray*}
and we have the terminal condition $\eta(T,q) = 0$.\\

By Feynman-Kac representation theorem, we have
\begin{equation*}
\eta(t,q)\! =\! \mathbb{E}^{\check{\mathbb{P}}}\!\!\left[\!\int_t^T\! \left(\sum_{i=1}^d z^i h^{i,b} \left( \frac{\check{\theta}(s,q^{t,q}_{s-}) - \check{\theta}(s,q^{t,q}_{s-}+z^ie^i)}{z^i} \right)\! +\! \sum_{i=1}^d z^i h^{i,a} \left( \frac{\check{\theta}(s,q^{t,q}_{s-}) - \check{\theta}(s,q^{t,q}_{s-}-z^ie^i)}{z^i} \right)\right)\! ds\right]\!,
\end{equation*}
where under $\check{\mathbb{P}}$ the process $\left(q_s^{t,q}\right)_{s \in [t,T]}$ satisfies
$$dq^{t,q}_s = \sum_{i=1}^d z^i  (d\check{N}^{i,b}_{s} - d\check{N}^{i,a}_{s}) e^i \quad \text{and} \quad q^{t,q}_t = q,$$ with, for each $i \in \{1, \ldots, d\}$, $\check{N}^{i,b}$ and $\check{N}^{i,a}$ constructed like $N^{i,b}$ and $N^{i,a}$ but with respective intensities given at time $s$ by
$$-\check{H}^{i,b\prime} \left( \frac{\check{\theta}(s,q^{t,q}_{s-}) - \check{\theta}(s,q^{t,q}_{s-}+z^ie^i)}{z^i} \right) \quad \text{and} \quad -\check{H}^{i,a\prime} \left( \frac{\check{\theta}(s,q^{t,q}_{s-}) - \check{\theta}(s,q^{t,q}_{s-}-z^ie^i)}{z^i} \right).$$

In practice, it means that we can approximate $\theta(t,q)$ by $\check{\theta}(t,q)$ plus a correction term that takes the form of an expectation:
\begin{align*}
\theta(t,q) \simeq \check{\theta}(t,q) &+ \mathbb{E}^{\check{\mathbb{P}}}\left[\int_t^T \left(\sum_{i=1}^d z^i \left(H_\xi^{i,b} - \check{H}^{i,b}\right)\left( \frac{\check{\theta}(s,q^{t,q}_{s-}) - \check{\theta}(s,q^{t,q}_{s-}+z^ie^i)}{z^i} \right)\right.\right.\\
& \left.\left.+ \sum_{i=1}^d z^i \left(H_\xi^{i,a} - \check{H}^{i,a}\right)   \left( \frac{\check{\theta}(s,q^{t,q}_{s-}) - \check{\theta}(s,q^{t,q}_{s-}-z^ie^i)}{z^i} \right)\right) ds\right].
\end{align*}

Of course this new approximation is not a closed-form one. However, the correction term can be computed using a Monte-Carlo simulation for a specific $(t,q)$. In particular, it means that upon receiving a request for quote from a client and if time permits (which depends on asset class and market conditions), a market maker can perform a Monte-Carlo simulation to obtain an approximation of the value function at the relevant points to compute a quote that might account more accurately for the liquidity of the requested asset than a quote computed using the closed forms of Section \ref{subapproxgen}.\\

\section{A multi-asset market making model with additional features} \label{sec:extensions}

\subsection{A more general model}
\vspace{3mm}
In Section \ref{baseModel} we presented a multi-asset extension to the classical single-asset market making model of Avellaneda and Stoikov. This extension can itself be extended to encompass important features of OTC markets. In this section we extend our results to a more general multi-asset market making model with drift in prices to model the views of the market maker, client tiering, distributed requested sizes for each asset and each tier, and fixed transaction costs for each asset and each tier.\\

In terms of modeling, the addition of drifts to the price processes is straightforward. Formally, we assume that for each $i \in \{1,\ldots, d\}$, the dynamics of the price process $(S^i_t)_{ t \in \mathbb R_+}$ of asset $i$ is now given by
$$dS^i_t = \mu^i dt +  \sigma^i dW^i_t,$$
where $\sigma^i$ and $(W^i_t)_{t \in \mathbb R_+}$ are defined as in Section \ref{baseModel} and where $\mu^i$ is a real constant. In what follows, we denote by $\mu$ the vector $\mu = \left(\mu^1, \ldots, \mu^d \right)^\intercal$.\\

In OTC markets, market makers often divide their clients into groups, called tiers, for instance because they do not have the same commercial relationship with all clients or because the propensity to transact given a quote differs across clients. In particular, they can answer/stream different quotes to clients from different tiers.\footnote{There can also be tiers to proxy the existence of trading platforms with different clients and/or different costs.} Let us denote here by $N\in\mathbb N^*$ the number of such tiers.\\

In addition to introducing tiers, we can drop the assumption of constant request size per asset and consider instead that, for each asset and each tier, the size of the requests at the bid and at the ask are distributed according to known distributions.\\

Mathematically, the bid and ask quotes that the market maker propose are then modeled by the maps
$$S^{i,n,b} :\! (\omega,t,z) \in \Omega \times [0,T] \times \mathbb R_+^* \mapsto S^{i,n,b}_t(\omega,z) \in \mathbb R \!\!\! \quad \text{and} \!\!\! \quad S^{i,n,a} :\! (\omega,t,z) \in \Omega \times [0,T] \times \mathbb R_+^* \mapsto S^{i,n,a}_t(\omega,z) \in \mathbb R, $$
where $i \in\{1, \ldots, d\}$ is the index of the asset, $n \in \{1, \ldots, N\}$ is the index of the tier, and $z \in \mathbb R_+^*$ is the size of the request (in number of assets). In the same vein as in Section \ref{baseModel}, we introduce\footnote{$\omega$ is omitted in what follows.}   $$\delta^{i,n,b}_t(z) = S^i_t - S^{i,n,b}_t(z) \quad \text{and} \quad \delta^{i,n,a}_t(z) = S^{i,n,a}_t(z) - S^i_t,$$ and the maps $(\delta^{i,n,b}_t(.))_{t\in \mathbb R_+}$ and $(\delta^{i,n,a}_t(.))_{t\in \mathbb R_+}$ are assumed to be $\mathbb F$-predictable and bounded from below by a given constant $-\delta_{\infty }<0$.\footnote{This additional constraint of a fixed lower bound is just a technical one to be able to state theorems in the general case where request sizes are distributed (see \cite{bergault2019size}).} \\

With these new features in mind, we introduce for each asset $i \in \{1,\ldots, d\}$ and for each tier $n \in \{1,\ldots,N\}$ the processes $(N^{i,n,b}_t)_{t \in \mathbb R_+}$ and $(N^{i,n,a}_t)_{t \in \mathbb R_+}$ modeling the number of transactions in asset $i$ with clients from tier~$n$ at the bid and at the ask, respectively. They are $\mathbb R_+^*$-marked point processes, with respective intensity kernels $(\lambda^{i,n,b}_t(dz))_{t \in \mathbb R_+^*}$ and $(\lambda^{i,n,a}_t(dz))_{t \in \mathbb R_+^*}$ given by $$\lambda^{i,n,b}_t(dz) = \Lambda^{i,n,b}(\delta^{i,n,b}_t(z)) \mathds{1}_{\{q^i_{t-} + z \le Q^i \}} \nu^{i,n,b}(dz)\!\! \quad \text{and} \!\! \quad \lambda^{i,n,a}_t(dz) = \Lambda^{i,n,a}(\delta^{i,n,a}_t(z)) \mathds{1}_{\{q^i_{t-} - z \ge -Q^i \}}  \nu^{i,n,a}(dz), $$
where $\nu^{i,n,b}$ and $\nu^{i,n,a}$ are the two probability measures representing the distribution of the requested sizes at the bid and at the ask respectively, for asset $i$ and tier $n$, and where $\Lambda^{i,n,b}$ and $\Lambda^{i,n,a}$ satisfy the same assumptions as those satisfied by the intensity functions of Section \ref{baseModel}.\\

For asset $i \in \{1,\ldots , d\}$, the resulting inventory $(q^i_t)_{t \in \mathbb R_+}$ has dynamics $$dq^i_t = \sum_{n=1}^N \int_{\mathbb R_+^*} z N^{i,n,b}(dt,dz) - \sum_{n=1}^N \int_{\mathbb R_+^*} z N^{i,n,a}(dt,dz),$$ where for each tier $n \in \{1,\ldots, N\}  ,$ $N^{i,n,b}(dt,dz)$ and $N^{i,n,a}(dt,dz)$ are the random measures associated with the processes $(N^{i,n,b}_t)_{t \in \mathbb R_+}$ and $(N^{i,n,a}_t)_{t \in \mathbb R_+}$, respectively.\\

Finally, we consider the addition of fixed transaction costs.\footnote{Proportional transaction costs can be considered in the initial model through shifts in the intensity functions.} For that purpose, we introduce for each asset $i \in \{1,\ldots, d\}$ and for each  tier $n \in \{1,\ldots,N\}$ two real numbers $c^{i,n,b}$ and $c^{i,n,a}$ modelling the fixed cost of a transaction in asset $i$ with a client from tier $n$, at the bid and at the ask, respectively.\\

The resulting cash process $(X_t)_{t \in \mathbb R_+}$ has, consequently, the following dynamics:
$$dX_t = \sum_{i=1}^d \sum_{n=1}^N \int_{\mathbb{R}_+^*} \left[ \left( \delta^{i,n,b}_t(z)z - c^{i,n,b} \right) N^{i,n,b}(dt,dz) +   \left( \delta^{i,n,a}_t(z)z - c^{i,n,a} \right) N^{i,n,a}(dt,dz) \right] -  \sum_{i=1}^d  S^{i}_t dq^i_t.$$

\subsection{The Hamilton-Jacobi equation}
\vspace{3mm}
In this new setting, one can again show that the two optimization problems introduced in Section \ref{baseModel} boil down to the resolution of a Hamilton-Jacobi equation of the form
\begin{align}
\label{HJfeat}
 0 = &\ \partial_t \theta(t,q) + \mu^\intercal q - \frac{\gamma}{2} q^\intercal \Sigma q \\
 &\ + \sum_{i=1}^d \sum_{n=1}^N \int_{\mathbb R_+^*} \mathds{1}_{\{ q^i + z \le Q^i\}} z H^{i,n,b}_{\xi} \left(z,  \frac{\theta(t,q) - \theta(t,q+ze^i) + c^{i,n,b}}{z} \right) \nu^{i,n,b}(dz) \nonumber\\
 & \ + \sum_{i=1}^d \sum_{n=1}^N \int_{\mathbb R_+^*} \mathds{1}_{\{ q^i - z \ge -Q^i \}} z H^{i,n,a}_{\xi} \left(z, \frac{\theta(t,q) - \theta(t,q-ze^i)  + c^{i,n,a}}{z} \right) \nu^{i,n,a}(dz),\nonumber
\end{align}
with terminal condition \begin{align}
    \label{sec5:thetagenCT}
    \theta(T,q) = 0,
\end{align} where $\xi = \gamma$ in the case of Model A and $\xi=0$ in the case of Model B, and where the functions $H^{i,n,b}_{\xi}$ and $H^{i,n,a}_{\xi}$ are defined by
\[ H^{i,n,b}_\xi(z,p):=
	\begin{cases}
		\underset{ \delta>-\delta_{\infty}}{\sup}\frac{\Lambda^{i,n,b}(\delta)}{\xi z}(1-\exp(-\xi z(\delta-p)))&\mbox{if } \xi>0,\\
		\underset{ \delta>-\delta_{\infty}}{\sup}\Lambda^{i,n,b}(\delta)(\delta-p)&\mbox{if }\xi=0
	\end{cases} \]
and
\[ H^{i,n,a}_\xi(z,p):=
	\begin{cases}
		\underset{\delta>-\delta_{\infty}}{\sup}\frac{\Lambda^{i,n,a}(\delta)}{\xi z}(1-\exp(-\xi z(\delta-p)))&\mbox{if } \xi>0,\\
		\underset{ \delta>-\delta_{\infty}}{\sup}\Lambda^{i,n,a}(\delta)(\delta-p)&\mbox{if }\xi=0.
	\end{cases} \]

\begin{remark}
When $\xi=0$, the dependence in $z$ of the Hamiltonian functions vanishes. Nevertheless, we keep the variable $z$ for the sake of consistency.\\
\end{remark}

Following a method similar to that developed in \cite{bergault2019size}, we can show that, for a given $\xi \ge 0,$ there exists a unique bounded function $\theta:[0,T]\times \prod_{i=1}^d [-Q^i, Q^i]\to \Rr$, $C^1$ in time, solution of Eq.~\eqref{HJfeat} with terminal condition \eqref{sec5:thetagenCT}.\\

Moreover, under the (mild) assumption that the measures $\nu^{i,n,b}$ and $\nu^{i,n,a}$ have moments of order $2$, a classical verification argument enables to go from $\theta$ to optimal controls for both Model A and Model B. The optimal quotes as functions of $\theta$ are given by the two theorems that follow.\\

In the case of Model A, the result is the following:
\begin{theorem}
\label{optquotes5}
	Let us consider the solution $\theta$ of Eq.~\eqref{HJfeat} with terminal condition \eqref{sec5:thetagenCT}, for $\xi=\gamma$.\\
	
	Then, for $i \in \{1, \ldots, d\}$ and $n \in \{1,\ldots,N\}$, the optimal bid and ask quotes as functions of the trade size $z$, $S^{i,n,b}_t(z) = S^i_t - \delta^{i,n,b*}_t(z)$ and $S^{i,n,a}_t(z) = S^i_t + \delta^{i,n,a*}_t(z)$ in Model A are characterized by\begin{align*}
	\begin{split}
	\delta^{i,n,b*}_t(z) &= \tilde{\delta}^{i,n,b*}_\gamma\left(z,\frac{\theta(t,q_{t-}) - \theta(t,q_{t-}+z e^i)  + c^{i,n,b}}{z}\right) \quad \text{for } q_{t-}+z e^i \in \prod_{j=1}^d[-Q^j,Q^j],\\
	\delta^{i,n,a*}_t(z) &= \tilde{\delta}^{i,n,a*}_\gamma\left(z,\frac{\theta(t,q_{t-}) - \theta(t,q_{t-}-z e^i) + c^{i,n,a}}{z}\right) \quad \text{for } q_{t-}-z e^i \in \prod_{j=1}^d[-Q^j,Q^j],\\
	\end{split}
	\end{align*}where the functions $\tilde{\delta}^{i,n,b*}_\gamma(\cdot,\cdot)$ and $\tilde{\delta}_\gamma^{i,n,a*}(\cdot, \cdot)$ are defined by
	\begin{align*}
	\tilde{\delta}^{i,n,b*}_\gamma(z,p) &= {\Lambda^{i,n,b}}^{-1}\left(\gamma z H^{i,n,b}_{\gamma}(z,p) - {\partial_p H_{\gamma}^{i,n,b}}(z,p)\right) \vee (-\delta_\infty) ,\\
	\tilde{\delta}^{i,n,a*}_\gamma(z,p) &= {\Lambda^{i,n,a}}^{-1}\left(\gamma z H^{i,n,a}_{\gamma}(z,p) - {\partial_pH_{\gamma}^{i,n,a}}(z,p)\right) \vee (-\delta_\infty).
	\end{align*}	
\end{theorem}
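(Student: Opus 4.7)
The plan is to prove Theorem \ref{optquotes5} by a classical verification argument, starting from the ansatz $u(t,x,q,S) = -\exp\bigl(-\gamma(x + q^\intercal S + \theta(t,q))\bigr)$ for the candidate value function of Model A. Plugging this ansatz into the HJB equation associated with Model A (which in the present extended setting involves integrals against the size distributions $\nu^{i,n,b}$, $\nu^{i,n,a}$ and fixed costs $c^{i,n,b}$, $c^{i,n,a}$) one sees, after factoring out the exponential and using the drift term $\mu^\intercal q$ coming from the $\mu^i\,dt$ term in $dS^i$, that $u$ solves the HJB equation if and only if $\theta$ solves Eq.~\eqref{HJfeat} with terminal condition \eqref{sec5:thetagenCT}. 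Existence and uniqueness of such a $\theta$ follow from the result cited in the paragraph preceding the statement, so the ansatz is well-defined.

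I would next identify the maximizers inside each supremum of the HJB equation. Fixing $(i,n,z)$, with $p=(\theta(t,q)-\theta(t,q+ze^i)+c^{i,n,b})/z$, the bid optimum solves $\sup_{\delta>-\delta_\infty}\Lambda^{i,n,b}(\delta)(1-e^{-\gamma z(\delta-p)})/(\gamma z)$. The first-order condition reads
\begin{equation*}
{\Lambda^{i,n,b}}'(\delta)\bigl(1-e^{-\gamma z(\delta-p)}\bigr) + \gamma z\,\Lambda^{i,n,b}(\delta)\,e^{-\gamma z(\delta-p)} = 0,
\end{equation*}
and combining the envelope identities $\gamma z H^{i,n,b}_\gamma(z,p)=\Lambda^{i,n,b}(\delta^*)(1-e^{-\gamma z(\delta^*-p)})$ and $\partial_p H^{i,n,b}_\gamma(z,p)=-\Lambda^{i,n,b}(\delta^*)\,e^{-\gamma z(\delta^*-p)}$ at the optimum $\delta^*$ yields $\Lambda^{i,n,b}(\delta^*)=\gamma z H^{i,n,b}_\gamma(z,p)-\partial_p H^{i,n,b}_\gamma(z,p)$, which inverts to give $\delta^*=\tilde{\delta}^{i,n,b*}_\gamma(z,p)\vee(-\delta_\infty)$. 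Uniqueness and the fact that this critical point is indeed a maximizer are ensured by the assumption $\sup_\delta \Lambda^{i,n,b}(\delta){\Lambda^{i,n,b}}''(\delta)/({\Lambda^{i,n,b}}'(\delta))^2 < 2$, which implies strict quasi-concavity of the objective. The ask side is treated symmetrically.

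The verification step itself proceeds by applying Itô's formula for jump-diffusions to $u(t,X_t,q_t,S_t)$ along an arbitrary admissible strategy $\bigl((\delta^{i,n,b}_t(\cdot))_t,(\delta^{i,n,a}_t(\cdot))_t\bigr)$. The continuous part combines $\partial_t u$ with the Brownian contributions of $S$ and the drift $\mu^\intercal q$, while the jumps contribute compensated integrals against the marked point processes $N^{i,n,b}(dt,dz)$ and $N^{i,n,a}(dt,dz)$ whose intensity kernels depend on the controls. Using the HJB equation as an inequality in general and as an equality at the candidate quotes, one obtains that $\bigl(u(t,X_t,q_t,S_t)\bigr)_t$ is a supermartingale for any admissible strategy and a true martingale for the strategy prescribed by the theorem; taking expectations and passing to the limit then gives the optimality.

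The main obstacle is making this last step rigorous in the general setting of distributed trade sizes and tiered clients: one must justify interchanging $\mathbb{E}$ and the stochastic integrals against the random measures $N^{i,n,b}(dt,dz)$, $N^{i,n,a}(dt,dz)$, control uniform exponential moments of $u(t,X_t,q_t,S_t)$ to rule out local-martingale pathologies, and handle carefully the boundary indicators $\mathds{1}_{\{q^i\pm z\in[-Q^i,Q^i]\}}$ at the edge of the admissible inventory set. These technicalities are the extension of the ones carried out in \cite{bergault2019size} and \cite{bergault2020algorithmic}, to which the proof ultimately reduces by a standard localization argument exploiting the fact that $q$ stays in a compact set and the quotes are bounded below by $-\delta_\infty$.
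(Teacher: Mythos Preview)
Your proposal is correct and follows precisely the classical verification approach that the paper itself invokes; note that the paper does not actually prove Theorem~\ref{optquotes5} in the text, but merely states that ``a classical verification argument enables to go from $\theta$ to optimal controls'' and refers the reader to \cite{bergault2020algorithmic} for details. Your sketch therefore fills in exactly what the paper outsources: the ansatz reduction to Eq.~\eqref{HJfeat}, the envelope-theorem identification of the pointwise maximizers (your computation $\Lambda^{i,n,b}(\delta^*)=\gamma z H^{i,n,b}_\gamma(z,p)-\partial_p H^{i,n,b}_\gamma(z,p)$ is correct), and the supermartingale/martingale verification via It\^o for marked point processes, with the localization and integrability issues deferred to the cited references just as the paper does.
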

For Model B, the result is the following:
\begin{theorem}
\label{optquotesbis5}
	Let us consider the solution $\theta$ of Eq.~\eqref{HJfeat} with terminal condition \eqref{sec5:thetagenCT}, for $\xi=0$.\\
	
	Then, for $i \in \{1, \ldots, d\}$ and $n \in \{1,\ldots,N\}$, the optimal bid and ask quotes as functions of the trade size $z$, $S^{i,n,b}_t(z) = S^i_t - \delta^{i,n,b*}_t(z)$ and $S^{i,n,a}_t(z) = S^i_t + \delta^{i,n,a*}_t(z)$ in  Model B are characterized by\begin{align*}
		\begin{split}
	\delta^{i,n,b*}_t(z) &= \tilde{\delta}^{i,n,b*}_0\left(z,\frac{\theta(t,q_{t-}) - \theta(t,q_{t-}+z e^i)  + c^{i,n,b}}{z}\right) \quad \text{for } q_{t-}+z e^i \in \prod_{j=1}^d[-Q^j,Q^j],\\
	 \delta^{i,n,a*}_t(z) &= \tilde{\delta}^{i,n,a*}_0\left(z,\frac{\theta(t,q_{t-}) - \theta(t,q_{t-}-z e^i)  + c^{i,n,a}}{z}\right) \quad \text{for } q_{t-}-z e^i \in \prod_{j=1}^d[-Q^j,Q^j],\\
	 \end{split}
	\end{align*}where the functions $\tilde{\delta}^{i,n,b*}_0(\cdot,\cdot)$ and $\tilde{\delta}_0^{i,n,a*}(\cdot,\cdot)$ are defined by
	$$\tilde{\delta}^{i,n,b*}_0(z,p) = {\Lambda^{i,n,b}}^{-1}\left(- {\partial_pH_{0}^{i,n,b}}(z,p)\right)\vee (-\delta_\infty) \text{ and } \tilde{\delta}^{i,n,a*}_0(z,p) = {\Lambda^{i,n,a}}^{-1}\left(- {\partial_pH_{0}^{i,n,a}}(z,p)\right)\vee (-\delta_\infty).$$
\end{theorem}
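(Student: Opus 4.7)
The proof follows the verification paradigm, extending the argument used for Theorems \ref{optquotes} and \ref{optquotesbis} in the base model. I would proceed in three main steps.

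First, I would write down the HJB equation associated with Model B in this generalized setting. It is the natural analog of Eq.~\eqref{HJB_B}, but with (i) an extra drift term $\mu^\intercal \nabla_S v$ from the new price dynamics, (ii) a double sum $\sum_{i=1}^d \sum_{n=1}^N$ with an integral against each tier-specific size distribution $\nu^{i,n,b}(dz)$ or $\nu^{i,n,a}(dz)$, (iii) the fixed transaction costs $c^{i,n,b}, c^{i,n,a}$ entering the cash jump, and (iv) a supremum taken over $\delta > -\delta_\infty$ instead of the whole real line. Plugging in the ansatz $v(t,x,q,S) = x + \sum_{i=1}^d q^i S^i + \theta(t,q)$, the terms linear in $S$ cancel the drift contribution ($\mu^\intercal q$ appears), the quadratic penalty $\tfrac{1}{2}\gamma q^\intercal \Sigma q$ comes from the running cost, and the Hamiltonian terms collapse to $z H^{i,n,b}_0$ and $z H^{i,n,a}_0$ evaluated at $(z,(\theta(t,q)-\theta(t,q\pm z e^i)+c^{i,n,b/a})/z)$. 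This recovers Eq.~\eqref{HJfeat} exactly, and the terminal condition $v(T,x,q,S) = x + \sum_i q^i S^i$ gives $\theta(T,\cdot)=0$.

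Second, I would identify the candidate optimal controls. For the bid on asset $i$ with tier $n$ and size $z$, one considers
\[
\sup_{\delta > -\delta_\infty} \Lambda^{i,n,b}(\delta)\left(\delta - p\right), \qquad p = \frac{\theta(t,q)-\theta(t,q+ze^i)+c^{i,n,b}}{z}.
\]
The assumption $\sup_\delta \Lambda^{i,n,b}\,{\Lambda^{i,n,b}}''/({\Lambda^{i,n,b}}')^2 < 2$ ensures strict concavity of $\delta \mapsto \Lambda^{i,n,b}(\delta)(\delta-p)$, so the unconstrained maximizer is unique and characterized by the first-order condition $\Lambda^{i,n,b}(\delta^*) + {\Lambda^{i,n,b}}'(\delta^*)(\delta^*-p) = 0$. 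Combined with the envelope identity $\partial_p H^{i,n,b}_0(z,p) = -\Lambda^{i,n,b}(\delta^*)$, this gives $\delta^* = (\Lambda^{i,n,b})^{-1}(-\partial_p H^{i,n,b}_0(z,p))$, which is the formula in the theorem up to the truncation at $-\delta_\infty$ (enforced by the $\vee(-\delta_\infty)$ operation when the unconstrained optimum is infeasible). The ask side is analogous.

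Third, I would run the verification argument: apply It\^o's formula to $v(t,X_t,q_t,S_t)$ between $t$ and $T$ under an arbitrary admissible strategy $(\delta^{i,n,b},\delta^{i,n,a})$. The drift produced by It\^o coincides with the left-hand side of the HJB equation plus a non-positive residual equal to the difference between the pointwise supremum and the value at the control used. Using the HJ equation satisfied by $\theta$, this shows that $v(t,X_t,q_t,S_t)$ plus the running penalty is a supermartingale, and a true martingale under the feedback controls $\delta^{i,n,b*}, \delta^{i,n,a*}$ identified in step two. Taking expectations and using $\theta(T,\cdot)=0$ yields the optimality. The main obstacle is technical: one must verify that the local martingale parts (the compensated marked point processes $N^{i,n,b}(dt,dz)-\lambda^{i,n,b}_t(dz)dt$ and the Brownian parts involving $S$) are true martingales. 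This is where the inventory constraints ($|q^i|\le Q^i$), the boundedness of $\theta$ (which follows from Eq.~\eqref{HJfeat} on a compact inventory grid), the lower bound $\delta^{i,n,b},\delta^{i,n,a}\ge -\delta_\infty$, and the decay of the intensities $\Lambda^{i,n,b/a}$ at infinity are combined to produce the required uniform integrability — details that are carried out in \cite{bergault2020algorithmic}.
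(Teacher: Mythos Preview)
Your proposal is correct and matches the paper's approach: the paper does not give an in-text proof of this theorem but simply states that ``a classical verification argument enables to go from $\theta$ to optimal controls'' and refers to \cite{bergault2020algorithmic} for the details, which is precisely the three-step verification scheme (HJB reduction via the ansatz, identification of the pointwise maximizer through the first-order/envelope condition, and the supermartingale/martingale verification) that you have outlined.
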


\subsection{Quadratic approximation}
\vspace{3mm}
As before, let us replace for all $i \in \{1, \ldots, d\}$ and $n\in \{1,\ldots,N\}$, the Hamiltonian functions $H_\xi^{i,n,b}$ and $H_ \xi^{i,n,a}$ by the functions
$$\check{H}^{i,n,b}: (z,p) \mapsto  \alpha^{i,n,b}_0(z) + \alpha^{i,n,b}_1(z) p + \frac 12 \alpha^{i,n,b}_2(z) p^2 \textrm{ and } \check{H}^{i,n,a}: (z,p) \mapsto  \alpha^{i,n,a}_0(z) + \alpha^{i,n,a}_1(z) p + \frac 12 \alpha^{i,n,a}_2(z) p^2.$$

\begin{remark} \label{remp0gen}
Here, $\alpha^{i,n,b}_j$ and $\alpha^{i,n,a}_j$ (for $j \in \{0,1,2\}$) are functions of $z.$ A natural choice for the functions $(\check{H}^{i,n,b})_{(i,n) \in \{1, \ldots, d\} \times  \{1,\ldots,N\}}$ and $(\check{H}^{i,n,a})_{(i,n) \in \{1, \ldots, d\} \times \{1,\ldots,N\}}$ derives from Taylor expansions around $p=0$. In that case,
$$\forall i \in \{1, \ldots, d\}, \forall n\in \{1,\ldots,N\}, \forall j \in \{0,1,2\},\quad  \alpha^{i,n,b}_j(z) = \partial^j_p{H^{i,n,b}_\xi}(z,0) \quad \textrm{and} \quad \alpha^{i,n,a}_j = \partial^j_p{H^{i,n,a}_\xi}(z,0).$$
\end{remark}

If we consider the limit case where $ Q^i = +\infty$ for all $i \in \{1, \ldots, d\}$, Eq. \eqref{HJfeat} then becomes
\begin{align}
 0 = &\ \partial_t \check{\theta}(t,q) + \mu^\intercal q - \frac{\gamma}{2} q^\intercal \Sigma q + \sum_{i=1}^d \sum_{n=1}^N  \left(\int_{\mathbb R_+^*} z\alpha^{i,n,b}_0(z) \nu^{i,n,b}(dz) + \int_{\mathbb R_+^*} z\alpha^{i,n,a}_0(z) \nu^{i,n,a}(dz) \right) \nonumber\\
 &\ + \sum_{i=1}^d \sum_{n=1}^N \Bigg(\int_{\mathbb R_+^*}  \alpha^{i,n,b}_1(z) \left(  \check{\theta}(t,q) - \check{\theta}(t,q+ze^i)  + c^{i,n,b}\right) \nu^{i,n,b}(dz) \nonumber\\
 & \qquad \qquad \qquad \qquad +  \int_{\mathbb R_+^*}\alpha^{i,n,a}_1(z) \left( \check{\theta}(t,q) - \check{\theta}(t,q-ze^i)  + c^{i,n,a} \right) \nu^{i,n,a}(dz)\Bigg)\nonumber\\
 &\ +\frac 12  \sum_{i=1}^d \sum_{n=1}^N \Bigg(\int_{\mathbb R_+^*} \frac 1z  \alpha^{i,n,b}_2(z) \left(  \check{\theta}(t,q) - \check{\theta}(t,q+ze^i)  + c^{i,n,b}\right)^2 \nu^{i,n,b}(dz) \nonumber\\
 & \qquad \qquad \qquad \qquad +  \int_{\mathbb R_+^*} \frac 1z \alpha^{i,n,a}_2(z) \left( \check{\theta}(t,q) - \check{\theta}(t,q-ze^i)  + c^{i,n,a} \right)^2 \nu^{i,n,a}(dz)\Bigg),\label{HJfeatapprox}
\end{align}

with terminal condition
\begin{align}
\label{genCT}
\check \theta (T,q) = 0.
\end{align}

Using the same ansatz as in Section \ref{sec:generaleq}, we obtain the following result (we omit the proof as it follows the same logic as for that of Proposition \ref{sec3:ansatz}):

\begin{proposition}\label{prop:GENansatz}
Let us introduce for all $i \in \{1, \ldots, d\}, n \in \{1,\ldots,N\}, j \in \{0, 1, 2\}, k \in \mathbb{N}$, the following constants:

$$\Delta^{i,n,b}_{j,k} = \int_{\mathbb R_+^*}\!\! z^k \alpha^{i,n,b}_j(z) \nu^{i,n,b}(dz) \quad \text{and} \quad \Delta^{i,n,a}_{j,k} = \int_{\mathbb R_+^*}\!\! z^k \alpha^{i,n,a}_j(z) \nu^{i,n,a}(dz),$$
$$V^{b}_{j,k} = \left(\sum_{n=1}^N{\Delta^{1,n,b}_{j,k}},\ldots,\sum_{n=1}^N{\Delta^{d,n,b}_{j,k}} \right)^\intercal \quad \text{and} \quad V^{a}_{j,k} = \left(\sum_{n=1}^N{\Delta^{1,n,a}_{j,k}},\ldots,\sum_{n=1}^N{\Delta^{d,n,a}_{j,k}} \right)^\intercal,$$
$${\tilde{V}^{b}_{j,k}} = \left(\sum_{n=1}^N c^{1,n,b}{\Delta^{1,n,b}_{j,k}},\ldots,\sum_{n=1}^N c^{d,n,b}{\Delta^{d,n,b}_{j,k}} \right)^\intercal \quad \text{and} \quad {\tilde{V}^{a}_{j,k}} = \left(\sum_{n=1}^N c^{1,n,a}{\Delta^{1,n,a}_{j,k}},\ldots,\sum_{n=1}^N c^{d,n,a}{\Delta^{d,n,a}_{j,k}} \right)^\intercal ,$$
$$\chi^{b}_{j,k} = \sum_{i=1}^d \sum_{n=1}^N{\Delta^{i,n,b}_{j,k}} \quad \text{and} \quad \chi^{a}_{j,k} = \sum_{i=1}^d \sum_{n=1}^N{\Delta^{i,n,a}_{j,k}} ,$$
$${\tilde{\chi}^{b}_{j,k}} = \sum_{i=1}^d \sum_{n=1}^N c^{i,n,b}{\Delta^{i,n,b}_{j,k}} \quad \text{and} \quad {\tilde{\chi}^{a}_{j,k}} = \sum_{i=1}^d \sum_{n=1}^N c^{i,n,a}{\Delta^{i,n,a}_{j,k}},$$
$${\hat{\chi}^{b}_{j,k}} = \sum_{i=1}^d \sum_{n=1}^N (c^{i,n,b})^2 {\Delta^{i,n,b}_{j,k}} \quad \text{and} \quad {\hat{\chi}^{a}_{j,k}} = \sum_{i=1}^d \sum_{n=1}^N (c^{i,n,a})^2 {\Delta^{i,n,a}_{j,k}},$$
and
$$D^b_{j,k} = \textup{diag} \left(\sum_{n=1}^N{\Delta^{1,n,b}_{j,k}},\ldots,\sum_{n=1}^N{\Delta^{d,n,b}_{j,k}}\right) \quad \text{and} \quad D^a_{j,k} = \textup{diag} \left(\sum_{n=1}^N{\Delta^{1,n,a}_{j,k}},\ldots,\sum_{n=1}^N{\Delta^{d,n,a}_{j,k}} \right).$$

Let us consider three differentiable functions $A: [0,T] \to S_d^{+}$, $B: [0,T] \to \mathbb{R}^d$, and $C: [0,T] \to \mathbb R$ solutions of the system of ordinary differential equations
\begin{equation}\label{system:ABCgeneral}
\begin{cases}
\displaystyle
A'(t) =&\ 2A(t) \left(D^b_{2,1} + D^a_{2,1} \right) A(t)  - \frac 12 \gamma \Sigma , \\
B'(t) =&\ \mu + 2A(t) \left( {V^{b}_{1,1}} - {V^{a}_{1,1}} \right) + 2 A(t) \left(D^{b}_{2,2} - D^{a}_{2,2}  \right) \mathcal D (A(t)) \\
    &\ + 2 A(t) \left(D^{b}_{2,1} + D^{a}_{2,1} \right) B(t)  + 2 A(t) \left( {\tilde{V}^{b}_{2,0}} - {\tilde{V}^{a}_{2,0}} \right),\\
C'(t) = &\ \textrm{Tr}\left( D^b_{0,1} + D^a_{0,1} \right) +\textrm{Tr}\left(\left(D^b_{1,2} + D^a_{1,2}\right) A(t)\right)  + \left( {V^{b}_{1,1}} - {V^{a}_{1,1}} \right)^\intercal B(t)\\
    &\  + \left( \tilde{\chi}^{b}_{1,0} + \tilde{\chi}^{a}_{1,0} \right) + \frac 12 \mathcal{D}(A(t))^\intercal \left(D^b_{2,3} + D^a_{2,3}\right) \mathcal{D}(A(t))\\
    &\ + B(t)^\intercal \left(D^b_{2,2} - D^a_{2,2}\right) \mathcal{D}(A(t)) + \left( {\tilde{V}^{b}_{2,1}} + {\tilde{V }^{a}_{2,1}} \right)^\intercal \mathcal D(A(t)) \\
    &\  + \frac 12 B(t)^\intercal \left({D^{b}_{2,1}} + {D^{a}_{2,1}} \right) B(t)  + \left(  {\tilde{V}^{b}_{2,0}} - {\tilde{V}^{a}_{2,0}} \right)^\intercal B(t)  \\
&\  + \frac 12 \left(\hat{\chi}^{b}_{2,0} + \hat{\chi}^{a}_{2,0} \right).
\end{cases}
\end{equation}
with terminal conditions
\begin{equation}
\label{ABCgenTC}
    A(T) = 0, B(T) = 0, \textrm{ and } C(T)=0.
\end{equation}

Then $\check{\theta}: (t,q) \in [0,T] \times \mathbb R^d \mapsto -q^\intercal A(t) q - q^\intercal B(t) -  C(t)$ is solution of Eq. \eqref{HJfeatapprox} with terminal condition~\eqref{genCT}.
\end{proposition}

We can now solve \eqref{system:ABCgeneral} with terminal conditions \eqref{ABCgenTC} in closed form. This is the purpose of the following proposition whose proof is omitted (see Proposition \ref{prop:ode_solution} for a similar proof).\\

\begin{proposition}\label{prop:GENode_solution}
Assume $\sum_{n=1}^N \Delta^{i,n, b}_{2,1} + \Delta^{i,n, a}_{2,1} > 0$ for all $i \in \{ 1, \ldots, d
\}$. The system of ODEs
\eqref{system:ABCgeneral} with terminal conditions \eqref{ABCgenTC} admits the
unique solution
\begin{align*}
A(t) & = \frac{1}{2} D^{-\frac 12}_+ \widehat{A}
    \left( e^{ \widehat{A} (T - t)} - e^{-\widehat{A} (T - t)} \right)
    \left( e^{ \widehat{A} (T - t)} + e^{-\widehat{A} (T - t)} \right)^{-1} D_+^{-\frac 12}, \\
B(t) & = -2 e^{-2 \int_t^T A(u) D_+ \,du} \int_t^T
    e^{2 \int_s^T A(u) D_+ \,du}\left(\frac 12 \mu +  A(s) \left(V_- +  \tilde V_- + D_- \mathcal D(A(s)) \right) \right) ds, \\
C(t) & =  -\textrm{Tr}\left(D^b_{0,1} + D^a_{0,1}\right) (T - t)
       -  \textrm{Tr}\left(\left(D^b_{1,2} + D^a_{1,2}\right) \int_t^T A(s) ds \right)
       -  V_-^\intercal  \int_t^T B(s) ds \\
     & \quad  - \frac 12 \int_t^T \mathcal{D}(A(s))^\intercal \left(D^b_{2,3}
         + D^a_{2,3}\right) \mathcal{D}(A(s)) ds
       - \frac 12 \int_t^T B(s)^\intercal D_+ B(s) ds \\
     & \quad - \int_t^T B(s)^\intercal D_- \mathcal{D}(A(s)) ds - \left( \tilde{\chi}^{b}_{1,0} + \tilde{\chi}^{a}_{1,0} \right) (T-t) - \frac 12 \left(\hat{\chi}^{b}_{2,0} + \hat{\chi}^{a}_{2,0} \right)(T-t) \\
     & \quad - \left( {\tilde{V}^{b}_{2,1}} + {\tilde{V }^{a}_{2,1}} \right)^\intercal \int_t^T \mathcal D(A(s))ds,
\end{align*}
where
\[ D_+ = D^b_{2,1} + D^a_{2,1}, \quad
   D_- = D_{2,2}^b - D_{2,2}^a, \quad
   V_- = V_{1,1}^b - V_{1,1}^a, \quad \tilde V_- =  \tilde V^b_{2,0} - \tilde V^a_{2,0},\text{ and }
   \widehat{A}
   = \sqrt{\gamma} \left( D_+^{\frac 12} \Sigma D_+^{\frac 12} \right)^{\frac 12}.\]
\end{proposition}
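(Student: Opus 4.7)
The plan is to mimic the triangular strategy used in the proof of Proposition \ref{prop:ode_solution}: solve the Riccati equation for $A$ first, then use variation of parameters for the resulting linear ODE in $B$, and finally integrate the explicit expression for $C'$. The three-step structure is available precisely because the system \eqref{system:ABCgeneral} is lower triangular in $(A,B,C)$, and the extra quantities introduced by the generalized model (the drift $\mu$, the transaction-cost vectors $\tilde V^{b/a}_{\cdot,\cdot}$ and scalars $\tilde\chi^{b/a}_{\cdot,\cdot}, \hat\chi^{b/a}_{\cdot,\cdot}$) enter only as forcing terms at or below the order at which they first appear.

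First, I would observe that the equation for $A$ is structurally identical to its counterpart in \eqref{system:ABC}: the nonlinear matrix Riccati equation
\[
A'(t) = 2 A(t) D_+ A(t) - \tfrac{1}{2}\gamma \Sigma, \qquad A(T) = 0,
\]
with the same $D_+ = D^b_{2,1} + D^a_{2,1}$ and the same $\widehat A = \sqrt{\gamma}(D_+^{1/2}\Sigma D_+^{1/2})^{1/2}$. Under the positivity assumption $\sum_n \Delta^{i,n,b}_{2,1} + \Delta^{i,n,a}_{2,1} > 0$, the matrix $D_+^{1/2}$ is well defined and the change of variables $\mathbf{a}(t) = 2 D_+^{1/2} A(t) D_+^{1/2}$ yields $\mathbf{a}'(t) = \mathbf{a}(t)^2 - \widehat A^2$ with $\mathbf{a}(T) = 0$. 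The same ansatz $z(t) = e^{\widehat A(T-t)} + e^{-\widehat A(T-t)}$ combined with Cauchy--Lipschitz gives $\mathbf{a} = -z'z^{-1}$, and reverting the change of variables produces the displayed formula for $A(t)$.

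Next, for $B$, the ODE is a linear inhomogeneous equation
\[
B'(t) = 2 A(t) D_+ B(t) + \mu + 2 A(t)\bigl(V_- + \tilde V_- + D_- \mathcal D(A(t))\bigr),\qquad B(T)=0,
\]
so variation of parameters is the natural tool. The key point to verify is that $A(s) D_+$ and $A(t) D_+$ commute for all $s,t \in [0,T]$; this is proved exactly as in Proposition \ref{prop:ode_solution} by inserting the spectral decomposition of $\widehat A$ and noting that all factors $\widehat A$, $e^{\pm \widehat A (T-\cdot)}$, $(e^{\widehat A (T-\cdot)} + e^{-\widehat A (T-\cdot)})^{-1}$ commute pairwise. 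Once commutativity is in hand, the integrating factor $e^{-2 \int_t^T A(u) D_+\,du}$ acts unambiguously and integrating $e^{-2\int_\cdot^T A(u)D_+ du} B(\cdot)$ from $t$ to $T$, using the terminal condition $B(T)=0$, yields the stated formula; note in particular that the new drift term $\mu$ and transaction-cost term $\tilde V_-$ appear additively inside the parentheses, exactly as written.

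Finally, since the right-hand side of the $C$ equation is, after substitution of the already-obtained $A$ and $B$, an explicit scalar function of $t$, the formula for $C(t)$ follows by direct integration of $C'(s)$ over $[t,T]$ with $C(T)=0$; the constants $\tilde\chi^{b/a}_{1,0}$ and $\hat\chi^{b/a}_{2,0}$ contribute linear-in-$(T-t)$ terms, while the forcing $(\tilde V^{b}_{2,1}+\tilde V^{a}_{2,1})^\intercal \mathcal D(A(t))$ contributes the integral against $\mathcal D(A(s))$. I expect the main technical obstacle to be the bookkeeping for $B$ and $C$ rather than anything genuinely new: one must be careful that the commutativity argument justifying the closed-form integrating factor continues to hold once the forcing includes the extra terms coming from $\mu$ and the fixed transaction costs, but since these enter as additive forcings and not through the homogeneous part $2A(t)D_+$, the argument of Proposition \ref{prop:ode_solution} transfers verbatim.
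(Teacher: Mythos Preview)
Your proposal is correct and follows exactly the approach the paper intends: the paper omits the proof of this proposition and simply refers the reader to the proof of Proposition~\ref{prop:ode_solution}, whose triangular structure (Riccati for $A$, variation of parameters for $B$ using commutativity of the $A(s)D_+$, direct integration for $C$) you have reproduced with the additional forcing terms $\mu$ and $\tilde V_-$ correctly inserted.
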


Now, using the same method as in Section \ref{sec:generaleq}, we get the following asymptotic results:

\begin{proposition}\label{prop:GENode_asymp_solution}
Let $(A,B,C)$ be the solution of the system of ODEs \eqref{system:ABCgeneral} with terminal conditions \eqref{ABCgenTC}.\\

If $D_+^{\frac 12} \mu \in \text{Im}(\widehat A)$, then,
\begingroup
\allowdisplaybreaks
\begin{align*}
A(0) & \stackrel{T\to+\infty}{\longrightarrow} \frac{1}{2}\sqrt{\gamma}\Gamma, \\
B(0) & \stackrel{T\to+\infty}{\longrightarrow} - D_+^{-\frac 12} \widehat A^+ D_+^{\frac 12} \mu - D_+^{-\frac 12}\widehat A \widehat A ^+ D_+^{-\frac 12} \left( V_- + \tilde V_-
     + \frac{1}{2}\sqrt{\gamma}  D_- \mathcal D(\Gamma) \right),\\
\frac{C(0)}T & \stackrel{T\to+\infty}{\longrightarrow} -\textrm{Tr}\left(D^b_{0,1} + D^a_{0,1}\right) -\frac{1}{2}\sqrt{\gamma} \textrm{Tr}\left(\left(D^b_{1,2} + D^a_{1,2}\right)\Gamma\right) + V_-^\intercal D_+^{-\frac 12} \widehat A^+ D_+^{\frac 12} \mu \\
&\qquad + V_-^\intercal D_+^{-\frac 12}\widehat{A}\widehat{A}^+ D_+^{-\frac 12}\!\left(V_- + \tilde V_-
     + \frac{1}{2}\sqrt{\gamma}  D_- \mathcal D(\Gamma)\right)\!\!-\!\frac 18 \gamma \mathcal{D}(\Gamma)^\intercal\!\!\left(D^b_{2,3}\! +\! D^a_{2,3}\right) \mathcal{D}(\Gamma)\\
     &\qquad - \frac 12 \mu^\intercal D_+^{\frac 12}\widehat A ^+ \widehat A^+ D_+^{\frac 12} \mu - \mu^\intercal D_+^{\frac 12} \widehat A ^+ D_+^{-\frac 12} \left(V_- + \tilde V_-
     + \frac{1}{2}\sqrt{\gamma}  D_- \mathcal D(\Gamma)\right) \\
          &\qquad - \frac 12 \left(V_- + \tilde V_-
     + \frac{1}{2}\sqrt{\gamma}  D_- \mathcal D(\Gamma)\right)^\intercal\! D_+^{-\frac 12}\widehat{A}\widehat{A}^+ D_+^{-\frac 12} \left(V_- + \tilde V_-
     + \frac{1}{2}\sqrt{\gamma}  D_- \mathcal D(\Gamma)\right)\\
     &\qquad + \frac{1}{2}\sqrt{\gamma} \mu^\intercal  D_+^{\frac 12} \widehat A^+ D_+^{-\frac 12} D_- \mathcal D(\Gamma)  + \frac{1}{2}\sqrt{\gamma} \left(V_- + \tilde V_-
     + \frac{1}{2}\sqrt{\gamma}  D_- \mathcal D(\Gamma)\right)^\intercal D_+^{-\frac 12}\widehat{A}\widehat{A}^+ D_+^{-\frac 12}  D_- \mathcal{D}(\Gamma) \\
     &\qquad - \left(\hat{\chi}^{b}_{2,0} + \hat{\chi}^{a}_{2,0} \right) - \frac 12 \left(\hat{\chi}^{b}_{2,0} + \hat{\chi}^{a}_{2,0} \right) - \frac 12 \sqrt{\gamma} \left( {\tilde{V}^{b}_{2,1}} + {\tilde{V }^{a}_{2,1}} \right)^\intercal \mathcal{D}(\Gamma),
\end{align*}
\endgroup
where $\Gamma = D_+^{-\frac 1 2} \left( D_+^{\frac 12} \Sigma D_+^{\frac 12} \right)^{\frac 12}
            D_+^{-\frac 1 2}$ and $\widehat{A}^+$ is the Moore-Penrose generalized inverse of
$\widehat{A}$.\\
\end{proposition}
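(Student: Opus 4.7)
The plan is to mirror the three-part structure of the proof of Proposition \ref{prop:ode_asymp_solution}, treating $A$, $B$, and $C$ in that order, since the system \eqref{system:ABCgeneral} is again triangular and the only novelties relative to \eqref{system:ABC} are: the drift term $\mu$ appearing in $B'$, the replacement of $V_-$ by $V_- + \tilde V_-$ in $B'$, and several additional constants and lower-order linear terms in $C'$. Proposition \ref{prop:GENode_solution} already provides the closed-form solution, so the work reduces to passing to the limit $T\to+\infty$ in those formulas.

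First, the ODE for $A$ is \emph{identical} to the one in \eqref{system:ABC}: it does not involve $\mu$, $\tilde V_-$, or any of the transaction-cost constants. Hence the very same change of variables $\mathbf{a}(t)=2D_+^{1/2}A(t)D_+^{1/2}$ and spectral decomposition $\widehat A = P\,\textup{diag}(\lambda_1,\ldots,\lambda_d)\,P^\intercal$ that were used in the proof of Proposition \ref{prop:ode_asymp_solution} give $A(0)\to \tfrac12\sqrt{\gamma}\,\Gamma$. Nothing needs to be redone.

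Second, for $B$ I would start from the representation in Proposition \ref{prop:GENode_solution} and split the integrand into three parts: the constant piece $\tfrac12\mu$, the piece $A(s)(V_-+\tilde V_-)$, and the piece $A(s)D_-\mathcal D(A(s))$. For the last two pieces, the analysis of the proof of Proposition \ref{prop:ode_asymp_solution} applies verbatim, with $V_-$ replaced by $V_-+\tilde V_-$, and produces the limit $-D_+^{-1/2}\widehat A\widehat A^+ D_+^{-1/2}(V_-+\tilde V_-+\tfrac12\sqrt{\gamma}D_-\mathcal D(\Gamma))$. The new contribution from the drift is
\[ -e^{-2\int_0^T A(u)D_+\,du}\int_0^T e^{2\int_s^T A(u)D_+\,du}\mu\,ds
 = -D_+^{-1/2}P\,\textup{diag}\!\left(\tfrac{\int_0^T \cosh(\lambda_i s)\,ds}{\cosh(\lambda_i T)}\right)\!P^\intercal D_+^{1/2}\mu, \]
after using the same spectral computation that appeared for $B$ before. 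The diagonal entry converges to $1/\lambda_i$ when $\lambda_i>0$ and equals $T$ when $\lambda_i=0$. This is precisely where the assumption $D_+^{1/2}\mu\in\textup{Im}(\widehat A)$ enters: it kills the $\lambda_i=0$ components of $P^\intercal D_+^{1/2}\mu$, so the divergent contributions vanish and the limit is exactly $-D_+^{-1/2}\widehat A^+ D_+^{1/2}\mu$, since $P\,\textup{diag}(\lambda_i^{-1}\mathds{1}_{\lambda_i>0})P^\intercal=\widehat A^+$.

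Third, for $C$ I would simply integrate the closed form from Proposition \ref{prop:GENode_solution}, divide by $T$, and apply Cesàro: every term in the integrand that involves $A(s)$ or $B(s)$ has a limit (by the first two parts), while the purely constant terms in $C'$ contribute directly. Expanding the resulting quadratic forms in $\widehat A\widehat A^+$, $\widehat A^+$, $V_-+\tilde V_-$, $\mu$, and $\tfrac12\sqrt{\gamma}D_-\mathcal D(\Gamma)$ yields the long expression in the statement; the additional constants $\tilde\chi^{b,a}_{1,0}$, $\hat\chi^{b,a}_{2,0}$, and $\tilde V^{b,a}_{2,1}$ contribute the remaining terms on the last two lines. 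The main obstacle is entirely the drift term in $B$: once the Moore-Penrose inverse is correctly paired with the image assumption on $\mu$, the rest is bookkeeping that follows the template of Proposition \ref{prop:ode_asymp_solution}.
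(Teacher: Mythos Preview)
Your proposal is correct and follows exactly the approach the paper intends: the paper omits the proof of this proposition entirely, simply stating that it is obtained ``using the same method as in Section~\ref{sec:generaleq}.'' Your sketch spells out what that means, and in particular correctly isolates the only genuinely new ingredient --- the drift contribution $-D_+^{-1/2}P\,\textup{diag}\bigl(\int_0^T\cosh(\lambda_i s)\,ds/\cosh(\lambda_i T)\bigr)P^\intercal D_+^{1/2}\mu$ in $B(0)$ --- and correctly identifies that the hypothesis $D_+^{1/2}\mu\in\mathrm{Im}(\widehat A)$ is precisely what kills the linearly divergent diagonal entries $\lambda_i=0$, leaving $-D_+^{-1/2}\widehat A^+ D_+^{1/2}\mu$ in the limit.
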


\begin{remark}
  The assumption $D_+^{\frac 12} \mu \in \text{Im}(\widehat A)$ is satisfied when $\mu = 0$ or when $\Sigma$ is invertible. If this assumption is not satisfied, then it can be shown that $\frac{B(0)}{T} \stackrel{T\to+\infty}{\longrightarrow} - D_+^{-\frac 12} \widehat{A}^+ \widehat{A} D_+^{\frac 12} \mu$. In particular, there is no constant asymptotic approximation of the quotes. In fact, if the assumption $D_+^{\frac 12} \mu \in \text{Im}(\widehat A)$ is not satisfied, the market maker may have an incentive to propose very good quotes to clients in order to build portfolios bearing positive return at no risk.\\
\end{remark}

\subsection{From value functions to heuristics and quotes}
\vspace{3mm}
\subsubsection{Quotes: the general case}
\vspace{3mm}
The greedy quoting strategy associated with our closed-form proxy of the value function leads to the following quotes for all $i \in \{1, \ldots, d\}$ and $n \in \{1,\ldots, N\}$:

\begin{eqnarray*}
	\check\delta^{i,n,b}_t(z) &=& \tilde{\delta}^{i,n,b*}_\xi\left(z,\frac{\check\theta(t,q_{t-}) - \check\theta(t,q_{t-}+z e^i) + c^{i,n,b}}{z}\right)\\
 &=& \tilde{\delta}^{i,n,b*}_\xi\left(z,2q_{t-}^\intercal A(t) e^i + z {e^i}^\intercal A(t) e^i + {e^i}^\intercal B(t) + \frac{c^{i,n,b}}{z} \right),\\
	\check\delta^{i,n,a}_t(z) &=& \tilde{\delta}^{i,n,a*}_\xi\left(z,\frac{\check\theta(t,q_{t-}) - \check\theta(t,q_{t-}-z e^i) + c^{i,n,a}}{z}\right)\\
&=& \tilde{\delta}^{i,n,a*}_\xi\left(z,-2q_{t-}^\intercal A(t) e^i + z {e^i}^\intercal A(t) e^i - {e^i}^\intercal B(t)  + \frac{c^{i,n,a}}{z} \right),
	\end{eqnarray*}
where $\tilde{\delta}^{i,n,b*}_\xi$ and $\tilde{\delta}^{i,n,a*}_\xi$ are given in Theorems \ref{optquotes5} and \ref{optquotesbis5} for $\xi = \gamma$ and $\xi = 0$ respectively (depending on whether one considers Model A or Model B).\\

The asymptotic regime exhibited in the above paragraphs can then serve to obtain the following simple closed-form approximations:
\begin{align}
	\breve\delta^{i,n,b}_t(z) &=\tilde{\delta}^{i,n,b*}_\xi \bigg( z,\sqrt\gamma q_{t-}^\intercal \Gamma e^i + \frac 12 \sqrt\gamma z {e^i}^\intercal \Gamma e^i - {e^i}^\intercal D_+^{-\frac 12} \widehat A^+ D_+^{\frac 12} \mu \label{asymptbgen}\\
	& \qquad \qquad \qquad \qquad \qquad - {e^i}^\intercal D_+^{-\frac 12}\widehat{A}\widehat{A}^+ D_+^{-\frac 12} \left(V_- + \tilde V_-
     + \frac{1}{2}\sqrt{\gamma}  D_- \mathcal D(\Gamma) \right) +\frac{c^{i,n,b}}{z} \bigg),\nonumber \\
	\breve\delta^{i,n,a}_t(z) &= \tilde{\delta}^{i,n,a*}_\xi \bigg( z,-\sqrt\gamma q_{t-}^\intercal \Gamma e^i + \frac 12 \sqrt\gamma z {e^i}^\intercal \Gamma e^i +  {e^i}^\intercal D_+^{-\frac 12} \widehat A^+ D_+^{\frac 12} \mu \label{asymptagen}\\
	& \qquad \qquad \qquad \qquad \qquad   +{e^i}^\intercal D_+^{-\frac 12}\widehat{A}\widehat{A}^+ D_+^{-\frac 12} \left(V_- + \tilde V_-
     + \frac{1}{2}\sqrt{\gamma}  D_- \mathcal D(\Gamma) \right) +\frac{c^{i,n,a}}{z} \bigg). \nonumber
	\end{align}

If we assume that for all $i \in \{1, \ldots, d\}$ and for all $n \in \{1,\ldots,N\}$ we have $\nu^{i,n,b} = \nu^{i,n,a}$ and $\Lambda^{i,n,b} = \Lambda^{i,n,a}$, then $\forall i \in \{1, \ldots, d\}, \forall n \in \{1,\ldots,N\}, H^{i,n,b} = H^{i,n,a}$, and it is thus natural to chose symmetric approximations of the Hamiltonian functions, i.e. $\forall i \in \{1, \ldots, d\}, \forall n \in \{1,\ldots,N\}, \check{H}^{i,n,b} = \check{H}^{i,n,a}$. In that case, \eqref{asymptbgen} and \eqref{asymptagen} simplify into
\begin{eqnarray}
	\breve\delta^{i,n,b}_t(z) &=& \tilde{\delta}^{i,n,b*}_\xi\left(z,\sqrt\gamma q_{t-}^\intercal \Gamma e^i + \frac 12 \sqrt\gamma z {e^i}^\intercal \Gamma e^i -   {e^i}^\intercal D_+^{-\frac 12} \widehat A^+ D_+^{\frac 12} \mu +\frac{c^{i,n,b}}{z} \right),\label{asymptbsymgen}\\
	\breve\delta^{i,n,a}_t(z) &=& \tilde{\delta}^{i,n,a*}_\xi\left(z,-\sqrt\gamma q_{t-}^\intercal \Gamma e^i + \frac 12 \sqrt\gamma z {e^i}^\intercal \Gamma e^i +  {e^i}^\intercal D_+^{-\frac 12} \widehat A^+ D_+^{\frac 12} \mu+\frac{c^{i,n,a}}{z} \right).\label{asymptasymgen}
	\end{eqnarray}

All these approximations of the quotes can be used directly or as a starting point in iterative methods designed to compute the optimal quotes (policy iteration algorithms, actor-critic algorithms, etc.).\\

\subsubsection{Quotes: the case of symmetric exponential intensities}
\vspace{3mm}
If we assume that for all $i \in \{1, \ldots, d\}$ and for all $n \in \{1,\ldots,N\}$ we have $\nu^{i,n,b} = \nu^{i,n,a} =: \nu^{i,n}$ and intensity functions given by
\[    \Lambda^{i,n,b} (\delta) = \Lambda^{i,n,a} (\delta) = A^{i,n} e^{-k^{i,n} \delta}, \quad A^{i,n}, k^{i,n} > 0,\]
then (see \cite{gueant2017optimal}), in the limit case where $\delta_\infty = +\infty$ the Hamiltonian functions are given, for all $i \in \{1, \ldots, d\}$ and $n\in \{1,\ldots,N\}$, by
\[H^{i,n,b}_\xi (z,p)
   =  H^{i,n,a}_\xi (z,p) = \frac{A^{i,n}}{k^{i,n}} C^{i,n}_\xi(z) \exp(-k^{i,n} p), \]
 where
 \[ C_\xi^{i,n}(z) =
 \begin{cases}
 \left(1+\frac{\xi z}{k^{i,n}}\right)^{-\left(1+\frac{k^{i,n}}{\xi z}\right)}&\mbox{if }\xi>0\\
 e^{-1}&\mbox{if }\xi=0,
 \end{cases} \]
 and the functions $\tilde{\delta}^{i,n,b*}_\xi$ and $\tilde{\delta}^{i,n,a*}_\xi$ are given, for all $i \in \{1, \ldots, d\}$ and $n\in \{1,\ldots,N\}$, by
\[\tilde\delta^{i,n,b*}_\xi (z,p)
   =  \tilde\delta^{i,n,a*}_\xi (z,p) = \begin{cases}
 p + \frac{1}{\xi z}\log\left(1+\frac{\xi z}{k^{i,n}}\right)&\mbox{if }\xi>0\\
 p + \frac{1}{k^{i,n}}&\mbox{if }\xi=0.
 \end{cases} \]
Therefore, if we consider the quadratic approximation of the Hamiltonian functions based on their Taylor expansion around $p=0$ (see Remark \ref{remp0gen}), then \eqref{asymptbsymgen} and \eqref{asymptasymgen} become

\begin{eqnarray*}
	\breve\delta^{i,n,b}_t(z) &=& \begin{cases}
 \sqrt{\gamma} \left( q_{t-}^\intercal \Gamma e^i + \frac 12 z^i
     {e^i}^\intercal \Gamma e^i - \frac 1 \gamma {e^i}^\intercal D_+^{-\frac 12}
     \widehat A^+ D_+^{\frac 12} \mu \right) + \frac{c^{i,n,b}}{z}
 +\frac{1}{\gamma z}\log\left(1+\frac{\gamma z}{k^{i,n}}\right) &\mbox{ in Model A,}\\
 \sqrt{\gamma} \left( q_{t-}^\intercal \Gamma e^i + \frac 12 z^i
     {e^i}^\intercal \Gamma e^i - \frac 1 \gamma {e^i}^\intercal D_+^{-\frac 12}
     \widehat A^+ D_+^{\frac 12} \mu \right) + \frac{c^{i,n,b}}{z} +\frac{1}{k^{i,n}} &\mbox{ in  Model B.}
\end{cases}\\
	\breve\delta^{i,n,a}_t(z) &=& \begin{cases}
 \sqrt{\gamma} \left( q_{t-}^\intercal \Gamma e^i - \frac 12 z^i
     {e^i}^\intercal \Gamma e^i + \frac 1 \gamma {e^i}^\intercal D_+^{-\frac 12}
     \widehat A^+ D_+^{\frac 12} \mu \right) + \frac{c^{i,n,a}}{z}
 +\frac{1}{\gamma z}\log\left(1+\frac{\gamma z}{k^{i,n}}\right) &\mbox{ in Model A,}\\
 \sqrt{\gamma} \left( q_{t-}^\intercal \Gamma e^i - \frac 12 z^i
     {e^i}^\intercal \Gamma e^i + \frac 1 \gamma {e^i}^\intercal D_+^{-\frac 12}
     \widehat A^+ D_+^{\frac 12} \mu \right) + \frac{c^{i,n,a}}{z} +\frac{1}{k^{i,n}} &\mbox{ in  Model B.}
\end{cases}
	\end{eqnarray*}
where $\Gamma=D_+^{-\frac 1 2}(D_+^{\frac 12}\Sigma D_+^{\frac 12})^{\frac 12} D_+^{-\frac 1 2}$ and $$D_+=\textup{diag}\left(2 \sum_{n=1}^N A^{1,n} k^{1,n} \int_{\mathbb R ^*_+}C_\xi^{1,n}(z)  z \nu^{1,n}(dz) ,\ldots, 2 \sum_{n=1}^N A^{d,n} k^{d,n} \int_{\mathbb R ^*_+} C_\xi^{d,n}(z)  z  \nu^{d,n}(dz)\right).$$

\section{Numerical results}
\label{num_sec}
To assess the quality of our approximations, we consider a two-asset example for which we can approximate numerically the true function $\theta$ and the optimal quotes. By using a Euler scheme in dimension 3 (one dimension for time and two dimensions for inventory) it is indeed possible to approximate numerically the solution of Hamilton-Jacobi equations on a grid.\\

\subsection{Characteristics of our example with two assets}
\vspace{3mm}

We consider the following characteristics for the two assets:
\begin{itemize}
    \item Asset prices: $S^1_0 = S^2_0 = 100\ \textrm{\euro}$.
    \item Drifts: $\mu^1 = 0.1\ \textrm{\euro}\cdot \textrm{day}^{-1}$, $\mu^2 = -0.1\ \textrm{\euro}\cdot \textrm{day}^{-1}$.
    \item Volatilities: $\sigma^1 = 1.2\ \textrm{\euro}\cdot \textrm{day}^{-\frac 12}$, $\sigma^2 = 0.6\ \textrm{\euro}\cdot \textrm{day}^{-\frac 12}$.
    \item Correlation: $\rho = 0.5.$
\end{itemize}
This corresponds to a covariance matrix $\Sigma$ given by
$$\Sigma = \begin{pmatrix}(\sigma^1)^2 & \rho \sigma^1 \sigma^2 \\ \rho \sigma^1 \sigma^2 & (\sigma^2)^2  \end{pmatrix} = \begin{pmatrix}1.44 & 0.36 \\ 0.36 & 0.36  \end{pmatrix}.$$

We consider Model B\footnote{The results would be similar for Model A.} with time horizon $T = 7\ \textrm{days}$ and risk aversion parameter~$\gamma=8 \cdot 10^{-6}\ \textrm{\euro}^{-1}$.\\

We consider a framework with one tier only and no transaction costs.\\

The intensity functions are given for all $i \in \{1,2\}$ by:
$$\Lambda^{i,b}(\delta) = \Lambda^{i,a}(\delta) = \lambda_{RFQ} \frac 1{1 + e^{\alpha_\Lambda + \beta_\Lambda \delta}},$$
with $\lambda_{RFQ} = 30\ \textrm{day}^{-1}$, $\alpha_\Lambda = 0.7$, and $\beta_\Lambda = 30\ \textrm{\euro}^{-1}$. This corresponds to $30$ requests per day, a probability of $\frac 1{1 + e^{0.7}}\simeq 33 \% $ to trade when the answered quote is the reference price and a probability of $\frac 1{1+e^{0.4}} \simeq 40\% $ to trade when the answered quote is the reference price improved by 1 cent.\\

Request sizes are distributed according to a Gamma distribution $\Gamma(\alpha_\mu, \beta_\mu)$ with $\alpha_\mu = 4$ and $\beta_\mu = 4 \cdot 10^{-4}.$ This corresponds to an average request size of $10000$ assets (i.e. approximately $1000000 \textrm{\euro}$) and a standard deviation equal to half the average.\\

\subsection{Value function and optimal quotes}
\vspace{3mm}
In order to discretize the problem, we first approximate the Gamma distribution of sizes with $4$ sizes: $z^1 = 6250$, $z^2 = 12500$, $z^3 = 18750$, and $z^4 = 25000$ assets -- thereafter refered to by very small, small, large, and very large size -- with respective probability $p^1 = 0.534$, $p^2 = 0.350$, $p^3 = 0.097$ and $p^4 = 0.019$. We impose risk limits $Q^1 = 75000$ and $Q^2= 300000$, i.e. no trade that would result in an inventory $q^1$ for asset~1 such that $|q^1|>75000$ is accepted, and similarly no trade that would result in an inventory $q^2$ for asset~2 such that $|q^2|>300000$ is accepted.\\

The solution $\theta$ to Eq. \eqref{HJfeat} with terminal condition \eqref{sec5:thetagenCT} can then be approximated numerically using a monotone implicit Euler scheme on a grid of size $101\times25\times 97$ ($101$ points in time, $25$ points for the inventory of asset $1$, and $97$ points for the inventory of asset 2).\\

Because we are mainly interested in asymptotic quotes, it is important to check that the time horizon chosen is sufficienty long. For that purpose, we plot in Figure \ref{conv_deltas_BEGV} the optimal bid quotes for asset 1 and asset 2 at time $t=0$ for different values of the inventory. We see that the asymptotic regime is clearly reached and, from now on, we will only consider the value function and the optimal quotes at time $t=0$.\\

\begin{figure}[!h]\centering
\includegraphics[width=\textwidth]{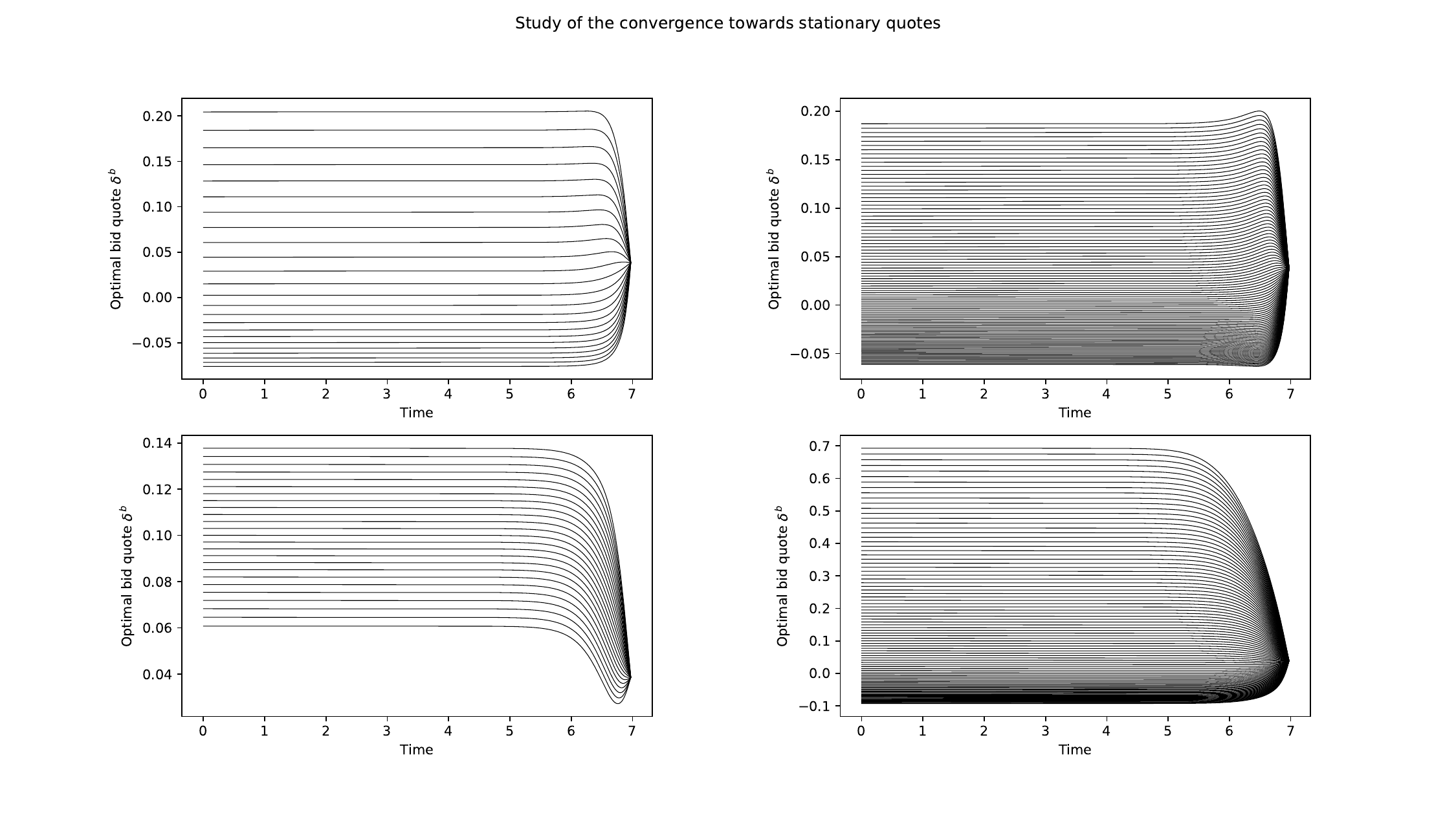}\\
\caption{Optimal bid quotes as a function of time for different values of the inventory (very small trades) -- top left: bid quotes of asset 1 for different values of inventory $q^2$ ($q^1=0$), top right: bid quotes of asset 1 for different values of inventory $q^1$ ($q^2=0$), bottom left: bid quotes of asset 2 for different values of inventory $q^2$ ($q^1=0$), bottom right: bid quotes of asset 2 for different values of inventory $q^1$ ($q^2=0$).}\label{conv_deltas_BEGV}
\end{figure}

\begin{figure}[!h]\centering
\includegraphics[width=0.97\textwidth]{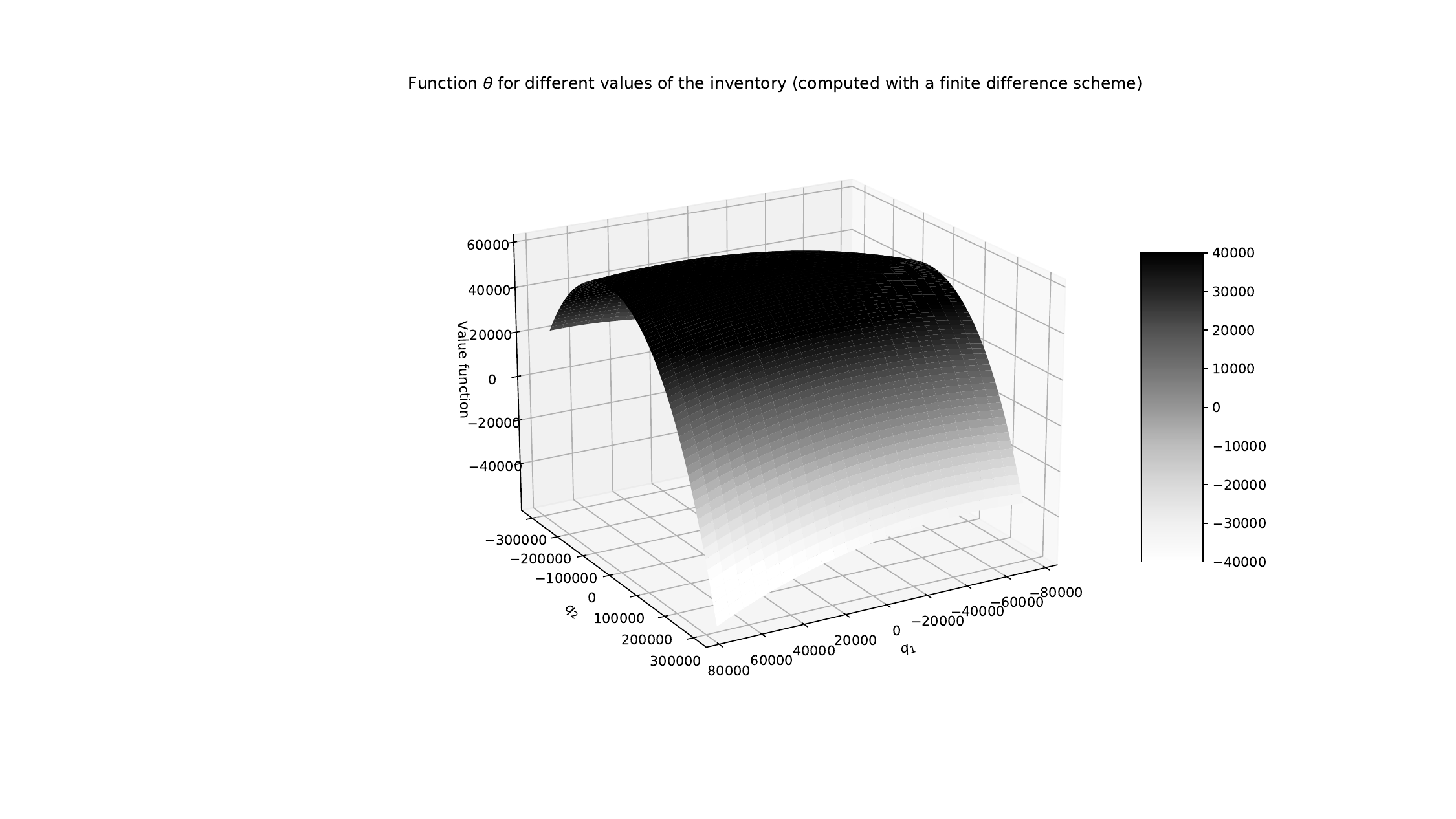}\\
\caption{Function $\theta$ at time $t=0$ for different values of the inventory.}\label{theta_3d_BEGV}
\end{figure}

The numerical approximation of the value function $\theta$ (at time $t=0$) is plotted in Figure \ref{theta_3d_BEGV}. The shape of the function $\theta$ is as expected given the risk penalty, the positive drift in the prices of asset 1 and the negative drift in the prices of asset 2. The associated bid quotes are plotted in Figures \ref{delta_b_3d_asset_1_BEGV} and \ref{delta_b_3d_asset_2_BEGV} respectively. The shape of the quote surfaces is as expected given the positive correlation coefficient (see \cite{bergault2019size} or \cite{gueant2017optimal} for a deeper discussion about the effect of the different parameters).\\

\begin{figure}[!h]\centering
\includegraphics[width=\textwidth]{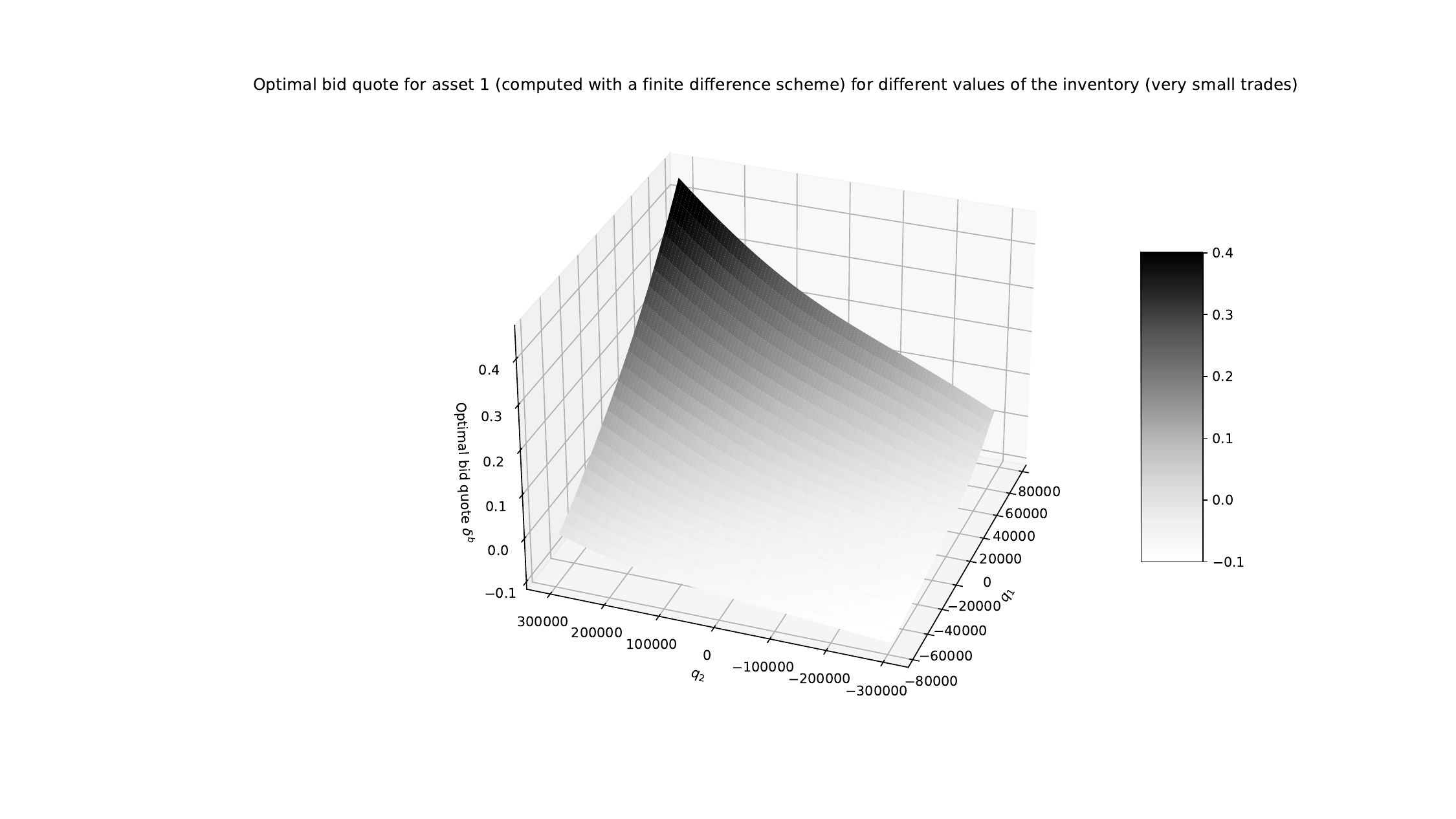}\\
\caption{Optimal bid quote at $t=0$ for asset 1 as a function of the inventory (very small trades).}\label{delta_b_3d_asset_1_BEGV}
\end{figure}
\newpage

\begin{figure}[!h]\centering
\includegraphics[width=\textwidth]{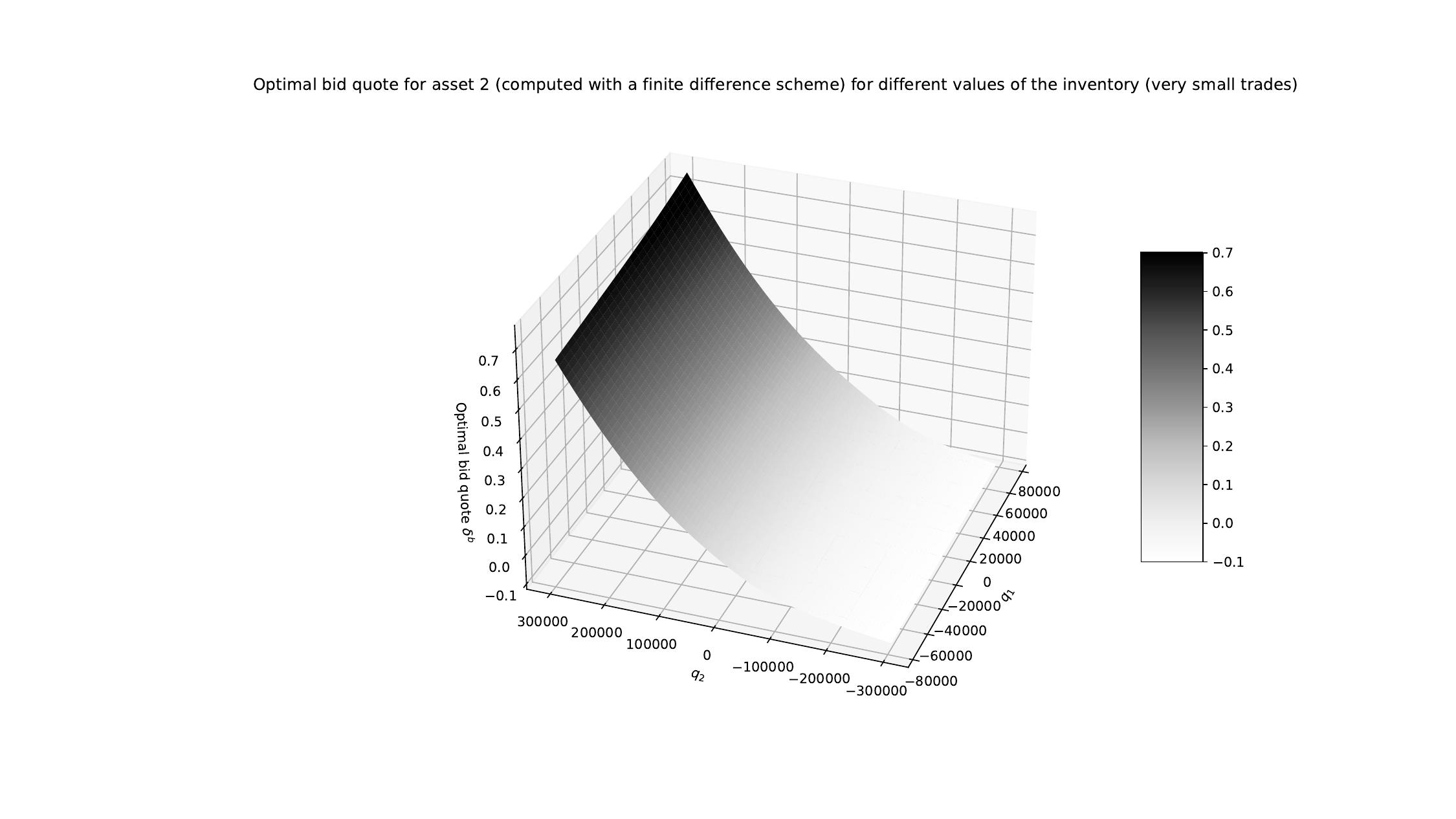}\\
\caption{Optimal bid quote at $t=0$ for asset 2 as a function of the inventory (very small trades).}\label{delta_b_3d_asset_2_BEGV}
\end{figure}

\subsection{Comparison with closed-form approximations}
\vspace{3mm}
We now move on to our closed-form approximations. We first plot in Figure \ref{theta_hat_3d_BEGV} the closed-form approximation $\check{\theta}$ given by Proposition \ref{prop:GENansatz}.\\

\begin{figure}[!h]\centering
\includegraphics[width=\textwidth]{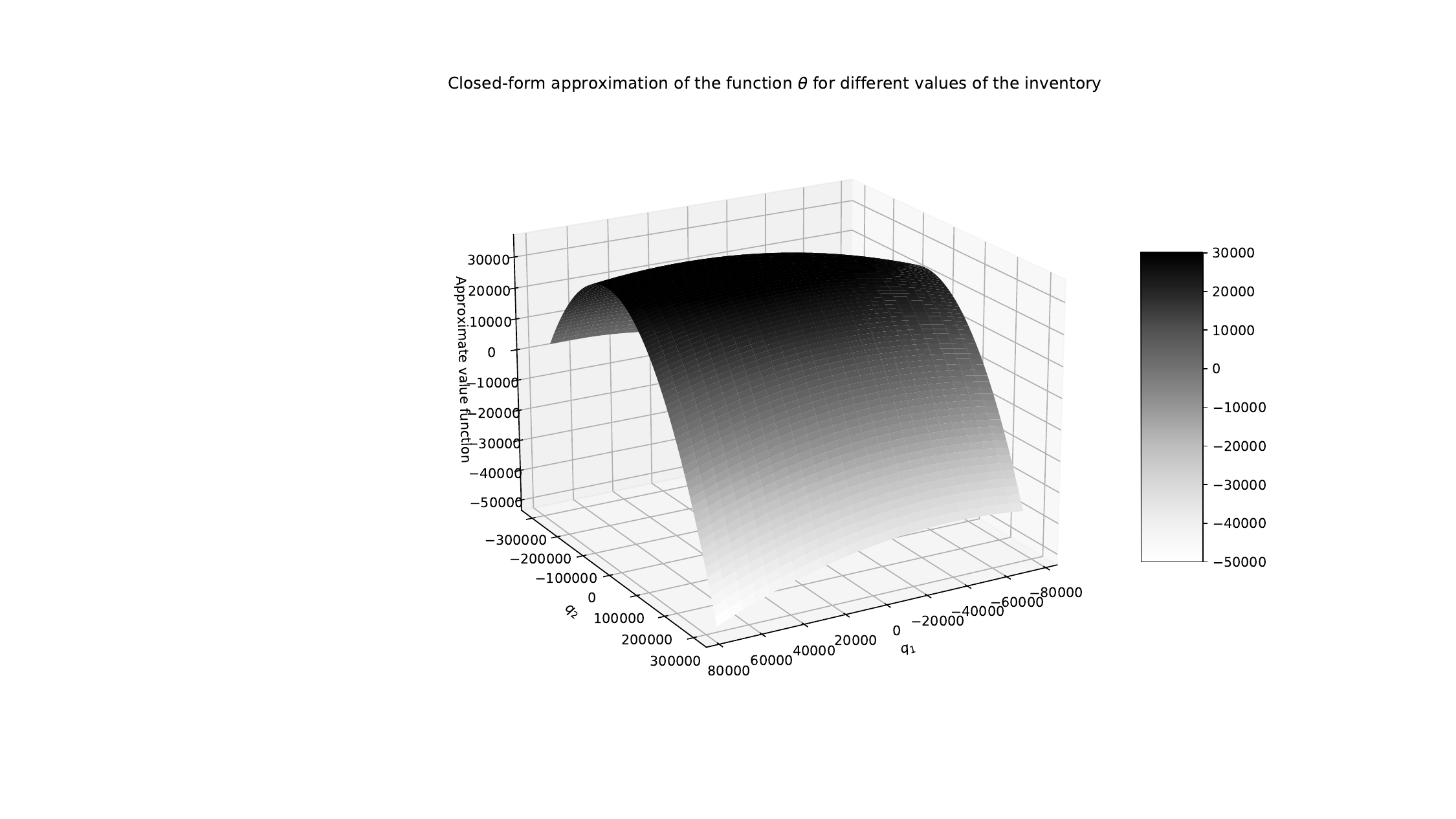}\\
\caption{Function $\check{\theta}$ at $t=0$ for different values of the inventory.}\label{theta_hat_3d_BEGV}
\end{figure}

We clearly see that, in spite of differences in level between the true value function aproximated numerically and the closed-form approximation, the shape is the same. Therefore, the finite differences involved in the computation of the associated quotes should be similar. This is roughly confirmed in Figures \ref{delta_hat_b_3d_asset_1_BEGV} and \ref{delta_hat_b_3d_asset_2_BEGV} that are the closed-form counterparts of Figures \ref{delta_b_3d_asset_1_BEGV} and \ref{delta_b_3d_asset_2_BEGV}.\\

\begin{figure}[!h]\centering
\includegraphics[width=\textwidth]{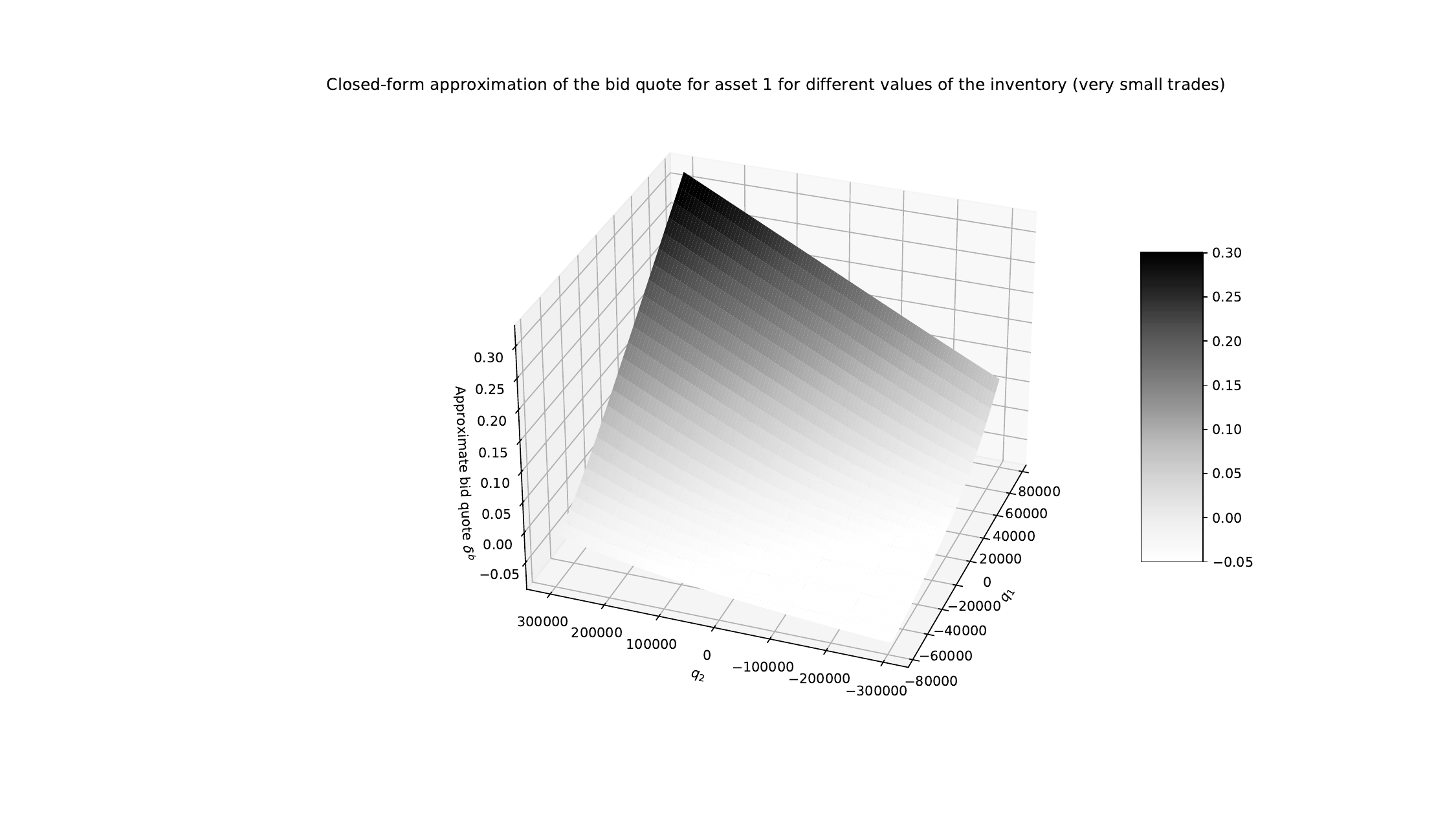}\\
\caption{Closed-form approximation for the optimal bid quote at $t=0$ for asset 1 as a function of the inventory (very small trades).}\label{delta_hat_b_3d_asset_1_BEGV}
\end{figure}

\begin{figure}[!h]\centering
\includegraphics[width=\textwidth]{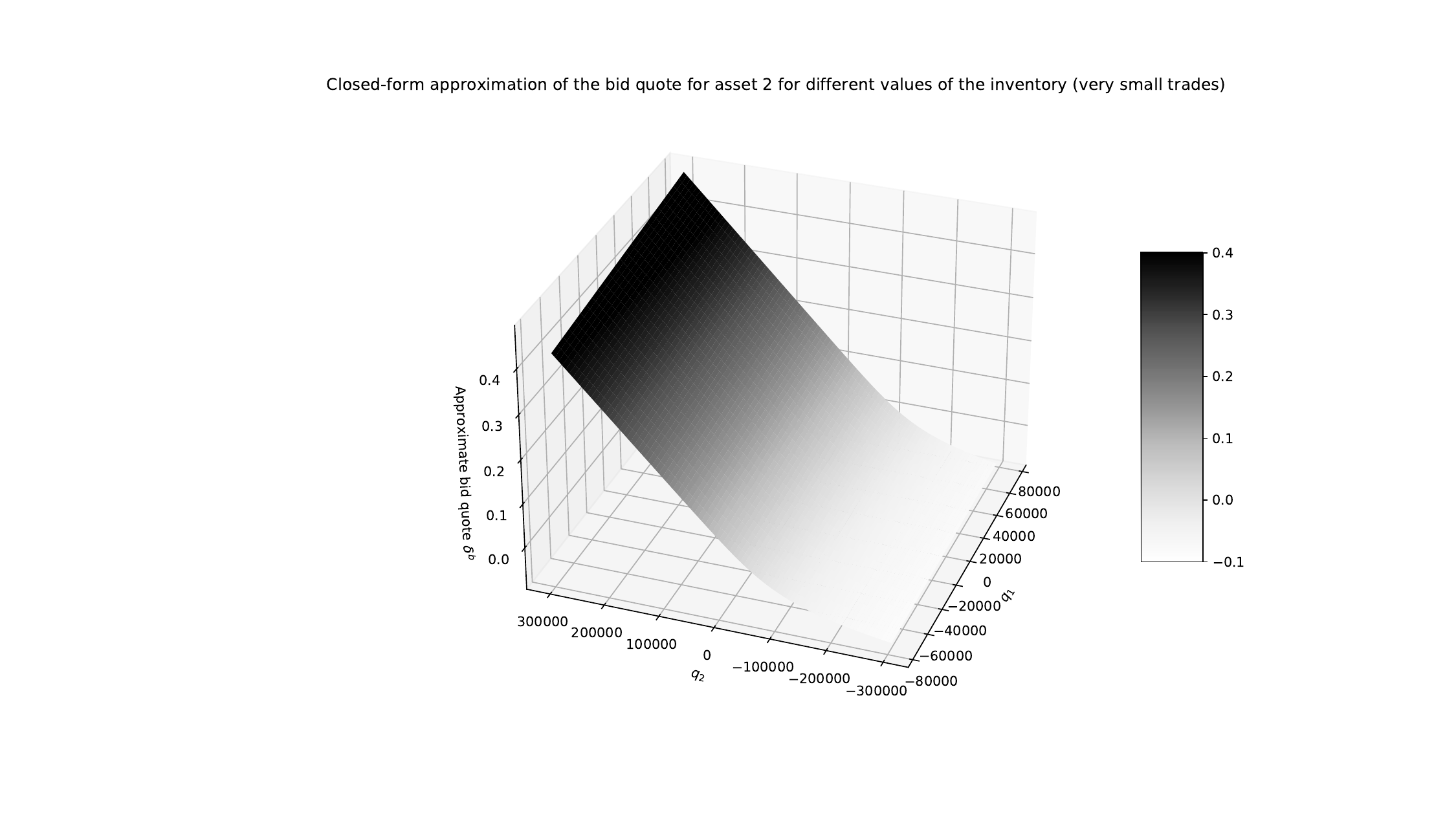}\\
\caption{Closed-form approximation for the optimal bid quote at $t=0$ for asset 2 as a function of the inventory (very small trades).}\label{delta_hat_b_3d_asset_2_BEGV}
\end{figure}

In order to assess more precisely the quality of our closed-form approximations, we plot in Figures \ref{deltas_comp_asset_1_q_1_BEGV}, \ref{deltas_comp_asset_1_q_2_BEGV}, \ref{deltas_comp_asset_2_q_1_BEGV}, and \ref{deltas_comp_asset_2_q_2_BEGV} the functions
$$q^1 \mapsto \bar \delta^{1,b}(q^1,0, z^k),\ k \in \{1,\ldots,4\} \qquad q^1 \mapsto \hat \delta^{1,b}(q^1,0, z^k),\ k \in \{1,\ldots,4\} $$
$$q^2 \mapsto \bar \delta^{1,b}(0,q^2, z^k),\ k \in \{1,\ldots,4\} \qquad q^2 \mapsto \hat \delta^{1,b}(0,q^2, z^k),\ k \in \{1,\ldots,4\}$$
$$q^1 \mapsto \bar \delta^{2,b}(q^1,0, z^k),\ k \in \{1,\ldots,4\} \qquad q^1 \mapsto \hat \delta^{2,b}(q^1,0, z^k),\ k \in \{1,\ldots,4\}$$
$$q^2 \mapsto \bar\delta^{2,b}(0,q^2, z^k),\ k \in \{1,\ldots,4\} \qquad q^2 \mapsto \hat\delta^{2,b}(0,q^2, z^k),\ k \in \{1,\ldots,4\}$$
where $\bar \delta^{i,b}$ is the optimal bid quote for asset $i$ as a function of time, inventory, and size of request and $\hat \delta^{i,b}$ is the closed-form approximation of the optimal bid quote for asset $i$ as a function of inventory and size of request.\\

We clearly see that our closed-form approximations are close to the true optimal quotes, except for large values of the inventory in asset 2.\\

\begin{figure}[!h]\centering
\includegraphics[width=\textwidth]{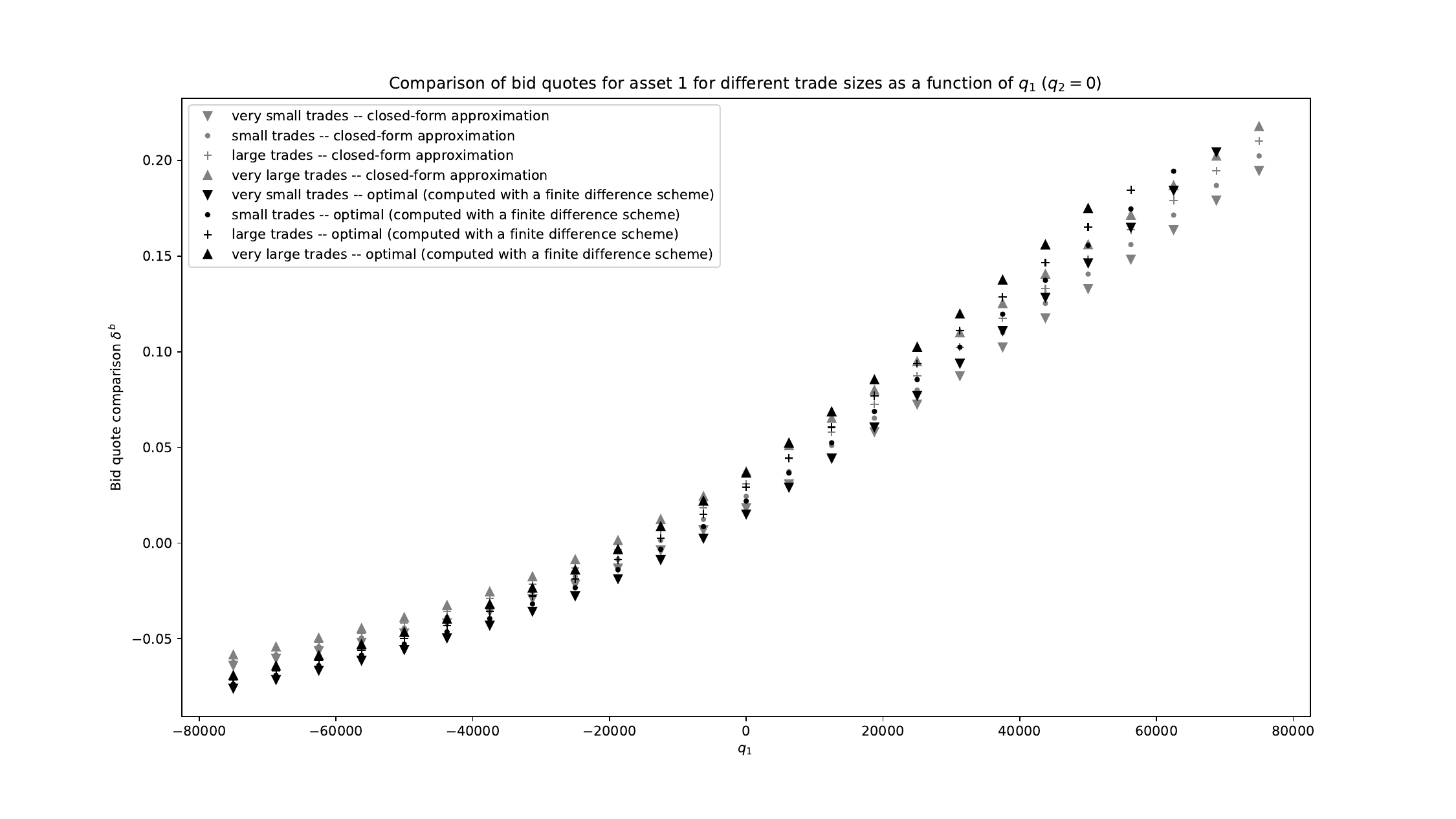}\\
\caption{Comparison between optimal bid quote for asset 1 and its closed-form approximation for different trade sizes as a function of $q^1$ ($q^2=0$).}\label{deltas_comp_asset_1_q_1_BEGV}
\end{figure}

\begin{figure}[!h]\centering
\includegraphics[width=\textwidth]{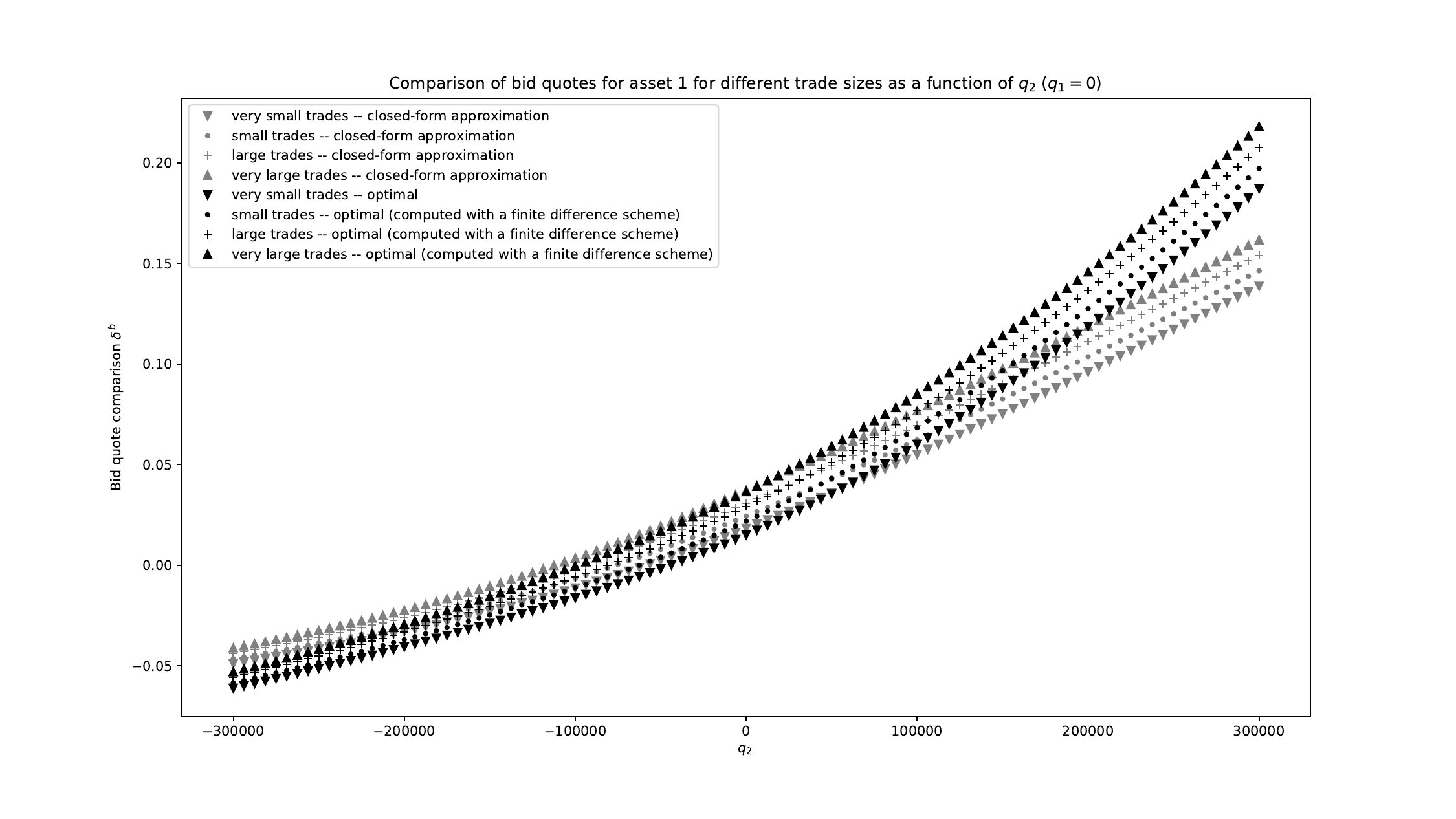}\\
\caption{Comparison between optimal bid quote for asset 1 and its closed-form approximation for different trade sizes as a function of $q^2$ ($q^1=0$).}\label{deltas_comp_asset_1_q_2_BEGV}
\end{figure}

\begin{figure}[!h]\centering
\includegraphics[width=\textwidth]{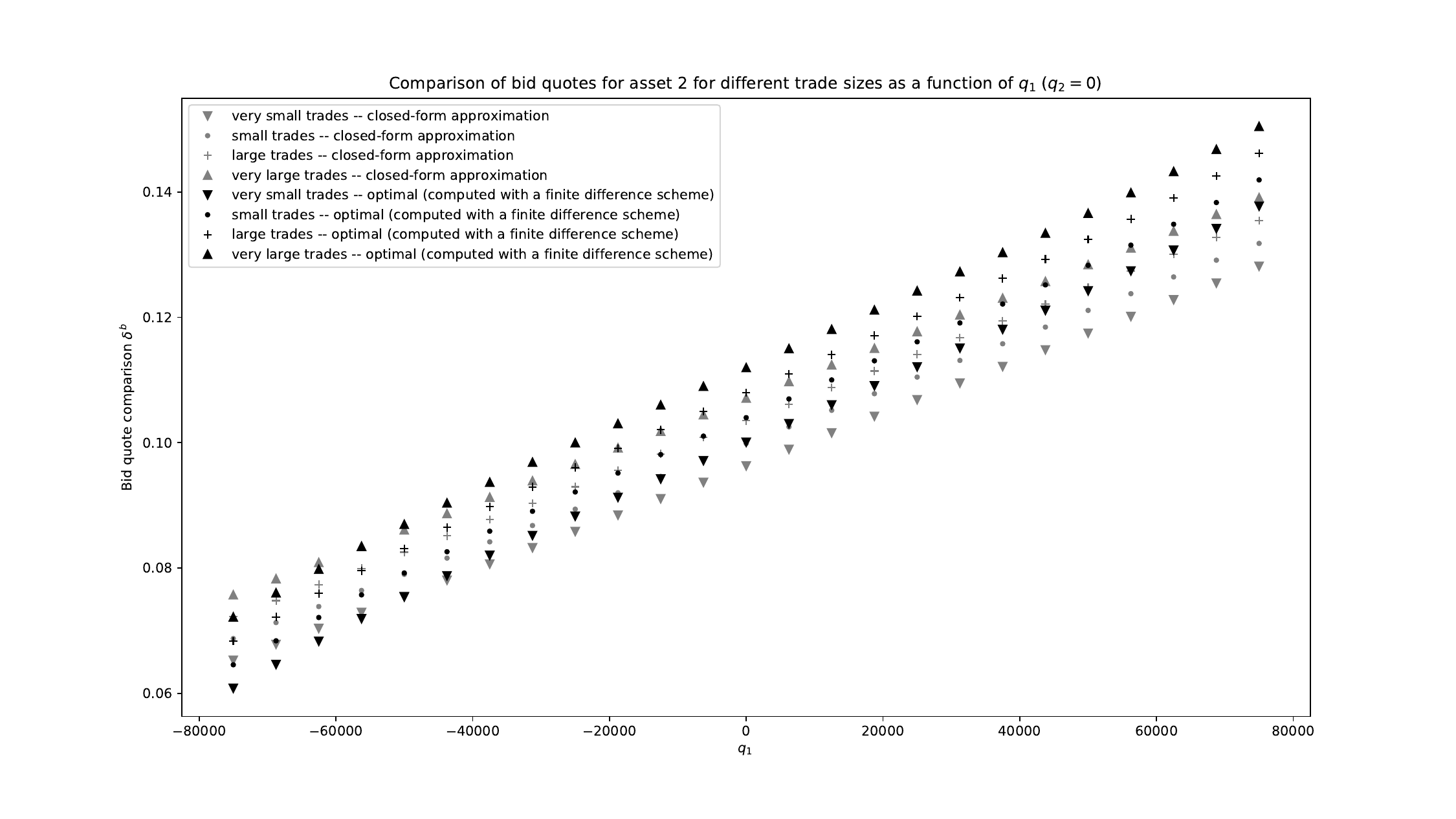}\\
\caption{Comparison between optimal bid quote for asset 2 and its closed-form approximation for different trade sizes as a function of $q^1$ ($q^2=0$).}\label{deltas_comp_asset_2_q_1_BEGV}
\end{figure}

\begin{figure}[!h]\centering
\includegraphics[width=\textwidth]{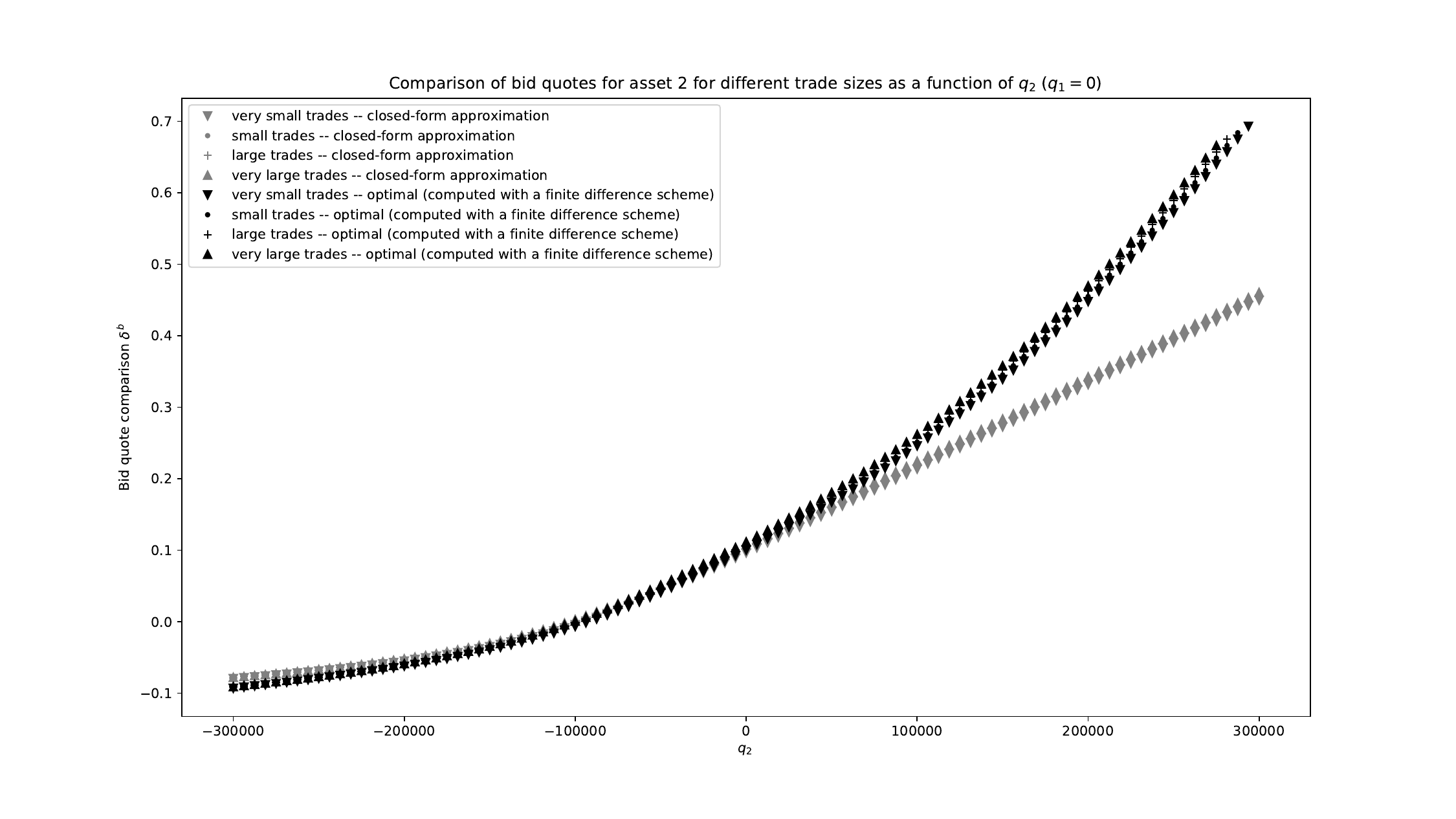}\\
\caption{Comparison between optimal bid quote for asset 2 and its closed-form approximation for different trade sizes as a function of $q^2$ ($q^1=0$).}\label{deltas_comp_asset_2_q_2_BEGV}
\end{figure}

In order to confirm the quality of our closed-form approximations, we compare the performance of a market maker, when quoting the true optimal quotes versus their closed-form approximations. The respective distributions of PnL after 4000 Monte-Carlo simulations are plotted in Figures \ref{pnl_distrib_optimal_BEGV} and \ref{pnl_distrib_approx_BEGV}.

\begin{figure}[!h]\centering
\includegraphics[width=0.85\textwidth]{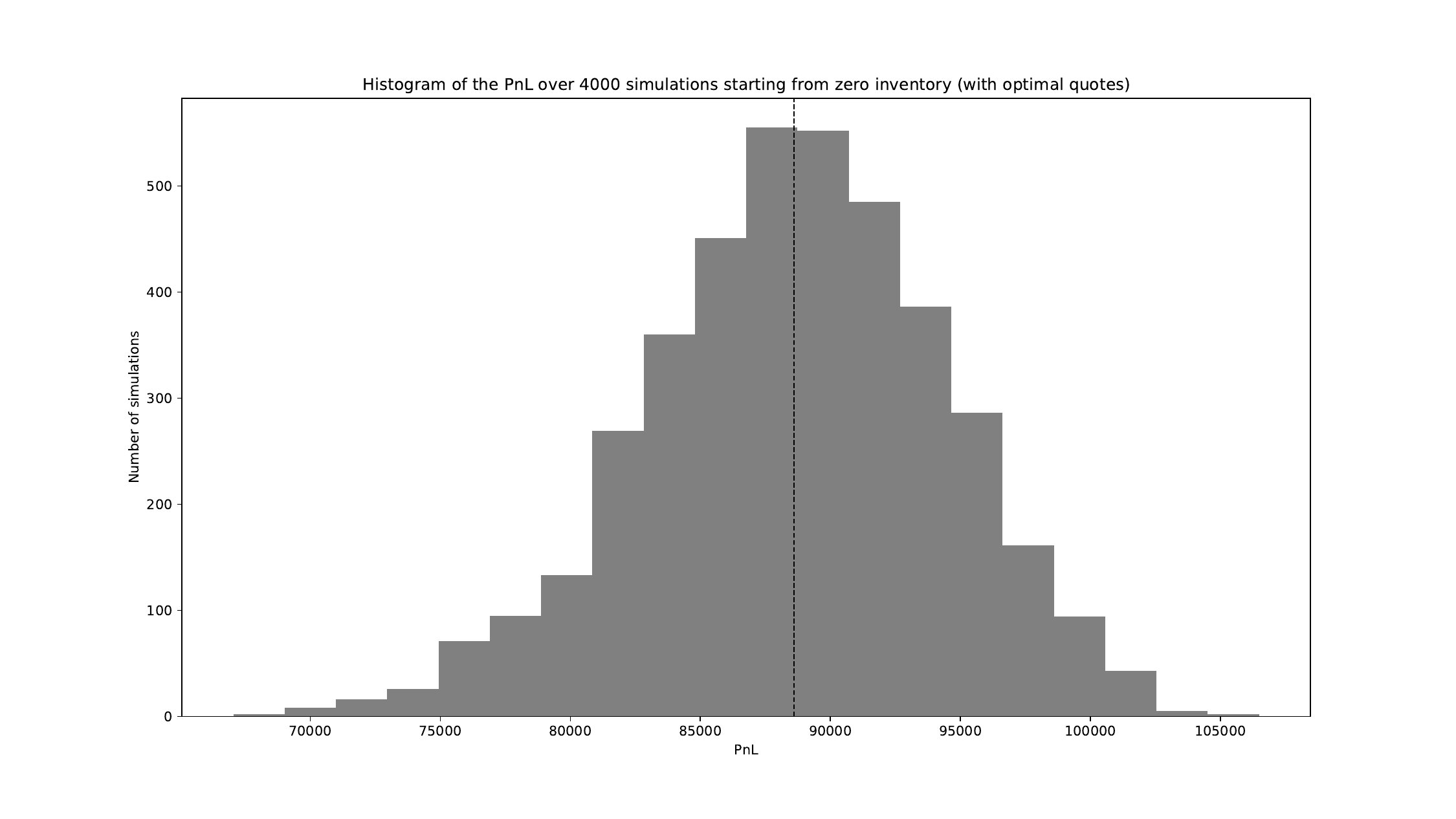}\\
\caption{Distribution of the PnL of a market maker using the optimal quotes.}\label{pnl_distrib_optimal_BEGV}
\end{figure}

\begin{figure}[!h]\centering
\includegraphics[width=0.85\textwidth]{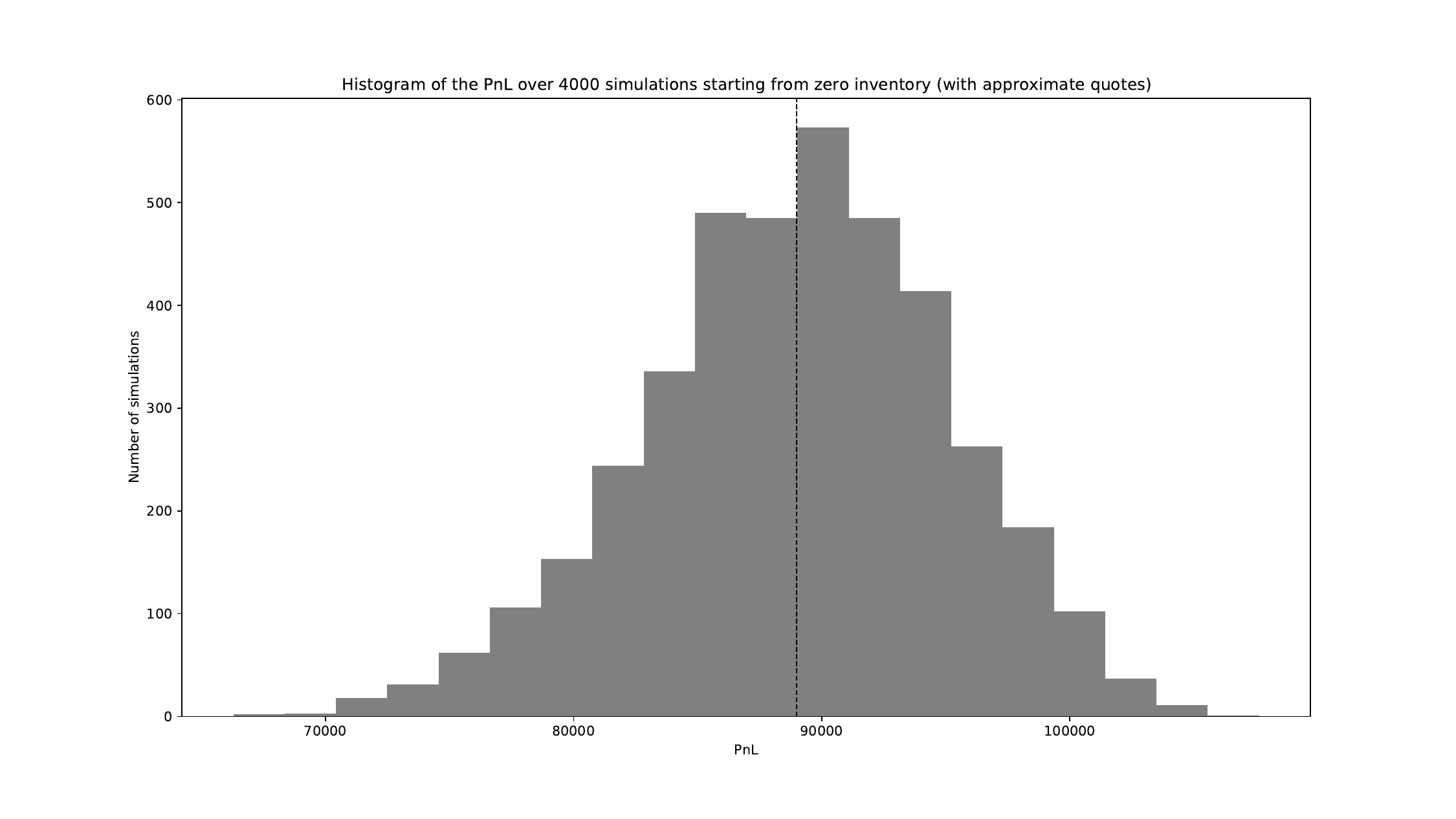}\\
\caption{Distribution of the PnL of a market maker using the closed-form approximations.}\label{pnl_distrib_approx_BEGV}
\end{figure}

When using the optimal quotes, the market maker gets an average PnL of $ 88600\textrm{\euro}$  with a standard deviation of $86900\textrm{\euro}$. When using the closed-form approximation, the performance is very similar, as she gets an average PnL of $ 89000 \textrm{\euro}$ with a standard deviation of $87500\textrm{\euro}$.\\

These results are really satisfying in terms of performance. We see that, although the closed-form approximation of optimal quotes may be inaccurate for large values of the inventory, such large inventory are seldom reached and therefore the gap between quotes does not really impact the distribution of the PnL.\\

We believe that what we observe here in this two-asset example is general. In particular, the results in higher dimensions should be just as good.

\section*{Conclusion}

We proposed closed-form approximations for the value functions associated with many multi-asset extensions of the market making models available in the literature. These closed-form approximations have been obtained through the ``approximation'' of a Hamilton-Jacobi equation by another Hamilton-Jacobi equation that can be simplified into a Riccati equation and two linear ordinary differential equations, all solvable in closed form. These closed-form approximations can be used for various purposes, in particular to design quoting strategies through a greedy approach. The resulting closed-form approximations of the optimal quotes generalize those obtained by Guéant, Lehalle, and Fernandez-Tapia in \cite{gueant2013dealing} to a general framework suitable for practical use.\\

\section*{Acknowledgements}

The authors would like to thank Bastien Baldacci (Ecole Polytechnique) and Iuliia Manziuk (Ecole Polytechnique) for their careful reading of an initial version of the paper. Olivier Guéant would like to thank the Research Initiative ``Nouveaux traitements pour les données lacunaires issues des activités de crédit'' financed by BNP Paribas under the aegis of the Europlace Institute of Finance for its support. A special thank goes to Laurent Carlier (BNP Paribas) for his vivid interest in academic questions around market making throughout the years.\\

\begin{appendix}
\section{On the construction of the processes $N^{i,b}$ and $N^{i,a}$}
\label{ConstrucN}
Let us consider a new filtered probability space $\big(\Omega,\mathcal{F},(\mathcal{F}_t)_{t\in \mathbb{R}_{+}},\tilde{\mathbb{P}}\big)$. For the sake of simplicity, assume that there is only one asset with size of transactions denoted by $z$ and risk limit $Q$ (the generalization is straightforward). Let us assume that the reference price of that asset is driven by a Brownian motion $W$ as in Section \ref{baseModel}. Let us introduce $\bar{N}^b$ and $\bar{N}^a$ two independent Poisson processes of intensity $1$, independent of~$W$. Let $N^b$ and $N^a$ be two processes, starting at 0, solutions of the coupled stochastic differential equation:
\begin{align*}
    dN^b_t =  \mathds{1}_{\left\{zN^b_{t-} - zN^a_{t-} + z \le Q\right\}} d\bar{N}^b_t,\\
    dN^a_t =  \mathds{1}_{\left\{zN^b_{t-} - zN^a_{t-} -z \ge  -Q\right\}} d\bar{N}^a_t.
\end{align*}
Then, under $\tilde{\mathbb{P}},$ $N^b$ and $N^a$ are two point processes with respective intensities $$\lambda^b_t = \mathds{1}_{\left\{q_{t-} + z \le Q \right\}} \quad \text{and} \quad \lambda^a_t =  \mathds{1}_{\left\{q_{t-} - z \ge -Q\right\}},$$
where $q_t = z N^b_t -  z N^a_t.$ For each $\delta \in \mathcal{A}$, we introduce the probability measure $\tilde{\mathbb{P}}^\delta$ given by the Radon-Nikodym derivative
\begin{equation}
\frac{d\tilde{\mathbb{P}}^\delta}{d\tilde{\mathbb{P}}} \Big|_{\mathcal{F}_t} = L_t^{\delta},
\end{equation}
where $\left(L_t^{\delta} \right)_{t \in \mathbb R^+}$ is the unique solution of the stochastic differential equation
\begin{equation}
    dL_t^{\delta} = L_{t-}^{\delta} \left( \left(\Lambda^b(\delta^b_t)-1 \right)d\tilde{N}^b_t+  \left(\Lambda^a(\delta^a_t)-1 \right)d\tilde{N}^a_t \right),\nonumber
\end{equation}
with $L_0^{\delta} = 1$, where $\tilde{N}^b$ and $\tilde{N}^a$ are the compensated processes associated with $N^b$ and $N^a$, respectively.\\

We then know from the Brémaud-Jacod version of Girsanov theorem (see \cite{bj}) that under $\tilde{\mathbb{P}}^\delta$, the jump processes $N^{b}$ and $N^{a}$ have respective intensities
$$\lambda^{\delta,b}_t  = \Lambda^b(\delta^b_t) \mathds{1}_{\left\{q_{t-} + z \le Q\right\}}  \quad \text{and} \quad \lambda^{\delta,a}_t  = \Lambda^a(\delta^a_t) \mathds{1}_{\left\{q_{t-} - z\ge -Q\right\}} $$ as in Section \ref{baseModel} of the paper. Since $W$ is still a Brownian motion under $\tilde{\mathbb{P}}^\delta$, our optimal control problem can be seen as the choice of an optimal probability measure $\tilde{\mathbb{P}}^\delta$.\\

\end{appendix}




\begin{thebibliography}{10}
\bibitem{avellaneda2008high}
Marco Avellaneda and Sasha Stoikov.
\newblock High-frequency trading in a limit order book.
\newblock {\em Quantitative Finance}, 8(3):217--224, 2008.

\bibitem{baldacci2019algorithmic}
Bastien Baldacci, Philippe Bergault and Olivier Guéant.
  \newblock Algorithmic market making for options.
  \newblock {\em Quantitative Finance}, 21(1):85--97, 2021.

\bibitem{baradel2018optimal}
Nicolas Baradel, Bruno Bouchard, David Evangelista, and Othmane Mounjid.
\newblock Optimal inventory management and order book modeling.
\newblock {\em ESAIM: Proceedings and Surveys}, 65:145--181, 2018.

\bibitem{barzykin2020algorithmic}
Alexander Barzykin, Philippe Bergault, and Olivier Gu\'eant.
\newblock Algorithmic market making in FX cash markets.
\newblock {\em arXiv preprint arXiv:2106.06974}, 2021.

\bibitem{bergault2019size}
Philippe Bergault and Olivier Gu{\'e}ant.
\newblock Size matters for OTC market makers: general results and
  dimensionality reduction technique.
\newblock {\em Mathematical Finance},  31(1):279--322 2020.

\bibitem{bj}
Pierre Brémaud and Jean Jacod.
Processus ponctuels et martingales: résultats récents sur la modélisation et le filtrage.
\newblock {\em Advances in Applied Probability}, 9(2), 362--416, 1977.

\bibitem{campi2020optimal}
Luciano Campi and Diego Zabaljauregui
\newblock {Optimal market making under partial information with general intensities}.
\newblock {\em Applied Mathematical Finance}--27, 2020.

\bibitem{cartea2017algorithmic}
{\'A}lvaro Cartea, Ryan Donnelly, and Sebastian Jaimungal.
\newblock Algorithmic trading with model uncertainty.
\newblock {\em SIAM Journal on Financial Mathematics}, 8(1):635--671, 2017.

\bibitem{cartea2015algorithmic}
{\'A}lvaro Cartea, Sebastian Jaimungal, and Jos{\'e} Penalva.
\newblock {\em Algorithmic and high-frequency trading}.
\newblock Cambridge University Press, 2015.

\bibitem{cartea2014buy}
{\'A}lvaro Cartea, Sebastian Jaimungal, and Jason Ricci.
\newblock Buy low, sell high: A high frequency trading perspective.
\newblock {\em SIAM Journal on Financial Mathematics}, 5(1):415--444, 2014.

\bibitem{cartea2018algorithmic}
{\'A}lvaro Cartea, Sebastian Jaimungal, and Jason Ricci.
\newblock Algorithmic trading, stochastic control, and mutually exciting
  processes.
\newblock {\em SIAM Review}, 60(3):673--703, 2018.

\bibitem{el2015stochastic}
Sofiene El~Aoud and Fr{\'e}d{\'e}ric Abergel.
\newblock A stochastic control approach to option market making.
\newblock {\em Market microstructure and liquidity}, 1(01):1550006, 2015.

\bibitem{el2018optimal}
Omar El~Euch, Thibaut Mastrolia, Mathieu Rosenbaum, and Nizar Touzi.
\newblock Optimal make-take fees for market making regulation.
\newblock {\em Mathematical Finance}, 31(1):109--148, 2021.

\bibitem{fodra2012high}
Pietro Fodra and Mauricio Labadie.
\newblock High-frequency market-making with inventory constraints and
  directional bets.
\newblock {\em arXiv preprint arXiv:1206.4810}, 2012.

\bibitem{fodra2015high}
Pietro Fodra and Huy{\^e}n Pham.
\newblock High frequency trading and asymptotics for small risk aversion in a
  Markov renewal model.
\newblock {\em SIAM Journal on Financial Mathematics}, 6(1):656--684, 2015.

\bibitem{grossman1988liquidity}
Sanford~J Grossman and Merton~H Miller.
\newblock Liquidity and market structure.
\newblock {\em the Journal of Finance}, 43(3):617--633, 1988.

\bibitem{gueant2016financial}
Olivier Gu{\'e}ant.
\newblock {\em The Financial Mathematics of Market Liquidity: From optimal
  execution to market making}, volume~33.
\newblock CRC Press, 2016.

\bibitem{gueant2017optimal}
Olivier Gu{\'e}ant.
\newblock Optimal market making.
\newblock {\em Applied Mathematical Finance}, 24(2):112--154, 2017.

\bibitem{gueant2015general}
Olivier Gu{\'e}ant and Charles-Albert Lehalle.
\newblock General intensity shapes in optimal liquidation.
\newblock {\em Mathematical Finance}, 25(3):457--495, 2015.

\bibitem{gueant2013dealing}
Olivier Gu{\'e}ant, Charles-Albert Lehalle, and Joaquin Fernandez-Tapia.
\newblock Dealing with the inventory risk: a solution to the market making
  problem.
\newblock {\em Mathematics and financial economics}, 7(4):477--507, 2013.

\bibitem{gueant2019deep}
Olivier Gu{\'e}ant and Iuliia Manziuk.
\newblock Deep reinforcement learning for market making in corporate bonds:
  beating the curse of dimensionality.
\newblock {\em Applied Mathematical Finance}, 26(5):387--452, 2019.

\bibitem{guilbaud2013optimal}
Fabien Guilbaud and Huy\^en Pham.
\newblock Optimal high-frequency trading with limit and market orders.
\newblock {\em Quantitative Finance}, 13(1):79--94, 2013.

\bibitem{guilbaud2015optimal}
Fabien Guilbaud and Huy{\^e}n Pham.
\newblock Optimal high-frequency trading in a pro rata microstructure with
  predictive information.
\newblock {\em Mathematical Finance}, 25(3):545--575, 2015.

\bibitem{ho1981optimal}
Thomas Ho and Hans~R Stoll.
\newblock Optimal dealer pricing under transactions and return uncertainty.
\newblock {\em Journal of Financial economics}, 9(1):47--73, 1981.

\bibitem{ho1983dynamics}
Thomas Ho and Hans~R Stoll.
\newblock The dynamics of dealer markets under competition.
\newblock {\em The Journal of finance}, 38(4):1053--1074, 1983.

\bibitem{jusselin2020optimal}
Paul Jusselin.
\newblock Optimal market making with persistent order flow.
\newblock {\em arXiv preprint arXiv:2003.05958}, 2020.

\bibitem{kuhn2013optimal}
Christoph K{\"u}hn and Johannes Muhle-Karbe.
\newblock {\em Optimal liquidity provision in limit order markets}.
\newblock Swiss Finance Inst., 2013.

\bibitem{lu2018order}
Xiaofei Lu and Fr{\'e}d{\'e}ric Abergel.
\newblock Order-book modelling and market making strategies.
\newblock {\em Market Microstructure and Liquidity}, 2018.

\bibitem{stoikov2009option}
Sasha Stoikov and Mehmet Sa{\u{g}}lam.
\newblock Option market making under inventory risk.
\newblock {\em Review of Derivatives Research}, 12(1):55--79, 2009.

\bibitem{sutton}
Richard S. Sutton and  Andrew G. Barto.
\newblock {\em Reinforcement Learning: An Introduction}.
\newblock The MIT Press, 2018.

\bibitem{cs}
Csaba Szepesv\'ari.
\newblock {\em Algorithms for reinforcement learning}.
\newblock Synthesis lectures on artificial intelligence and machine learning 4.1:1--103, 2010.


\end{thebibliography}
\end{document}